\documentclass[11pt]{article}
\usepackage[a4paper,margin=28mm]{geometry}
\usepackage[T1]{fontenc}
\usepackage{lmodern,microtype}
\usepackage{amsmath,amssymb,amsthm,mathtools,mathrsfs}
\DeclarePairedDelimiterX{\KetDelimiters}[1]{\lvert}{\rangle}{\mathstrut #1}
\DeclarePairedDelimiterX{\BraDelimiters}[1]{\langle}{\rvert}{\mathstrut #1}
\DeclarePairedDelimiterX{\BraketDelimiters}[2]{\langle}{\rangle}{%
  \mathstrut #1\,\delimsize\vert\,#2}
\newcommand{\ket}[1]{\KetDelimiters*{#1}}

\usepackage{bm}
\usepackage{graphicx,booktabs,array,flafter}
\usepackage{tikz}
\usetikzlibrary{arrows.meta}
\tikzset{every picture/.append style={execute at begin picture={\sffamily}},
  rolpanel/.style={font=\small\sffamily\bfseries}}
\definecolor{linkteal}{RGB}{22,123,136}
\definecolor{defectorange}{RGB}{185,102,22}
\definecolor{linkgray}{RGB}{98,118,133}
\usepackage[colorlinks=true,linkcolor=blue,citecolor=blue,urlcolor=blue]{hyperref}
\usepackage{aliascnt}
\usepackage[nameinlink,capitalize]{cleveref}
\newtheorem{proposition}{Proposition}[section]
\newaliascnt{theorem}{proposition}
\newtheorem{theorem}[theorem]{Theorem}
\aliascntresetthe{theorem}
\newaliascnt{lemma}{proposition}
\newtheorem{lemma}[lemma]{Lemma}
\aliascntresetthe{lemma}
\newaliascnt{corollary}{proposition}
\newtheorem{corollary}[corollary]{Corollary}
\aliascntresetthe{corollary}
\crefname{proposition}{Proposition}{Propositions}
\Crefname{proposition}{Proposition}{Propositions}
\crefname{theorem}{Theorem}{Theorems}
\Crefname{theorem}{Theorem}{Theorems}
\crefname{corollary}{Corollary}{Corollaries}
\Crefname{corollary}{Corollary}{Corollaries}
\crefname{lemma}{Lemma}{Lemmas}
\Crefname{lemma}{Lemma}{Lemmas}
\theoremstyle{definition}
\newaliascnt{definition}{proposition}
\newtheorem{definition}[definition]{Definition}
\aliascntresetthe{definition}
\crefname{definition}{Definition}{Definitions}
\Crefname{definition}{Definition}{Definitions}
\newaliascnt{remark}{proposition}
\newtheorem{remark}[remark]{Remark}
\aliascntresetthe{remark}
\crefname{remark}{Remark}{Remarks}
\Crefname{remark}{Remark}{Remarks}
\newtheorem*{informaltheorem}{Theorem}
\usepackage{tcolorbox}
\tcbuselibrary{breakable,skins}
\tcbset{rolstatement/.style={
  enhanced jigsaw,breakable,sharp corners,
  colframe=linkgray!50,colback=linkgray!4,boxrule=.4pt,
  boxsep=0pt,left=9pt,right=9pt,top=6pt,bottom=6pt,
  before skip=9pt,after skip=9pt
}}
\newenvironment{conceptbox}{\begin{tcolorbox}[rolstatement]}{\end{tcolorbox}}
\newif\ifROLstatementboxes
\ROLstatementboxestrue
\newcommand{\ROLboxenvironment}[1]{%
  \AddToHook{env/#1/before}{\ifROLstatementboxes\begin{tcolorbox}[rolstatement]\fi}%
  \AddToHook{env/#1/after}{\ifROLstatementboxes\end{tcolorbox}\fi}%
}
\ROLboxenvironment{theorem}
\ROLboxenvironment{proposition}
\ROLboxenvironment{lemma}
\ROLboxenvironment{corollary}
\ROLboxenvironment{informaltheorem}
\newcommand{\HH}{\mathcal H}
\newcommand{\II}{\mathcal I}
\newcommand{\YY}{\mathcal Y}
\newcommand{\Energy}{\mathcal E}
\newcommand{\Eff}{\mathcal E^{\mathrm{eff}}}
\DeclareMathOperator{\Av}{Av}
\DeclareMathOperator{\Var}{Var}
\newcommand{\GG}{\mathcal G}
\newcommand{\LOp}{\bm{L}}
\newcommand{\DOp}{\bm{D}}
\newcommand{\AOp}{\bm{A}}

\newcommand{\TOp}{\bm{T}}
\newcommand{\LowerLOp}{\underline{\LOp}}
\newcommand{\LowerDOp}{\underline{\DOp}}
\newcommand{\QOp}{\bm{Q}}

\newcommand{\lk}{\operatorname{lk}}
\newcommand{\id}{\mathrm{id}}
\newcommand{\C}{\mathbb C}
\DeclareMathOperator{\ran}{ran}

\title{\LARGE
Recursive Operator Localization and Spectral Gap Descent\\
via the Geometry of Local Spectral Expansion
}
\author{\large Ryu Hayakawa\thanks{%
\begin{tabular}[t]{@{}l@{}}
Yukawa Institute for Theoretical Physics \& The Hakubi Center, Kyoto University, Japan\\
\href{mailto:ryu.hayakawa@yukawa.kyoto-u.ac.jp}{\texttt{ryu.hayakawa@yukawa.kyoto-u.ac.jp}}
\end{tabular}}}
\date{}

\begin{document}

\noindent
\hspace{\fill} YITP-26-136
\begingroup
\let\newpage\relax
\maketitle
\endgroup

\begin{abstract}
We introduce Recursive Operator Localization (ROL), a framework for constructing hierarchies of local energy problems linked by linear maps between Hilbert spaces. To address spectral-gap problems, including those arising in quantum systems, we study a realization of ROL by nested subspaces indexed by a weighted simplicial complex. Using spectral gaps of auxiliary classical random walks on links, we propagate geometric estimates upward and use them to turn local spectral gaps at the bottom of the hierarchy into a spectral-gap bound for the global operator. For weighted scalar-valued functions on facets, the ROL decomposition becomes the law of total variance, recovering classical spectral-gap bounds from local link expansion (Alev and Lau, STOC 2020). Moreover, for positive binary spin distributions, conditional influence describes the overlap of the corresponding subspaces, yielding known heat-bath gap bounds from spectral independence (Anari, Liu, and Oveis Gharan, FOCS 2020). The framework also offers a route toward quantum Gibbs sampling, with applications requiring model-specific geometric estimates and energy comparisons.
\end{abstract}

\section{Introduction}\label{sec:introduction}
Lower bounds on spectral gaps provide quantitative control of directions
outside the kernel of a positive operator. In a quantum Hamiltonian,
the gap above the ground-state energy measures the separation between
the ground-state space and excited states. The generator gap of a
reversible quantum Markov semigroup instead controls relaxation toward
the stationary subspace in the corresponding Hilbert-space norm.
However, obtaining quantitative gap bounds in many-body systems is
difficult, especially when their dependence on system size and model
parameters must be controlled.

Local-to-global methods address this difficulty by combining tractable
local estimates with control of how the local problems fit together
to obtain a global spectral-gap bound.
Here locality need not mean spatial
proximity: local problems may arise from restrictions to spatial regions,
conditioning on selected variables, passage to links of graphs or hypergraphs, or decompositions
of a state space or Hilbert space.

A useful perspective on organizing local spectral information comes
from geometric approaches to cohomological vanishing.
Bochner's method uses curvature identities to control harmonic forms
and establish vanishing under suitable positivity
conditions~\cite{bochner1948curvature}.
Garland's method provides a combinatorial spectral counterpart,
deriving global spectral bounds and cohomological vanishing from spectral
conditions on simplicial links~\cite{garland1973padic}.
For our purposes, the guiding idea is to combine local energy estimates
into global spectral bounds. Such bounds can establish cohomological
vanishing or control the positive spectrum.
Oppenheim developed a method for propagating spectral estimates across successive levels of links, known as spectral descent~\cite{oppenheim2017local}.
In modern high-dimensional expansion~\cite{gotlib2023nowhere}, related local-to-global
estimates control higher-order random walks~\cite{alev2020improved}
and support classical sampling algorithms, with spectral independence
providing a source of local spectral control~\cite{anari2020spectral,chen2025spectral}.

Beyond this link-based approach, other relevant local-to-global methods
include finite-size criteria and
detectability-lemma methods for frustration-free
Hamiltonians~\cite{anshu2019improved,anshu2016detectability};
subspace geometry and spectral criteria for property~\textup{(T)}~\cite{ershov2010property,kassabov2011subspace};
classical variance and entropy
factorization methods~\cite{caputo2021block};
decomposition of Markov chains~\cite{jerrum2004elementary};
and coupling and curvature methods~\cite{blanca2021mixing,ollivier2007ricci}.
For quantum Gibbs sampling, correlation-decay conditions yield
spectral-gap bounds for commuting lattice systems
\cite{kastoryano2016quantum,lucia2025spectral}.
Complementary high-temperature approaches include quantum
Dobrushin conditions for rapid mixing~\cite{bakshi2026dobrushin}
and size-independent sampler gaps for bounded-strength all-to-all
systems~\cite{bergamaschi2026fast}.

\paragraph{Our perspective.}
Motivated by the high-dimensional expansion viewpoint described above,
we ask how a hierarchy of local energy problems can be organized and
analyzed in an operator setting beyond scalar-valued functions.
We begin by separating the hierarchy of spaces from the energy
problems defined on them.

We consider a hierarchy of spaces $\II_\tau$, whose elements represent
the data kept explicit in each local energy problem. We call these
the retained data. Starting from the root space $\II_\varnothing$,
localization maps pass these data to successive nodes of the hierarchy;
the local energy problems on these spaces remain to be specified.
\Cref{fig:localization-intuition} shows a selected path within such
a hierarchy and the retained spaces along it.

\begin{figure}[htbp]
\centering
\resizebox{.82\linewidth}{!}{%
\begin{tikzpicture}[x=1cm,y=1cm,font=\small,
  line cap=round,line join=round,>=Stealth]
\node[rolpanel] at (0,4.5) {(a) Along one path};
\node[text=linkgray] (intuition-root) at (-1.2,3.8) {$\varnothing$};
\node[text=linkteal] (intuition-tau) at (0,3.8) {$\tau$};
\node[text=defectorange] (intuition-nu) at (1.2,3.8) {$\nu$};
\draw[->,linkgray] (intuition-root)--(intuition-tau);
\draw[->,linkgray] (intuition-tau)--(intuition-nu);
\draw[draw=linkgray,fill=linkgray!6,line width=.9pt]
  (0,1.7) ellipse (2.35cm and 1.65cm);
\draw[draw=linkteal,fill=linkteal!10,line width=1pt]
  (0,1.3) ellipse (1.75cm and 1.05cm);
\draw[draw=defectorange,fill=defectorange!13,line width=1pt]
  (0,.95) ellipse (1.05cm and .5cm);
\node[text=linkgray] at (0,2.85) {$\mathcal I_\varnothing$};
\node[text=linkteal] at (0,1.95) {$\mathcal I_\tau$};
\node[text=defectorange] at (0,.95) {$\mathcal I_\nu$};
\node at (0,-.45) {Successive localization of retained data.};

\begin{scope}[shift={(6.8,0)}]
\node[rolpanel] at (0,4.5) {(b) The localization hierarchy};
\tikzset{hiernode/.style={circle,draw=linkgray!55,fill=white,
  minimum size=7mm,inner sep=2pt},
  hieredge/.style={->,draw=linkgray!45,line width=.8pt}}
\node[hiernode,text=linkgray,draw=linkgray] (hroot) at (0,3.65) {$\varnothing$};
\node[hiernode,draw=linkteal,text=linkteal,fill=linkteal!10]
  (htau) at (-1.25,2.15) {$\tau$};
\node[hiernode] (hother) at (1.25,2.15) {};
\node[hiernode] (hleafleft) at (-2.1,.65) {};
\node[hiernode,draw=defectorange,text=defectorange,fill=defectorange!13]
  (hnu) at (0,.65) {$\nu$};
\node[hiernode] (hleafright) at (2.1,.65) {};
\draw[hieredge] (hroot)--(hother);
\draw[hieredge] (htau)--(hleafleft);
\draw[hieredge] (hother)--(hnu);
\draw[hieredge] (hother)--(hleafright);
\draw[->,linkteal,line width=1.6pt] (hroot)--(htau);
\draw[->,defectorange,line width=1.6pt] (htau)--(hnu);
\node at (0,-.45) {\ \ \ A selected path among several routes.};
\end{scope}
\end{tikzpicture}%
}
\caption{Retained spaces along a localization path.
Left: along one selected path $\varnothing\to\tau\to\nu$ in the
hierarchy, retained data are successively localized in the spaces
$\mathcal I_\varnothing$, $\mathcal I_\tau$, and $\mathcal I_\nu$.
The nested regions illustrate the intuition of a subspace realization;
general ROL spaces need not be subspaces of a common
ambient space, nor is strict dimension decrease required at every step.
Right: the same path is highlighted within a branching hierarchy.
In a hierarchy indexed by faces of a simplicial complex, a child has
the form $\nu=\tau i:=\tau\cup\{i\}$, where $i$ is a vertex of the
link of $\tau$.
}
\label{fig:localization-intuition}
\end{figure}

Along a step $a:\tau\to\nu$ in \Cref{fig:localization-intuition},
retained data $b\in\II_\tau$ are mapped to $b'=R_ab\in\II_\nu$.
At $\nu$, we define the effective energy by holding $b'$ fixed and
minimizing the local energy over the remaining degrees of freedom.
This gives a lower bound on the energy of any full variable realizing $b'$.

We would like to repeat this construction across a hierarchy of local
problems. 
Minimization defines the effective energy at each stage, but does not by itself ensure that the resulting local problems fit together consistently across the hierarchy. 
The spaces, metrics, and
maps connecting them must be specified, and independently minimizing
local problems need not produce choices that arise from a common full
variable. Thus, the minimum energies alone may omit part of the cost of
assembling the local problems. We therefore ask what information must
accompany these energies to support compatible recursive localization.
A second, quantitative question is how estimates along this hierarchy
yield a global spectral-gap bound in the original metric, including
control of norm loss under localization.

\subsection{Our results}\label{sec:intro-our-results}
To address these questions, we develop \emph{Recursive Operator
Localization} (ROL), a framework for organizing effective energy problems
across a hierarchy and tracking the information needed to assemble them.
Our contributions are threefold.
\begin{enumerate}
\item \emph{Tracking the cost of reconciling local minimizers.}
Local minimum energies alone can miss the cost of making local choices
consistent. ROL isolates this cost as a nonnegative correction and
tracks it through the hierarchy, giving an exact energy decomposition
as well as a lower bound obtained by dropping the correction
(\cref{lem:rol-one-step,prop:rol-finite-depth}).

\item \emph{Geometric conditions for turning local gaps into a global gap.}
For nested subspaces, we separate local energy bounds from control of
the directions lost under localization. Link expansion propagates
this geometric control through the hierarchy, providing verifiable
conditions under which local gap bounds yield a global one
(\cref{thm:weighted-link-gap}).

\item \emph{Connecting classical methods to the operator framework.}
We identify variance decomposition and, for full-support binary laws,
conditional influence with the framework's energy and subspace
structures, respectively. These correspondences explain how classical
link-expansion and spectral-independence gap arguments fit into the
broader operator perspective
(\cref{prop:classical-variance-rol,prop:classical-si-gram}).

\end{enumerate}

Thus, our framework combines hierarchies of local energy problems with the geometry of local spectral expansion, extending classical local-to-global methods to an operator setting and providing a foundation for applications beyond scalar-valued functions.

\vspace{1em}

We now describe these results in more detail.

\paragraph{Recursive Operator Localization (ROL).}
The recursive structure consists of retained spaces $\II_\tau$
linked by localization maps $R_a:\II_\tau\to\II_\nu$ for steps
$a:\tau\to\nu$. Each space carries an effective energy $\Eff_\tau$;
ROL relates this energy to those at the next nodes, evaluated on
the localized data $R_ab$.

To construct these energies, each local problem has a full-variable
space $\HH_\tau$ and a readout $\Gamma_\tau:\HH_\tau\to\II_\tau$.
The effective energy $\Eff_\tau(b)$ is obtained by minimizing its
energy over full variables with readout $b$.
The maps $r_a:\HH_\tau\to\HH_\nu$ connect these local problems
consistently with the retained-data hierarchy: localizing a full
variable and then reading out its data must agree with applying $R_a$
to its original readout. This compatibility is expressed by
\begin{equation}\label{eq:intro-compatible-square}
\begin{tikzpicture}[baseline=(current bounding box.center),
  every node/.style={inner sep=3pt},
  compatmap/.style={-{Stealth[length=1.7mm]},line width=.6pt}]
\node (full-start) at (0,1.2) {$\HH_\tau$};
\node (full-end) at (3,1.2) {$\HH_\nu$};
\node (retained-start) at (0,0) {$\II_\tau$};
\node (retained-end) at (3,0) {$\II_\nu$};
\draw[compatmap] (full-start)--node[above] {$r_a$} (full-end);
\draw[compatmap] (retained-start)--node[below] {$R_a$} (retained-end);
\draw[compatmap] (full-start)--node[left] {$\Gamma_\tau$} (retained-start);
\draw[compatmap] (full-end)--node[right] {$\Gamma_\nu$} (retained-end);
\end{tikzpicture}
\end{equation}
The lower row is a step of the retained-data hierarchy; the upper row
and vertical readouts describe its realization by local energy problems.
The readouts themselves do not perform minimization.

Let $\omega_a>0$ be the weights used to assemble the full energy from the
localized energies. Under the compatibility and assembly assumptions formally stated in \cref{lem:rol-one-step}, we show that there exists a nonnegative
quadratic form $\Delta_\tau$ on $\II_\tau$ such that, for every
$b\in\II_\tau$,
\[
\Eff_\tau(b)
=\underbrace{\sum_{a:\tau\to\nu} \omega_a\Eff_\nu(R_ab)}
_{\text{energies after one-step localization}}
+\underbrace{\Delta_\tau(b)}_{\text{compatibility cost}},
\qquad \Delta_\tau(b)\geq0.
\]
The sum includes every specified step out of $\tau$.
The nonnegative quadratic form $\Delta_\tau$ is the gluing defect: the
extra energy required to use a common full variable instead of
minimizing the localized problems independently.

Write $t(p)$ for the endpoint of a path $p$. Along a path from the root,
the retained maps compose and the weights multiply:
\[
p=(\varnothing\xrightarrow{a_1}\cdots\xrightarrow{a_k}t(p)),
\qquad R_p=R_{a_k}\cdots R_{a_1},
\qquad \omega_p=\prod_{j=1}^k \omega_{a_j}.
\]
For the empty path, $R_p=\id$ and $\omega_p=1$.
Choose a finite stopping rule: at each visited path, either stop or
continue along every specified outgoing step. Let $\mathcal P_{\mathrm{stop}}$
be the stopped paths and $\mathcal P_{\mathrm{int}}$ their distinct proper
prefixes, including the empty path when the root is not a stopping point.

\begin{informaltheorem}[Compatible localization and finite-depth recursion, informal]
If the one-step hypotheses above hold at every nonterminal visit,
then, for every root input $b\in\II_\varnothing$,
\[
\Eff_\varnothing(b)
=\underbrace{\sum_{p\in\mathcal P_{\mathrm{stop}}}
 \omega_p\Eff_{t(p)}(R_pb)}_{\text{terminal energies}}
+\underbrace{\sum_{p\in\mathcal P_{\mathrm{int}}}
 \omega_p \Delta_{t(p)}(R_pb)}_{\text{accumulated defects}}.
\]
Thus the root effective energy is exactly the sum of the energies
remaining at the stopping points and the nonnegative defect costs
accumulated along the way. 
\end{informaltheorem}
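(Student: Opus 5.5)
The plan is induction on the finite stopping tree, using only the one-step decomposition $\Eff_\tau(b)=\sum_{a:\tau\to\nu}\omega_a\Eff_\nu(R_ab)+\Delta_\tau(b)$ from \cref{lem:rol-one-step} at each nonterminal visit. The stopping rule defines a finite rooted tree $\mathcal T$ whose vertices are the visited paths. Its leaves are $\mathcal P_{\mathrm{stop}}$ and its internal vertices are $\mathcal P_{\mathrm{int}}$. The children of an internal path $p$ are the extensions $pa$, one for each specified step $a$ out of $t(p)$.

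For each visited path $p$ I define the subtree quantity
\[
S_p(c)=\sum_{q\in\mathcal P_{\mathrm{stop}},\,q\succeq p}\frac{\omega_q}{\omega_p}\Eff_{t(q)}(R_{q/p}c)+\sum_{q\in\mathcal P_{\mathrm{int}},\,q\succeq p}\frac{\omega_q}{\omega_p}\Delta_{t(q)}(R_{q/p}c),\qquad c\in\II_{t(p)}.
\]
Here $q/p$ denotes the suffix of $q$ after $p$, so that $R_q=R_{q/p}R_p$ and $\omega_q=\omega_p\,\omega_{q/p}$. Both identities hold because composition and products factor along concatenation. I then prove $S_p(c)=\Eff_{t(p)}(c)$ by induction on the height of the subtree rooted at $p$.

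The base case is a leaf $p\in\mathcal P_{\mathrm{stop}}$, where $S_p(c)=\Eff_{t(p)}(c)$ trivially. For the inductive step, take an internal $p$. I apply \cref{lem:rol-one-step} at $\tau=t(p)$ with input $c$; this is allowed because the hypotheses hold at every nonterminal visit, and this is where that assumption enters. This gives $\Eff_{t(p)}(c)=\sum_a\omega_a\Eff_{t(pa)}(R_ac)+\Delta_{t(p)}(c)$. The induction hypothesis applied to each child replaces $\Eff_{t(pa)}(R_ac)$ by $S_{pa}(R_ac)$. Since every $q\succeq p$ with $q\neq p$ extends exactly one child $pa$, the sum $\sum_a\omega_aS_{pa}(R_ac)+\Delta_{t(p)}(c)$ regroups into $S_p(c)$. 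Taking $p$ to be the empty path with $c=b$ gives the claim. If the root itself stops, the statement reduces to $\Eff_\varnothing(b)=\Eff_\varnothing(b)$.

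The main obstacle is bookkeeping rather than analysis. The paths in $\mathcal P_{\mathrm{int}}$ and $\mathcal P_{\mathrm{stop}}$ must be counted as paths, not as endpoint faces, because a face may be reached by several routes and each route contributes its own term with its own $R_p$ and $\omega_p$. The partition of the descendants of $p$ into the disjoint child subtrees must also be exact. Finiteness of the stopping rule guarantees well-founded heights, and no convergence issues arise. Nonnegativity of the accumulated defect term follows directly from $\Delta_\tau\ge0$ and $\omega_p>0$.
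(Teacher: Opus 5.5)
Your proof is correct and is essentially the paper's argument. The paper unfolds the stopped paths into the same prefix tree and substitutes the one-step identity top-down into each term $\omega_pR_p^*\LOp_{t(p)}R_p$, using $\omega_{pa}=\omega_p\omega_a$, $R_{pa}=R_aR_p$, and the fact that every outgoing extension of a nonstopped prefix lies in the tree; you carry out the same bookkeeping as a bottom-up induction on subtree height, working with quadratic forms rather than operators.
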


See \cref{lem:rol-one-step,prop:rol-finite-depth} for the precise
statements. Distinct paths remain distinct even when they have the same
endpoint.

ROL is an abstract framework that admits different realizations through
the choice of spaces, localization maps, and energies.
\Cref{sec:rol-toolkit} illustrates two complementary realizations:
\begin{itemize}
\item \emph{Functions on links}: retaining facet components gives
weighted variance energies and a defect measuring disagreement of
local means; conditioning on site values is a scalar specialization
(\cref{sec:toolkit-links}).
\item \emph{Observable residuals}: enlarging a site set removes
newly representable observable components in a fixed metric, both
classically and for quantum observables (\cref{sec:toolkit-observables}).
\end{itemize}
The first makes the ROL decomposition explicit; the second introduces
the geometry used for descent below. Concrete applications should determine the particular
spaces, energies, and quantitative estimates.

The descent argument below discards the nonnegative defect.
\Cref{sec:gluing-response-cost} gives conditional one-step gap bounds
that use this term to control otherwise uncontrolled directions.
Incorporating these bounds into recursive descent remains a direction
for future work.

\paragraph{Quantitative descent through subspace geometry.}
The decomposition above alone is not a gap bound: local energies must also
control the norm in which the global gap is sought.
\label{par:global-problem}
Let $\AOp_\varnothing\succeq0$ be the operator to be estimated on a
root Hilbert space $\HH_\varnothing$ with a specified inner product,
and set $\mathcal N_\varnothing^{\mathrm{full}}=\ker\AOp_\varnothing$. The root label
$\varnothing$ denotes the starting problem. Our goal is to obtain
a quantitative lower bound $\gamma>0$ such that
\begin{equation}\label{eq:global-target}
\langle f,\AOp_\varnothing f\rangle_\varnothing
\geq\gamma\,\operatorname{dist}_\varnothing(f,\mathcal N_\varnothing^{\mathrm{full}})^2
\qquad(f\in\HH_\varnothing).
\end{equation}
A gap bound for the effective energy controls the retained data.
To obtain a gap bound on the full space, one must also control the
directions eliminated in defining that effective energy; see
\cref{sec:forms-metrics-gaps,sec:rol}.

We now specialize to retained spaces that are nested subspaces of a
common Hilbert space, with localization given by orthogonal projection.
In this setting, the geometry of the directions removed along different
branches determines how much of a vector's norm survives localization,
providing the control needed to turn local energy bounds into a global
gap estimate.

The descent mechanism couples subspace geometry with the spectra of auxiliary
classical link walks, then uses the resulting norm-retention bounds to transport energy.
\Cref{fig:intro-upward-descent-informal} illustrates the local geometry
and the upward propagation of bounds.

\begin{figure}[tbp]
\centering
\resizebox{.85\linewidth}{!}{%
\begin{tikzpicture}[x=1cm,y=1cm,>=Stealth,
  font=\small\sffamily,line cap=round,line join=round,
  concept/.style={rounded corners=5pt,align=center,
    minimum width=3.1cm,minimum height=.8cm}]
\node[rolpanel] at (-5.5,5.75) {(a) One parent--child pair};
\node[rolpanel] at (1,5.75) {(b) Upward propagation};
\begin{scope}[yshift=-.6cm]
\node[draw=linkgray,rounded corners=3pt,fill=linkgray!4,
  minimum width=1.2cm] (parent) at (-7.25,4.8) {Parent $\tau$};
\node[draw=linkgray,rounded corners=3pt,fill=linkgray!4,
  minimum width=1.2cm] (child) at (-3.75,4.8) {Child $\tau i$};
\draw[->,linkgray,thick] (parent)--node[above=5pt,font=\footnotesize\sffamily]
  {localization} (child);
\draw[linkgray!65,fill=linkgray!3] (-5.5,2.3) ellipse (2.25cm and 1.65cm);
\node[text=linkgray] at (-6.85,3.25) {$\II_\tau$};
\draw[defectorange,fill=defectorange!7,rotate around={22:(-5.35,1.9)}]
  (-5.35,1.9) ellipse (1.8cm and .55cm);
\coordinate (origin) at (-6.15,1.55);
\coordinate (original) at (-5.8,3.55);
\coordinate (removed) at (-4.7,2.14);
\draw[->,linkgray,very thick] (origin)--(original)
  node[above=-3pt] {$b$};
\draw[linkgray!65,dashed] (original)--(removed);
\draw[->,defectorange,very thick] (origin)--(removed)
  node[above=-3pt,xshift=10pt] {$E_{\tau,i}b$};
\fill[linkgray] (origin) circle (2pt);
\node[text=defectorange] at (-5.05,1.15) {$\mathcal U_{\tau,i}$};
\node[align=center,font=\footnotesize\sffamily] at (-5.5,-.3)
  {Component removed\\when passing to the child};
\end{scope}
\begin{scope}[shift={(.8,0)}]
\node[concept,fill=linkteal!12] (gap) at (0,4.8)
  {Global spectral gap};
\node[concept,fill=defectorange!10] (loss) at (0,2.5)
  {Control of average\\norm loss};
\draw[->,linkteal,very thick] (loss.north)--
  node[left,font=\footnotesize\sffamily,align=right] {energy\\transport} (gap.south);
\node[align=left,text=linkteal,font=\footnotesize\sffamily] (energy) at (3,3.65)
  {Local energy\\bounds};
\draw[->,linkteal,thick] (energy.west)--(0,3.65);
\foreach \x/\angle in {-2.2/15,0/30,2.2/45}{
  \begin{scope}[shift={(\x,0)}]
    \draw[linkgray!45,fill=linkgray!3] (0,0) ellipse (.85cm and .52cm);
    \draw[defectorange,fill=defectorange!7,rotate around={\angle:(0,-.12)}]
      (0,-.12) ellipse (.48cm and .13cm);
    \coordinate (small-origin) at ({-.3*cos(\angle)},{-.12-.3*sin(\angle)});
    \coordinate (small-removed) at ({.4*cos(\angle)},{-.12+.4*sin(\angle)});
    \coordinate (small-original) at (-.1,.39);
    \draw[linkgray!65,dashed] (small-original)--(small-removed);
    \draw[->,linkgray,thick] (small-origin)--(small-original);
    \draw[->,defectorange,thick] (small-origin)--(small-removed);
    \fill[linkgray] (small-origin) circle (1.2pt);
  \end{scope}
  \draw[->,defectorange!85,thick] (\x,.65)--(loss.south);
}
\node[fill=white,inner sep=4pt,text=linkgray,font=\footnotesize\sffamily]
  at (0,1.15) {via link expansion};
\node at (0,-.9) {Geometry across branches};
\node[text=linkteal,font=\footnotesize\sffamily] at (0,-1.55)
  {Terminal local gaps};
\node[text=linkteal] at (0,-2.05)
  {$\LOp_\sigma\succeq\delta_\sigma\id\quad
    (\sigma\in\mathscr T_{\mathrm{stop}})$};
\end{scope}
\end{tikzpicture}}
\caption{An overview of descent. Left: for one parent--child pair,
$E_{\tau,i}b$ is the component removed by localization, lying in
$\mathcal U_{\tau,i}\subseteq\II_\tau$.
Right: geometry across branches and spectral expansion of the classical
link walks (illustrated in \Cref{fig:link-up-down-walk}) control
average norm loss through the hierarchy. This allows local energy bounds
to be transported upward to a global spectral gap under the descent
assumptions. }
\label{fig:intro-upward-descent-informal}
\end{figure}

At each stopping node $\sigma$, assume a local energy bound
$\Eff_\sigma(b)\geq\delta_\sigma\|b\|_\sigma^2$ for all
$b\in\II_\sigma$, with $\delta_\sigma>0$; these are the terminal
energy bounds.
In the theorem below, $\overline{\theta}_\tau$ bounds the fraction of
squared norm lost on average at $\tau$. Starting from overlap bounds
at the deepest level, it is obtained recursively from the corresponding
bounds one level below and the link-walk gap at $\tau$; see
\cref{thm:weighted-link-gap} for the precise recursion.

\begin{informaltheorem}[Descent through subspace geometry, informal]
Under the assumptions of \cref{thm:weighted-link-gap}, overlap bounds
at the deepest level and classical link-walk gaps at intermediate levels
recursively determine norm-retention factors $1-\overline{\theta}_\tau$.
If these factors and the terminal energy bounds are positive, then,
on a nonzero root space,
\[
\operatorname{gap}(\LOp_\varnothing)
\geq\min_{p\in\mathcal P_{\mathrm{stop}}}
\left\{
\underbrace{\delta_{t(p)}}_{\substack{\text{terminal}\\\text{energy bound}}}
\qquad
\prod_{\substack{\tau\text{ on }p,\\\tau\text{ internal}}}
\quad
\underbrace{(1-\overline{\theta}_\tau)}_{\substack{\text{norm-retention}\\\text{factor}}}
\right\}.
\]
\end{informaltheorem}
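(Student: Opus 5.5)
The plan is to combine the exact finite-depth decomposition (the informal version of \cref{prop:rol-finite-depth}) with a norm-retention bound propagated down the hierarchy. Drop the nonnegative defects to get, for every root input $b$,
\[
\Eff_\varnothing(b)\ \ge\ \sum_{p\in\mathcal P_{\mathrm{stop}}}\omega_p\,\Eff_{t(p)}(R_pb)\ \ge\ \sum_{p\in\mathcal P_{\mathrm{stop}}}\omega_p\,\delta_{t(p)}\,\|R_pb\|_{t(p)}^2,
\]
where the second step uses the terminal bounds. In the nested-subspace realization $R_p$ is a composition of orthogonal projections. It therefore remains to bound the weighted terminal norms $\sum_p\omega_p\|R_pb\|^2$ from below by a multiple of $\|b\|^2$, with $b$ ranging over the orthogonal complement of the kernel, so that the Rayleigh quotient of $\LOp_\varnothing$ gives the gap.

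The key one-step estimate is at an internal node $\tau$. For $b\in\II_\tau$ (orthogonal to the relevant kernel), I will show
\[
\sum_{i}\omega_{\tau\to\tau i}\|R_{\tau i}b\|^2\ \ge\ (1-\overline\theta_\tau)\|b\|^2 .
\]
Since $\|R_{\tau i}b\|^2=\|b\|^2-\|E_{\tau,i}b\|^2$, this is equivalent to bounding the averaged removed mass $\sum_i\omega_i\|E_{\tau,i}b\|^2\le\overline\theta_\tau\|b\|^2$. To do this, expand it as a Gram-type quadratic form in the projections onto the $\mathcal U_{\tau,i}$. The diagonal terms and the cross terms are then controlled by the overlap quantities for pairs $\{i,j\}$, which are themselves the $\overline\theta$ data at level $\tau ij$. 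This is where the recursion comes from: overlap bounds one level below, together with the spectral gap of the classical link walk on $\lk(\tau)$, bound the top eigenvalue of the operator $\sum_i\omega_iE_{\tau,i}$. Concretely, I would follow the Garland/Oppenheim pattern. Write the operator-valued form on $\bigoplus_i\mathcal U_{\tau,i}$ as (link-walk adjacency) $\otimes$ (overlap). Then split it into the constant component, governed by the stationary weights, and the orthogonal part, governed by the second eigenvalue $\lambda(\tau)$, and obtain $\overline\theta_\tau$ as an explicit function of $\lambda(\tau)$ and $\max_j\overline\theta_{\tau j}$.

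Iterating the one-step estimate down each path, with induction on depth over the stopping rule, gives
\[
\sum_{p\in\mathcal P_{\mathrm{stop}}}\omega_p\|R_pb\|^2\ \ge\ \min_p\prod_{\tau\text{ internal on }p}(1-\overline\theta_\tau)\cdot\|b\|^2 .
\]
Combining this with the first display gives the $\min_p\{\delta_{t(p)}\prod(1-\overline\theta_\tau)\}$ bound. The minimum is pulled out termwise, using positivity of all factors. Finally, the relation between $\Eff_\varnothing$ on the retained space and $\LOp_\varnothing$ on the root needs care: the root readout should be the identity or isometric, so that $\Eff_\varnothing(b)=\langle b,\LOp_\varnothing b\rangle$ and the distance to the kernel matches $\|b\|$ on its complement.

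The main obstacle is the one-step geometric estimate. Turning the noncommuting projections $E_{\tau,i}$ into a scalar bound requires the tensor-type comparison between the subspace Gram operator and the classical link walk, and the kernel directions must be handled correctly. My first attempt is to bound $\|\sum_i\omega_iE_{\tau,i}\|$ via Cauchy–Schwarz on cross terms $\langle E_{\tau,i}b,E_{\tau,j}b\rangle$, weighted by the link-walk transition probabilities, and then apply the spectral gap of the walk to the scalar function $i\mapsto\|E_{\tau,i}b\|$.
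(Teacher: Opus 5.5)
Your outer architecture agrees with the paper's. You discard the nonnegative defects, use the terminal bounds, write $\|R_ab\|^2=\|b\|^2-\|E_{\tau,i}b\|^2$ so that $\sum_i\pi_\tau(i)\|R_ab\|^2\ge(1-\theta_\tau)\|b\|^2$, and induct along the hierarchy. No kernel restriction is needed anywhere, since the conclusion is $\LOp_\tau\succeq\delta_\tau\id$ on all of $\II_\tau$.

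\textbf{The gap.} It is the step you flag as the main obstacle: the one-step estimate $\theta_\tau\le F_{\lambda_\tau}(\max_i\theta_{\tau i})$, which is the real content of the theorem. Neither route you sketch delivers it.
\begin{enumerate}
\item You misidentify the child datum. $\theta_{\tau i}=\|\sum_j\pi_{\tau i}(j)\operatorname{Proj}_{\mathcal U_{\tau i,j}}\|$ is not an overlap quantity for the pair $\{i,j\}$, nor a datum at level $\tau ij$. It is a joint bound on the second-step spaces $\mathcal U_{\tau i,j}=\II_{\tau i}\cap\II_{\tau ij}^\perp$ over all $j$, and it does not measure the angles between $\mathcal U_{\tau,i}$ and $\mathcal U_{\tau,j}$.
\item The parent Gram form $\|\sum_i\sqrt{\pi_\tau(i)}\,u_i\|^2$ has cross terms for every pair, including non-edges of $\lk_X(\tau)$. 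Neither the link walk nor any child datum refers to those pairs. So the parent Gram operator does not factor as (walk adjacency)$\otimes$(overlap) for you to split into constant and nonconstant parts.
\item Bounding $\langle E_{\tau,i}b,E_{\tau,j}b\rangle$ pair by pair with Cauchy--Schwarz, then applying the walk gap to the scalar function $i\mapsto\|E_{\tau,i}b\|$, discards exactly the directional information the child bound needs. That bound is a multi-subspace estimate and only helps when applied to a coherent vector sum over $j$.
\end{enumerate}

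\textbf{The missing idea.} It is the duality argument of \cref{lem:weighted-one-step-gram}, after Ershov--Jaikin-Zapirain; write $\theta=\theta_\tau$ and $c=\max_i\theta_{\tau i}<1$.
\begin{enumerate}
\item Work with vector fields in $L^2(\pi_\tau;\II_\tau)$. Let $\mathcal Z$ be the sum of the constant fields and the fields with $f_i\in\mathcal U_{\tau,i}$. Let $T$ be the compression to $\mathcal Z$ of the walk Laplacian $(f_i)_i\mapsto(f_i-\sum_jP_\tau(i,j)f_j)_i$; then $T\succeq\lambda_\tau$ on mean-zero fields in $\mathcal Z$.
\item Take a top eigenvector $v$ of $\sum_i\pi_\tau(i)E_{\tau,i}$ and form the centered unit field $x_i=(E_{\tau,i}v-\theta v)/\sqrt{\theta(1-\theta)}$. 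Its $\mathcal U_{\tau,i}$-parts carry mass $1-\theta$ and its complementary parts carry mass $\theta$. Solve $T\psi=x$.
\item The edge differences $d_{ij}=\psi_i-\psi_j$ lie in $\II_\tau\cap\II_{\tau ij}^\perp$, so $(\id-E_{\tau,i})d_{ij}\in\mathcal U_{\tau i,j}$. Because $P_\tau(i,\cdot)=\pi_{\tau i}$, the child Gram bound applies to the whole sum $\sum_jP_\tau(i,j)(\id-E_{\tau,i})d_{ij}$. Jensen's inequality handles the $E_{\tau,i}$-parts.
\item Together these give $c(1-\theta)+\theta\le 2c\langle\psi,T\psi\rangle\le 2c/\lambda_\tau$, which rearranges to $\theta\le F_{\lambda_\tau}(c)$.
\end{enumerate}

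\textbf{A smaller point.} Combining your two displays as written only yields $(\min_\sigma\delta_\sigma)\cdot\min_p\prod(1-\overline\theta_\tau)$. This is weaker than the stated path-wise minimum when the $\delta_\sigma$ differ. To get the stated bound, keep $\delta$ inside the induction: $\Eff_\tau(b)\ge\delta_\tau\|b\|^2$ with $\delta_\tau=(1-\overline\theta_\tau)\min_i\delta_{\tau i}$.
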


\paragraph{Classical realizations and exact correspondences.}
\Cref{sec:classical-comparison} develops two connections.
First, specializing link localization to scalar-valued functions
identifies effective energies with variances and the gluing defect
with the variance of conditional means. The ROL decomposition thus
becomes the law of total variance.
The link-walk spectrum controls this defect, giving a variance-retention
estimate whose iteration recovers the classical down-up gap bound of
Alev and Lau~\cite[Theorem~1.5]{alev2020improved}
(\cref{prop:classical-variance-rol}, \eqref{eq:classical-down-up-gap}).
In this realization, link expansion controls variance retention directly,
without a separate geometric analysis of subspace overlap.

Second, spaces of functions with zero conditional mean provide a concrete
example of the subspace hierarchy. The associated energy at the root
is that of single-site heat-bath dynamics.
For full-support binary laws, the increment Gram matrix and conditional
influence matrix encode the same correlation data
(\cref{prop:classical-fibers,prop:classical-si-gram}).
Conditional influence therefore describes the overlap of the removed
subspaces. This identifies spectral independence as a geometric input
to subspace descent and recovers the heat-bath gap product of
Anari, Liu, and Oveis Gharan~\cite[Theorem~1.3]{anari2020spectral}
(\cref{cor:classical-si-gap}).

\subsection{Outlook}\label{sec:intro-outlook}
The framework separates the construction of local energy problems from
the geometric estimates that turn their assembly into spectral control.
This suggests several directions for further work.

\paragraph{Quantitative inputs and scaling.}
\label{par:outlook-inputs}
For families of models, the main task is to estimate the terminal
energy lower bounds, the removed-space overlaps, and the internal link-walk gaps
with controlled size dependence. In the complete uniform case,
\cref{cor:weighted-complete-gap} reads
\[
\operatorname{gap}(\LOp_\varnothing)
\geq\frac{\delta_n b_n}{n},\qquad
b_n:=1-(n-1)\epsilon_n>0,
\]
where $\delta_n$ is a common terminal energy lower bound and $\epsilon_n$
controls the deepest overlaps. For the rescaled operator
$n\LOp_\varnothing$, a constant, inverse-polynomial,
inverse-quasipolynomial, or inverse-exponential lower bound on
$\delta_n b_n$ gives the corresponding gap guarantee.
Finding useful estimates for these
inputs is a concrete way to turn the descent theorem into applications.
Returning to the combinatorial Laplacians underlying Garland's method,
a natural direction is to apply this framework to their spectral gaps
and the gap promises arising in computational topology.
The quantum complexity of clique homology~\cite{crichigno2024clique}
motivates studying spectral-gap promises in this setting, as considered
in weighted and unweighted gapped
variants~\cite{king2026gapped,hayakawa2026unweighted}.

\paragraph{Quantum Gibbs sampling.}
\label{par:outlook-gibbs}
Quantum Gibbs sampling is a natural setting in which to explore this
program: reversible sampler gaps govern relaxation to equilibrium,
while noncommuting observables do not admit a direct description by
simultaneous classical configurations. A candidate hierarchy uses local
observable algebras $\mathcal A_\tau=\mathcal B(\mathfrak h_\tau)
\otimes\mathbf1_{\tau^c}$ and their orthogonal complements
$\II_\tau=\mathcal A_\tau^\perp$ in one fixed KMS inner product
determined by a faithful Gibbs state $\rho$, where $\mathfrak h_\tau$
is the physical Hilbert space of the sites in $\tau$.
Here $\mathcal A_\tau$ consists of observables acting only on the sites
in $\tau$, while $\II_\tau$ contains the residual components orthogonal
to all such local observables in this metric.
The root is the space of mean-zero observables; enlarging the site set
removes more observable components and shrinks the residual space,
as in \cref{sec:toolkit-observables}.
Obtaining sampler-gap bounds requires verifying the subspace conditions,
bounding overlaps, and comparing the resulting energy with the sampler's
Dirichlet form.
Model-specific constructions and size-dependent or size-independent estimates are left for
future work.

\paragraph{Beyond the present descent mechanism.}
General ROL allows localization maps beyond nested orthogonal
projections. Extending quantitative descent to these settings would
broaden the range of admissible energy hierarchies.
Another direction is to use the gluing defect quantitatively:
\cref{sec:gluing-response-cost} gives criteria for controlling
directions missed by local energies, whereas the main descent uses
only its nonnegativity. 

\subsection{Organization}\label{sec:intro-organization}
\Cref{sec:preliminaries} fixes the Hilbert-space conventions and reviews
Gram operators, minimizing responses, and simplicial localization.
\Cref{sec:rol} introduces effective energies and develops recursive
decompositions and spectral comparisons.
\Cref{sec:rol-toolkit} illustrates ROL through link functions and
observable residuals.
\Cref{sec:weighted-descent} then establishes
quantitative descent on weighted link hierarchies and its complete
uniform specialization.
\Cref{sec:classical-comparison} relates these structures to classical
spin systems, HDX methods, and spectral independence.
Finally, \cref{sec:gluing-response-cost} develops an optional refinement
that uses the gluing defect to control directions missed by local energies.

\section{Preliminaries}\label{sec:preliminaries}
\ROLstatementboxesfalse
We fix the metric and spectral-gap conventions, collect standard formulas
for minimizing responses, then introduce simplicial
complexes and links. The concrete facet spaces, metrics, and restriction
maps are collected with the link example in \cref{sec:link-response-example}.

\subsection{Hilbert spaces, quadratic forms, and spectral gaps}
\label{sec:forms-metrics-gaps}
All Hilbert spaces in this paper are finite-dimensional over $\C$.
Each space
carries a specified positive definite inner product, and every adjoint
refers to the inner products on its domain and codomain. In coordinates,
a positive definite Hermitian matrix $W$ gives
\begin{equation}\label{eq:hilbert-metric}
\langle f,g\rangle_W=f^*Wg,\qquad
\|f\|_W^2=\langle f,f\rangle_W.
\end{equation}
We suppress the metric subscript when the underlying space is clear.

A Hermitian quadratic form $\Energy$ on a Hilbert space $\HH$ has a unique
self-adjoint representing operator $\AOp$ such that
\begin{equation}\label{eq:energy-representation}
\Energy(f)=\langle f,\AOp f\rangle\qquad(f\in\HH).
\end{equation}
The representing operator depends on the chosen inner product.
Energy forms are positive semidefinite, so $\AOp\succeq0$.
The abstract framework does not assume Markov or Dirichlet-form structure.

\begin{definition}[Spectral gap above the kernel]\label{def:spectral-gap}
Let $\AOp\succeq0$ be self-adjoint on $\HH$, and set
$\mathcal N=\ker\AOp$. If $\mathcal N^\perp\ne\{0\}$, define
\begin{equation}\label{eq:spectral-gap}
\operatorname{gap}(\AOp)
=\min_{0\ne f\in\mathcal N^\perp}
\frac{\langle f,\AOp f\rangle}{\|f\|^2}.
\end{equation}
\end{definition}
For $\gamma\geq0$, the bound $\operatorname{gap}(\AOp)\geq\gamma$
is equivalent to
\begin{equation}\label{eq:gap-coercivity}
\Energy(f)\geq\gamma\,\operatorname{dist}(f,\mathcal N)^2
\qquad\text{for every }f\in\HH,
\end{equation}
where distance and orthogonality use the specified inner product.
Indeed, the orthogonal decomposition $f=f_0+f_\perp$, with
$f_0\in\mathcal N$ and $f_\perp\in\mathcal N^\perp$, gives
$\Energy(f)=\Energy(f_\perp)$ and
$\operatorname{dist}(f,\mathcal N)=\|f_\perp\|$.
If $\mathcal N^\perp=\{0\}$, the inequality
\eqref{eq:gap-coercivity} is vacuous, and we do not assign a smallest
positive eigenvalue.

\paragraph{Notation.}
Bold symbols typically denote operators representing quadratic energies,
such as $\AOp$ and the effective operator $\LOp$, or associated Gram
operators such as $\bm C$. Maps between spaces, such as readouts
$\Gamma$, localization maps $R$, and response maps $h$, generally use
ordinary symbols. 

\subsection{Gram operators of subspaces}\label{sec:gram-operators}
For vectors $u_1,\ldots,u_m$, the Gram matrix records the inner products
$G_{ij}=\langle u_i,u_j\rangle$. The following construction replaces
individual vectors by subspaces.

\begin{definition}[Subspace Gram operator]\label{def:subspace-gram-operator}
Let $\mathcal U_1,\ldots,\mathcal U_m$ be subspaces of a
finite-dimensional Hilbert space $\HH$, and let
$J_i:\mathcal U_i\hookrightarrow\HH$ be their isometric inclusions.
On the orthogonal direct sum, define the summation map and Gram operator by
\begin{equation}\label{eq:gram-definition}
\mathsf S:\bigoplus_i\mathcal U_i\to\HH,\qquad
\mathsf S((u_i)_i)=\sum_iJ_i u_i,\qquad \bm C:=\mathsf S^*\mathsf S.
\end{equation}
\end{definition}
The direct sum keeps separate copies of the subspaces; their images
in $\HH$ need not be orthogonal. The blocks of $\bm C$ are
$C_{ij}=J_i^*J_j$. Thus its diagonal blocks are identities and its
off-diagonal blocks measure the overlap between different subspaces.

\begin{lemma}[Gram norm and projection sum]\label{lem:gram-norm}
For the data of \cref{def:subspace-gram-operator}, let
$E_i=J_iJ_i^*$ be the orthogonal projection onto $\mathcal U_i$.
Then
\begin{equation}\label{eq:gram-norm}
\bm C\succeq0,\qquad
\|\bm C\|=\|\mathsf S\|^2=\left\|\sum_iE_i\right\|,
\qquad
\left\|\sum_iJ_i u_i\right\|^2
\leq\|\bm C\|\sum_i\|u_i\|^2.
\end{equation}
\end{lemma}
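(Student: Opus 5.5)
The plan is to derive everything from the factorization $\bm C=\mathsf S^*\mathsf S$ and the identity $\mathsf S\mathsf S^*=\sum_iE_i$.

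\emph{Positivity.} For any $u=(u_i)_i$ in the direct sum we have
\[
\langle u,\bm Cu\rangle=\langle \mathsf Su,\mathsf Su\rangle=\|\mathsf Su\|^2\geq0 .
\]
Since $\bm C$ is self-adjoint by construction, this gives $\bm C\succeq0$.

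\emph{Norm identities.} Using the $C^*$-identity $\|\mathsf S^*\mathsf S\|=\|\mathsf S\|^2$, which here follows from the Rayleigh quotient and Cauchy--Schwarz in finite dimensions, we get $\|\bm C\|=\|\mathsf S\|^2$. Next we compute the adjoint of $\mathsf S$. For $f\in\HH$,
\[
\langle \mathsf S u,f\rangle=\sum_i\langle J_iu_i,f\rangle=\sum_i\langle u_i,J_i^*f\rangle ,
\]
so $\mathsf S^*f=(J_i^*f)_i$. Hence $\mathsf S\mathsf S^*=\sum_iJ_iJ_i^*=\sum_iE_i$. Applying the $C^*$-identity to $\mathsf S^*$ gives $\|\mathsf S\mathsf S^*\|=\|\mathsf S^*\|^2=\|\mathsf S\|^2$, and therefore $\|\sum_iE_i\|=\|\mathsf S\|^2$. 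We also note that each $E_i$ is the orthogonal projection onto $\mathcal U_i$, because $J_i^*J_i=\id_{\mathcal U_i}$ (isometry), so $E_i$ is a self-adjoint idempotent with range $\mathcal U_i$.

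\emph{Inequality.} The norm on the orthogonal direct sum satisfies $\|u\|^2=\sum_i\|u_i\|^2$. Therefore
\[
\Bigl\|\sum_iJ_iu_i\Bigr\|^2=\|\mathsf Su\|^2\leq\|\mathsf S\|^2\|u\|^2=\|\bm C\|\sum_i\|u_i\|^2 .
\]

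There is no real obstacle here. The only point requiring care is identifying $\mathsf S^*$ correctly with respect to the direct-sum inner product, since the equality $\|\bm C\|=\|\sum_iE_i\|$ rests on it.
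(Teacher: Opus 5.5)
Your proposal is correct and follows the paper's proof essentially step for step: positivity from $\langle u,\bm C u\rangle=\|\mathsf S u\|^2$, the adjoint $\mathsf S^*x=(J_i^*x)_i$ giving $\mathsf S\mathsf S^*=\sum_iE_i$, the identity $\|\mathsf S^*\mathsf S\|=\|\mathsf S\|^2=\|\mathsf S\mathsf S^*\|$, and the direct-sum norm for the final inequality. Your explicit check that $E_i$ is a self-adjoint idempotent with range $\mathcal U_i$ is a small addition that the paper takes as given.
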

\begin{proof}
First, $\bm C=\mathsf S^*\mathsf S$ is self-adjoint, and for every
$u=(u_i)_i\in\bigoplus_i\mathcal U_i$,
\[
\langle u,\bm C u\rangle=\langle \mathsf S u,\mathsf S u\rangle=\|\mathsf S u\|^2\geq0.
\]
Thus $\bm C\succeq0$.
Next, the adjoint of the summation map is
$\mathsf S^*x=(J_i^*x)_i$. Hence, for $x\in\HH$,
\[
\mathsf S\mathsf S^*x=\sum_iJ_iJ_i^*x=\sum_iE_i x.
\]
Using the Hilbert-space operator-norm identity
$\|\mathsf S^*\mathsf S\|=\|\mathsf S\|^2=\|\mathsf S\mathsf S^*\|$, we obtain
$\|\bm C\|=\|\mathsf S\|^2=\|\sum_iE_i\|$.
Finally, the orthogonal direct-sum norm satisfies
$\|u\|^2=\sum_i\|u_i\|^2$, so
\[
\left\|\sum_iJ_i u_i\right\|^2
=\|\mathsf S u\|^2\leq\|\mathsf S\|^2\|u\|^2
=\|\bm C\|\sum_i\|u_i\|^2.
\]
\end{proof}

\subsection{Minimizing responses and energy decomposition}
\label{sec:minimizing-responses}
We begin with the Schur-complement calculation: minimize a
block quadratic energy over one group of variables while retaining
the other. Besides the minimum energy, the calculation identifies
the minimizing response and the cost of departing from it.
These are the data used later to describe the gluing defect.

\paragraph{Schur complement and minimizing response.}
Let $\II$ and $\YY$ be finite-dimensional Hilbert spaces with specified
inner products. Write $b\in\II$ for the retained variables and
$y\in\YY$ for the internal variables. Consider a positive semidefinite
block operator
\[
\begin{pmatrix}\QOp&-B^*\\-B&\TOp\end{pmatrix}
\quad\text{on }\II\oplus\YY,
\]
with energy
\begin{equation}\label{eq:response-block-form}
\Energy(b,y)=\langle b,\QOp b\rangle_{\II}
-2\operatorname{Re}\langle y,Bb\rangle_{\YY}
+\langle y,\TOp y\rangle_{\YY}.
\end{equation}
Here $B:\II\to\YY$, all adjoints use the specified metrics, and
the minus sign fixes the convention for $B$.
The block $\QOp$ gives the energy at $y=0$; allowing $y$ to vary can
lower that energy.

\begin{lemma}[Response and energy decomposition]
\label{prop:minimizing-response}
Suppose $\TOp\succ0$ on $\YY$. Define the response map $h$ and
the Schur complement $\LOp$ by
\begin{equation}\label{eq:minimizing-response-operators}
h=\TOp^{-1}B,\qquad
\LOp=\QOp-B^*\TOp^{-1}B.
\end{equation}
Then, for every $(b,y)\in\II\oplus\YY$,
\begin{equation}\label{eq:response-energy-decomposition}
\Energy(b,y)
=\underbrace{\langle b,\LOp b\rangle_{\II}}_{\text{minimum energy at fixed }b}
+\underbrace{\langle y-hb,\TOp(y-hb)\rangle_{\YY}}_{\text{cost away from the minimizer}}.
\end{equation}
In particular,
$\min_y\Energy(b,y)=\langle b,\LOp b\rangle_{\II}$,
$\LOp\succeq0$, and the unique minimizing internal vector is
$y_{\min}=hb$.
\end{lemma}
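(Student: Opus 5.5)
The plan is to complete the square in $y$. Write $y = hb + (y-hb)$ with $h=\TOp^{-1}B$, which is well defined since $\TOp\succ0$ is invertible on the finite-dimensional space $\YY$. Then expand the right-hand side of \eqref{eq:response-energy-decomposition} and check that it agrees with \eqref{eq:response-block-form}.

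\textbf{Expanding the quadratic term.} Using self-adjointness of $\TOp$ and $\TOp h = B$, we get
\[
\langle y-hb,\TOp(y-hb)\rangle
=\langle y,\TOp y\rangle-2\operatorname{Re}\langle y,\TOp hb\rangle+\langle hb,\TOp hb\rangle .
\]
The middle term equals $-2\operatorname{Re}\langle y,Bb\rangle$. The last term equals $\langle \TOp^{-1}Bb,Bb\rangle=\langle b,B^*\TOp^{-1}Bb\rangle$; here we use that $\TOp^{-1}$ is self-adjoint and that $B^*$ is the adjoint with respect to the specified metrics.

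\textbf{Combining with the Schur complement.} Adding $\langle b,\LOp b\rangle=\langle b,\QOp b\rangle-\langle b,B^*\TOp^{-1}Bb\rangle$ cancels the $B^*\TOp^{-1}B$ contributions. What remains is exactly $\Energy(b,y)$, which proves the identity.

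\textbf{Consequences.}
\begin{itemize}
\item Since $\TOp\succ0$, the second term is nonnegative. It vanishes if and only if $y=hb$. Hence $\min_y\Energy(b,y)=\langle b,\LOp b\rangle$, and the minimizer $y_{\min}=hb$ is unique.
\item Setting $y=hb$ gives $\langle b,\LOp b\rangle=\Energy(b,hb)\ge0$. This uses positive semidefiniteness of the block operator, i.e.\ $\Energy\ge0$ on $\II\oplus\YY$. So $\LOp\succeq0$.
\item $\LOp$ is self-adjoint because $\QOp$ and $B^*\TOp^{-1}B$ are.
\end{itemize}

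There is no real obstacle. The only care needed is the complex case: the cross term must be handled with $\operatorname{Re}$, and all adjoints must be taken with respect to the specified (possibly non-standard) inner products on $\II$ and $\YY$.
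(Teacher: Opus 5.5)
Your proposal is correct and is essentially the paper's proof: expand the second term of \eqref{eq:response-energy-decomposition} using $\TOp h=B$ and $h^*\TOp h=B^*\TOp^{-1}B$ to recover \eqref{eq:response-block-form}, then get uniqueness of the minimizer from $\TOp\succ0$ and $\LOp\succeq0$ from positivity of the block operator at $y=hb$. Your handling of the $\operatorname{Re}$ cross term and of adjoints in the specified metrics fills in details that the paper's proof leaves implicit.
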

\begin{proof}
Expand the second term in \eqref{eq:response-energy-decomposition}
and use $\TOp h=B$ and $h^*\TOp h=B^*\TOp^{-1}B$.
The resulting expression is \eqref{eq:response-block-form}.
Positive definiteness of $\TOp$ makes the second term vanish exactly
when $y=hb$. Minimizing gives the stated effective energy, which is
nonnegative because the full block operator is positive semidefinite.
\end{proof}

In this setting, the retained data are $b$, and eliminating the internal
variable $y$ gives the effective energy
\[
\Eff(b):=\min_{y\in\YY}\Energy(b,y)
=\langle b,\LOp b\rangle_{\II}.
\]
Thus $\LOp$ is the effective operator on the retained space.
If $\II\ne\{0\}$, minimizing also over unit retained data gives
\[
\min_{\|b\|_{\II}=1}\Eff(b)=\lambda_{\min}(\LOp).
\]
The triple $(\LOp,h,\TOp)$ records the minimum energy, the internal
response, and the cost of deviating from that response. Keeping only
$\LOp$ may lose information needed when internal variables of different
problems must later agree. In the link example of
\cref{sec:link-response-example}, the minimizing response is a weighted
mean and the effective energy is the variance.

\paragraph{Coordinates and metrics for a general readout.}
The same block calculation applies when retained variables are
specified by a surjective map rather than a given direct-sum splitting.
Let $\AOp\succeq0$ act on $\HH$, let
$\Gamma:\HH\to\II$ be a surjective linear map, and set
$\Energy(f)=\langle f,\AOp f\rangle$.
We first choose one representative of each retained value.
A linear right inverse $S:\II\to\HH$, satisfying $\Gamma S=\id_{\II}$,
makes this choice: $Sb$ has readout $b$.
All other vectors with the same readout differ from $Sb$ by a vector
in $\YY:=\ker\Gamma$. Thus
\[
\Gamma f=b
\quad\Longleftrightarrow\quad
f=Sb+y\quad\text{for a unique }y\in\YY.
\]
The variable $y$ describes precisely the freedom to change the full
vector without changing the retained data.

These coordinates turn the constrained problem into the block problem
above. Give $\YY$ the inner product inherited from $\HH$ and
$\II\oplus\YY$ its product inner product. The coordinate map
\begin{equation}\label{eq:response-coordinates}
J:\II\oplus\YY\to\HH,\qquad J(b,y)=Sb+y,
\end{equation}
is a linear isomorphism with $\Gamma J(b,y)=b$.
Writing the energy in these coordinates gives
\[
J^*\AOp J=\begin{pmatrix}\QOp&-B^*\\-B&\TOp\end{pmatrix},
\]
so $\Energy(J(b,y))$ has the form \eqref{eq:response-block-form}.
Minimizing at fixed readout is therefore exactly minimizing over $y$
while keeping $b$ fixed. Here $\QOp$ is the retained-coordinate
block, not the full operator $\AOp$ or, in general, the effective
operator $\LOp$.

For fixed retained data $b$, every $f\in\HH$ satisfying $\Gamma f=b$
has the form $f=Sb+y$, where $y\in\ker\Gamma$.
When $\TOp\succ0$, \cref{prop:minimizing-response} shows that the
energy is uniquely minimized at $y=hb$. Thus the unique
energy-minimizing vector with readout $b$ and its energy are
\[
f_{\min}(b)=Sb+hb\in\HH,\qquad
\Energy(f_{\min}(b))=\langle b,\LOp b\rangle_{\II}.
\]
Here $hb$ is the correction to the chosen representative $Sb$.
Changing $S$ changes these coordinates
and the response map, but not the constrained minimum energy
defined in \cref{def:retained-energy}.

Finally, this coordinate change preserves the energy description,
but need not preserve the norm. The two relevant norms are
\[
\|f\|_{\HH}=\|Sb+y\|_{\HH},
\qquad
\|(b,y)\|_{\II\oplus\YY}
=(\|b\|_{\II}^2+\|y\|_{\YY}^2)^{1/2},
\]
and they need not agree. Even choosing
$S\II=(\ker\Gamma)^\perp$ does not make $S$ isometric unless the
specified metric on $\II$ agrees with the one induced by $S$.
A full-space gap estimate must account for this distinction.

\paragraph{Singular costs and shorting.}
The assumption $\TOp\succ0$ says that the energy is positive on every
nonzero vector in $\ker\Gamma$, excluding zero-energy internal
directions. It is stronger than what is needed to define the effective
energy: singular costs are also covered by constrained minimization
(\cref{sec:constrained-minimization,lem:effective-energy-basic-facts}),
although the inverse formula above does not apply.
For an orthogonal splitting with inherited metrics, this variational
construction is the classical shorted operator, as detailed next.

\begin{remark}[Compression versus elimination]
\label{rem:compression-effective}
Let $\Pi$ be an orthogonal projection on $\HH$, with
$\II=\ran\Pi$ carrying the inherited metric and $\Gamma=\Pi$.
Compression fixes the complementary variable to zero, whereas
elimination minimizes over it. Thus
\begin{equation}\label{eq:compression-effective}
\QOp=\left.\Pi\AOp\Pi\right|_{\II},\qquad
\LOp=\left.\operatorname{Short}_{\Pi}(\AOp)\right|_{\II},
\qquad 0\preceq\LOp\preceq\QOp,
\end{equation}
where the classical shorted operator has quadratic form
$\min_{y\in\II^\perp}\Energy(b+y)$ on $\II$
\cite{anderson1975shorted,friedrich2018generalized}.
The companion paper~\cite{hayakawa2026double} uses an interface
compression estimate together with control of the loss due to coupling.
\end{remark}

\subsection{Simplicial complexes and links}
\label{sec:simplicial-links}
Simplicial links provide a geometric example of recursive localization.
We recall their definitions before specifying spaces and operators.

\begin{definition}[Faces, facets, and links]\label{def:simplicial-links}
A finite simplicial complex $X$ is a nonempty family of subsets of a
finite vertex set, closed under taking subsets. Its elements are faces,
including the empty face. A face $\tau$ has dimension $|\tau|-1$;
in particular, $\dim\varnothing=-1$. A facet is a face maximal under
inclusion. The complex is pure if all its facets have the same dimension.
For $\tau\in X$, its link is the simplicial complex
\begin{equation}\label{eq:simplicial-link}
\lk_X(\tau)=
\{\eta\in X:\eta\cap\tau=\varnothing,\ \eta\cup\tau\in X\}.
\end{equation}
\end{definition}
We do not assume purity unless stated explicitly. Thus facets may have
different dimensions, although pure complexes are included.
The link retains the faces that can extend $\tau$, with the selected
vertices removed. It is not merely the complex obtained by deleting
those vertices. In particular, $\lk_X(\varnothing)=X$.

We use $\tau,\tau'$ for faces at the same dimension and
$\tau\subseteq\nu\subseteq\xi$ for nested faces.
In the facet-function construction, $\sigma$ denotes a facet of the original complex.
For a vertex $i$ of $\lk_X(\tau)$, write
$\tau i=\tau\cup\{i\}$. The passage $\tau\to\tau i$ is a one-step
face extension, with $|\tau i|=|\tau|+1$.

Links compose by adjoining the selected faces:
\begin{equation}\label{eq:simplicial-link-composition}
\lk_{\lk_X(\tau)}(\eta)=\lk_X(\tau\cup\eta),
\qquad \eta\in\lk_X(\tau).
\end{equation}
Indeed, membership on either side means being disjoint from
$\tau\cup\eta$ and forming a face of $X$ upon adjoining $\tau\cup\eta$.
In particular, taking the link at $i$ within $\lk_X(\tau)$ gives
$\lk_X(\tau i)$. The indexing face grows while its link is further
localized.

\paragraph{Faces and paths.}
The depth of a face $\tau$ from $\varnothing$ is $|\tau|$, not its
dimension. A path is an ordered sequence of legal vertex additions.
Different paths, such as $(i,j)$ and $(j,i)$, can reach the same face
$\{i,j\}$, so the face hierarchy need not be a tree.
Equation~\eqref{eq:simplicial-link-composition} identifies their final
complexes, but does not specify weights for paths or maps between
associated Hilbert spaces.
\Cref{fig:rol-link-paths} illustrates two orders of taking links.
\begin{figure}[htbp]
\centering
\resizebox{.9\linewidth}{!}{%
\begin{tikzpicture}[x=1cm,y=1cm,
  linkvertex/.style={circle,fill=linkgray,inner sep=1.8pt},
  linkedge/.style={draw=linkgray,line width=.9pt},
  tealroute/.style={-{Stealth[length=2mm]},draw=linkteal,line width=1.1pt},
  orangeroute/.style={-{Stealth[length=2mm]},draw=defectorange,line width=1.1pt},
  sharedroute/.style={-{Stealth[length=2mm]},draw=linkgray,line width=1pt}
]
\coordinate (ti) at (-.9,-.6);
\coordinate (tj) at (.8,-.6);
\coordinate (tk) at (.4,.8);
\coordinate (tl) at (.12,-.08);
\fill[linkgray!9] (ti)--(tj)--(tk)--cycle;
\fill[linkteal!12] (ti)--(tl)--(tk)--cycle;
\fill[defectorange!10] (ti)--(tj)--(tl)--cycle;
\draw[linkedge] (ti)--(tj)--(tk)--cycle;
\draw[linkedge] (ti)--(tl)--(tk);
\draw[linkedge] (tl)--(tj);
\node[linkvertex,label=below left:$i$] at (ti) {};
\node[linkvertex,label=below right:$j$] at (tj) {};
\node[linkvertex,label=above:$k$] at (tk) {};
\node[linkvertex,label=right:$\ell$] at (tl) {};
\node[font=\small] at (0,1.6) {$X=2^{\{i,j,k,\ell\}}$};
\node[font=\scriptsize] at (0,-1.3) {Solid tetrahedron};

\fill[linkteal!10] (3.3,1.25)--(4.7,1.25)--(4,2.25)--cycle;
\draw[linkedge] (3.3,1.25)--(4.7,1.25)--(4,2.25)--cycle;
\node[linkvertex,label=below:$j$] at (3.3,1.25) {};
\node[linkvertex,label=below:$k$] at (4.7,1.25) {};
\node[linkvertex,label=above:$\ell$] at (4,2.25) {};
\node[font=\small] at (4,3.05) {$\lk_X(\{i\})$};

\fill[defectorange!10] (3.3,-2.2)--(4.7,-2.2)--(4,-1.2)--cycle;
\draw[linkedge] (3.3,-2.2)--(4.7,-2.2)--(4,-1.2)--cycle;
\node[linkvertex,label=below:$i$] at (3.3,-2.2) {};
\node[linkvertex,label=below:$k$] at (4.7,-2.2) {};
\node[linkvertex,label=above:$\ell$] at (4,-1.2) {};
\node[font=\small] at (4,-2.95) {$\lk_X(\{j\})$};

\draw[tealroute] (1.2,.4)--node[above,sloped,font=\footnotesize\sffamily] {link at $i$} (2.85,1.5);
\draw[orangeroute] (1.2,-.4)--node[below,sloped,font=\footnotesize\sffamily] {link at $j$} (2.85,-1.65);
\draw[tealroute] (5.2,1.6)--node[above,sloped,font=\footnotesize\sffamily] {link at $j$} (7,.35);
\draw[orangeroute] (5.2,-1.65)--node[below,sloped,font=\footnotesize\sffamily] {link at $i$} (7,-.35);

\draw[linkedge] (7.25,0)--(8.55,0);
\node[linkvertex,label=below:$k$] at (7.25,0) {};
\node[linkvertex,label=below:$\ell$] at (8.55,0) {};
\node[font=\small] at (7.9,.85) {$\lk_X(\{i,j\})$};
\draw[sharedroute] (8.95,0)--node[above,font=\footnotesize\sffamily] {link at $k$} (10.8,0);
\node[linkvertex,label=below:$\ell$] at (11.2,0) {};
\node[font=\small] at (11.2,.85) {$\lk_X(\{i,j,k\})$};
\end{tikzpicture}%
}
\caption{Ordered paths in a simplicial-link hierarchy.
For the full tetrahedron $X$ (including its interior), taking the link
at $i$ and then at $j$, or at $j$ and then at $i$, gives the same
full edge on $\{k,\ell\}$. Taking its link at $k$ leaves the vertex
$\ell$. Each triangle is filled. The two colored routes select
the orders $(i,j,k)$ and $(j,i,k)$. This diagram describes link
operations, not elimination of operator variables.}
\label{fig:rol-link-paths}
\end{figure}

\subsection{Cochains and localization on links}
\label{sec:cochain-localization}
We recall the cochain convention used for comparison with classical
localization; see \cite[Sections~2--3]{oppenheim2017local}.
Write $\Sigma_k(X)$ for ordered $(k+1)$-tuples spanning faces.
A cochain $\phi\in C^k(X;\C)$ is an alternating function on
$\Sigma_k(X)$: permuting the vertices multiplies its value by the
sign of the permutation. We use the augmented convention
$C^{-1}(X;\C)=\C$, corresponding to the empty face.
Positive face weights $m$ give the inner product
\[
\langle\phi,\psi\rangle_m
=\frac1{(k+1)!}\sum_{\eta\in\Sigma_k(X)}
m(\eta)\overline{\phi(\eta)}\psi(\eta),
\]
where $m$ is independent of ordering. Thus each unoriented face is
counted once. No balance condition on these weights is imposed here.

\begin{definition}[Cochain localization]\label{def:cochain-localization}
Fix an ordered face $\tau=(v_0,\ldots,v_j)$ and $k\ge j$.
Concatenation defines
\[
\operatorname{loc}_\tau:C^k(X;\C)
\longrightarrow C^{k-j-1}(\lk_X(\tau);\C),
\qquad \phi_\tau(\eta)=\phi(\tau\eta).
\]
For the empty face, localization is the identity.
\end{definition}
This is the localization of \cite[Definition~3.3]{oppenheim2017local},
with complex coefficients. In contrast, degree-preserving restriction
uses $\phi^\tau(\eta)=\phi(\eta)$ for $\eta\in\Sigma_k(\lk_X(\tau))$;
see \cite[Definition~3.12]{oppenheim2017local}.
For a vertex $v$, localization of a $1$-cochain reads an edge value
$\phi(v,a)$ as a value at the link vertex $a$; restriction of a
$0$-cochain reads the original vertex value there.

The facet-component maps in \cref{sec:toolkit-links} select and
relabel coordinates. In the pure scalar case, with consistent
orientation signs, they correspond to cochain localization rather
than degree-preserving restriction.

\ROLstatementboxestrue
\section{Recursive Operator Localization}\label{sec:rol}
This section develops the structural identities of ROL. We first define
the effective energy of a single problem and organize such problems into
a localization hierarchy (\cref{sec:rol-global-effective,sec:rol-hierarchy}).
Compatible localization then decomposes the effective energy into local
effective energies and a nonnegative gluing defect, first in one step
and then through a finite hierarchy
(\cref{sec:two-level-localization,sec:finite-depth-localization}).
We next explain how these identities yield spectral-gap comparisons
(\cref{sec:rol-spectral-comparison}).

\subsection{Retained variables and effective energy}
\label{sec:rol-global-effective}
\label{sec:constrained-minimization}
We first work with a single energy problem, before introducing a
localization hierarchy. A positive semidefinite operator $\AOp$ on
a full Hilbert space $\HH$ defines the energy
$\Energy(f)=\langle f,\AOp f\rangle$. To obtain a problem on fewer
retained variables, we specify what information to keep and record
the least energy needed to realize it.

\begin{definition}[Retained variables and effective energy]
\label{def:retained-energy}
Let $\HH$ and $\II$ be finite-dimensional Hilbert spaces with specified
inner products, and let $\Energy(f)=\langle f,\AOp f\rangle$ with
$\AOp\succeq0$ on $\HH$. A surjective linear map
\begin{equation}\label{eq:retained-map}
\Gamma:\HH\to\II,\qquad \ket{f}\mapsto\ket{b}=\Gamma\ket{f},
\end{equation}
specifies the retained variables. We call $\Gamma$ the \emph{readout operator}.
For $b\in\II$, define the effective energy by
\begin{equation}\label{eq:retained-effective-energy}
\Eff(b):=\min_{f:\,\Gamma f=b}\Energy(f).
\end{equation}
Its representing operator in the specified inner product on $\II$ is
the effective operator $\LOp$, so that
\begin{equation}\label{eq:retained-effective-operator}
\Eff(b)=\langle b,\LOp b\rangle_{\II},\qquad \LOp\succeq0.
\end{equation}
\end{definition}

Surjectivity ensures that each constraint set $\Gamma^{-1}(b)$ is nonempty.
The map $\Gamma$ only reads the retained variables; it does not perform
the minimization. The retained space $\II$ need not be a subspace of
$\HH$, a tensor factor, or a physical boundary.

The minimum is attained and defines a positive semidefinite quadratic
form, even when the minimizing vector is not unique; see
\cref{lem:effective-energy-basic-facts} below.
In particular,
\begin{equation}\label{eq:energy-effective-lower-bound}
\Energy(f)\geq\Eff(\Gamma f).
\end{equation}
The effective energy records the exact minimum of the original energy for each fixed readout $b$; it does not, by itself, provide a spectral approximation to $\AOp$. A gap bound for $\LOp$ need not
by itself be a gap bound for $\AOp$ in the original metric;
the required comparison is discussed in
\cref{sec:rol-spectral-comparison}.

The following lemma verifies that the effective problem is well defined.

\begingroup
\ROLstatementboxesfalse
\begin{lemma}[Existence and structure of effective energies]
\label{lem:effective-energy-basic-facts}
For the data of \cref{def:retained-energy}, the minimum defining
$\Eff(b)$ is attained for every $b\in\II$, and $\Eff$ is a
positive semidefinite quadratic form with a unique representing
operator $\LOp$ in the specified metric.
\end{lemma}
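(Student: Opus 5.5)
The approach is to reduce to the block setting of \cref{sec:minimizing-responses}, using coordinates adapted to $\Gamma$, and to handle a possibly singular internal block by splitting off its kernel. Since $\Gamma$ is surjective and all spaces are finite-dimensional, I first fix a linear right inverse $S:\II\to\HH$ with $\Gamma S=\id_{\II}$; for instance, take the inverse of $\Gamma|_{(\ker\Gamma)^\perp}$. Setting $\YY=\ker\Gamma$ with the inherited metric, the coordinate map $J(b,y)=Sb+y$ of \eqref{eq:response-coordinates} identifies $\Gamma^{-1}(b)$ with $\{Sb+y:y\in\YY\}$. Writing $J^*\AOp J$ in block form then gives
\[
\Eff(b)=\min_{y\in\YY}\Energy(b,y),
\]
with $\Energy(b,y)$ of the form \eqref{eq:response-block-form}, where $\TOp=\AOp|_{\YY}$ compressed to $\YY$ satisfies $\TOp\succeq0$.

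If $\TOp\succ0$, \cref{prop:minimizing-response} gives everything at once: the minimum is attained at $y=hb$ and equals $\langle b,\LOp b\rangle$ with $\LOp=\QOp-B^*\TOp^{-1}B\succeq0$. In general, I split $\YY=\YY_0\oplus\YY_+$ with $\YY_0=\ker\TOp$ and $\YY_+=\ran\TOp$. The key observation is that $B$ maps into $\YY_+$. To see this, note that positivity of the full block operator forces $\Energy(b,y_0)=\langle b,\QOp b\rangle-2\operatorname{Re}\langle y_0,Bb\rangle\ge0$ for all scalar multiples of any $y_0\in\YY_0$, which is only possible if $\langle y_0,Bb\rangle=0$. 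Hence $\Energy(b,y_0+y_+)=\Energy(b,y_+)$, and minimizing over $\YY$ reduces to minimizing over $\YY_+$, on which $\TOp$ is positive definite. Applying \cref{prop:minimizing-response} on $\II\oplus\YY_+$ then gives attainment at $y_+=\TOp_+^{-1}Bb$ (plus any element of $\YY_0$, explaining non-uniqueness) and the formula $\Eff(b)=\langle b,(\QOp-B^*\TOp_+^{-1}B)b\rangle_{\II}$. This is a Hermitian quadratic form, and it is nonnegative because $\Energy\ge0$.

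Finally, the representing operator $\LOp$ is unique because a Hermitian form on a complex inner product space determines its operator via polarization; this is \eqref{eq:energy-representation}. I would note that $\LOp$ does not depend on the choice of $S$, since $\Eff$ is defined intrinsically.

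The main obstacle is the singular case, specifically verifying $\ran B\subseteq\ran\TOp$ from positivity of the block operator. I expect the scaling argument above to settle it. An alternative route would be a compactness argument: restricting the minimization to $(\ker\AOp\cap\YY)^\perp$ within each affine fibre gives coercivity there. However, that route yields attainment without directly exhibiting the quadratic-form structure, so I prefer the block route.
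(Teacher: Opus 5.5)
Your proof is correct, but it takes a different route from the paper's.

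The paper starts the same way, with a right inverse $S$ and the parametrization $Sb+z$ with $z\in\ker\Gamma$. It then writes $\Energy(Sb+z)=\|\AOp^{1/2}Sb+\AOp^{1/2}z\|^2$ and reads the minimum as a squared distance to the subspace $U=\AOp^{1/2}(\ker\Gamma)$:
\[
\Eff(b)=\|(\id-\operatorname{Proj}_U)\AOp^{1/2}Sb\|^2 .
\]
Attainment comes from orthogonal projection onto a finite-dimensional subspace: any preimage in $\ker\Gamma$ of $-\operatorname{Proj}_U\AOp^{1/2}Sb$ is a minimizer. Quadratic structure and positivity are then immediate, since $\LOp=K^*K$ with $K=(\id-\operatorname{Proj}_U)\AOp^{1/2}S$. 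No case split on whether the internal block is singular is needed.

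You instead work in block coordinates and use a generalized Schur complement. Your scaling argument that $\ran B\perp\ker\TOp$ is sound: a real-affine function of $t\in\C$ that is bounded below must be constant. It correctly reduces the singular case to \cref{prop:minimizing-response} on $\II\oplus\YY_+$, where the restricted block operator stays positive semidefinite because it is a restriction of a positive semidefinite form.

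What each route buys:
\begin{itemize}
\item The paper's argument is shorter and uniform, with no case distinction.
\item Yours costs a case split, but gives the explicit formula $\LOp=\QOp-B^*\TOp_+^{-1}B$, equivalently $\QOp-B^*\TOp^{\dagger}B$.
\item Yours also connects the lemma directly to the response and shorted-operator picture of \cref{sec:minimizing-responses}.
\item Yours identifies the full set of minimizers as a translate of $\ker\TOp=\ker\Gamma\cap\ker\AOp$, which explains exactly when minimizers fail to be unique.
\end{itemize}
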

\endgroup
\begin{proof}
Choose a linear right inverse
$S:\II\to\HH$ of $\Gamma$. Every feasible vector has the form $Sb+z$
with $z\in\ker\Gamma$. Let $\operatorname{Proj}_U$ be the orthogonal projection onto
$U=\AOp^{1/2}(\ker\Gamma)\subseteq\HH$. This subspace is closed in
finite dimensions, and hence
\begin{equation}\label{eq:effective-projection-formula}
\Eff(b)=\|(\id-\operatorname{Proj}_U)\AOp^{1/2}Sb\|_{\HH}^2.
\end{equation}
Indeed, $-\operatorname{Proj}_U\AOp^{1/2}Sb$ has a preimage in $\ker\Gamma$, which gives
a minimizing vector. The displayed expression is quadratic in $b$ and
therefore defines the unique operator in
\eqref{eq:retained-effective-operator}. Although $S$ was used to express
the minimum, its value is independent of that choice.

\end{proof}

We next place effective problems at the nodes of a hierarchy.
The minimization just defined takes place within each local problem;
maps between different problems and conditions relating their energies
are additional data, not consequences of minimization.

\subsection{The localization hierarchy}
\label{sec:rol-hierarchy}
\begin{conceptbox}
\begin{definition}[Effective problem at an index]\label{def:rol-effective-problem}
An effective problem at $\tau$ is a pair $(\II_\tau,\LOp_\tau)$:
a finite-dimensional retained Hilbert space with a specified inner
product, and a positive semidefinite self-adjoint operator
\[
\LOp_\tau:\II_\tau\longrightarrow\II_\tau.
\]
Its energy is
\[
\Eff_\tau(b)=\langle b,\LOp_\tau b\rangle_\tau,
\qquad b\in\II_\tau.
\]
\end{definition}
\end{conceptbox}
\medskip
The minimization in \cref{sec:rol-global-effective} constructs such
a pair. We now consider a family of these problems indexed by $\tau$.

\medskip
\begin{conceptbox}
\begin{definition}[Nodes and levels]\label{def:rol-nodes-levels}
A finite graded index set is a disjoint union
\[
\mathscr T=\bigsqcup_{k=0}^{h}\mathscr T_k,
\qquad h\in\mathbb Z_{\geq0},
\qquad \mathscr T_0=\{\varnothing\}.
\]
An element $\tau\in\mathscr T$ is a \emph{node}, indexing an
effective problem $(\II_\tau,\LOp_\tau)$.
Its depth is specified by
\[
k(\tau)=k\quad\Longleftrightarrow\quad\tau\in\mathscr T_k.
\]
The root $\varnothing$ is the unique node at depth zero.
\end{definition}
\end{conceptbox}
\medskip

\paragraph{Effective problems at a level.}
The problems at depth $k$ form the indexed family
\begin{equation}\label{eq:rol-level-family}
\bigl\{(\II_\tau,\LOp_\tau)\bigr\}_{\tau\in\mathscr T_k},
\qquad k=0,\ldots,h.
\end{equation}
Each member has energy
$\Eff_\tau(b_\tau)=\langle b_\tau,\LOp_\tau b_\tau\rangle_\tau$
for $b_\tau\in\II_\tau$.
Different nodes remain distinct indices even when their assigned
spaces and operators coincide.

\begin{definition}[Retained localization hierarchy]\label{def:rol-hierarchy}
Use the indexed families in \eqref{eq:rol-level-family}.
Equip this set with a rooted directed acyclic multigraph structure,
whose root is $\varnothing$. A specified step
$a:\tau\to\nu$ advances one level and carries a positive weight $\omega_a$
and a linear map $R_a:\II_\tau\to\II_\nu$. Parallel steps are
distinguished. The indices label local problems, not physical vertices.
\end{definition}

\paragraph{A path through the levels.}
A root path follows a sequence of nodes of increasing depth:
\begin{equation}\label{eq:rol-node-path}
p:\quad \tau_0=\varnothing
\xrightarrow{a_1}\tau_1\xrightarrow{a_2}\cdots
\xrightarrow{a_k}\tau_k,\qquad \tau_r\in\mathscr T_r.
\end{equation}
Each arrow is a specified localization step.
If parallel steps connect the same nodes, their choices are part of
the path data. The case $k=0$ is the empty path at the root.

Fix such a path. It selects one effective problem at each visited level:
\[
(\II_{\tau_0},\LOp_{\tau_0}),\
(\II_{\tau_1},\LOp_{\tau_1}),\ \ldots,\
(\II_{\tau_k},\LOp_{\tau_k}).
\]
\Cref{tab:rol-hierarchy} displays this sequence for $k\geq2$,
with $\tau_k$ chosen as its stopping point.

\begin{table}[htbp]
\centering
\small
\setlength{\tabcolsep}{4pt}
\renewcommand{\arraystretch}{1.18}
\begin{tabular}{@{}c c c p{.38\linewidth}@{}}
\toprule
Depth & Node & Effective problem & Step to the next problem\\
\midrule
$0$ & $\tau_0=\varnothing$ & $(\II_\varnothing,\LOp_\varnothing)$
& $R_{a_1}:\II_{\tau_0}\to\II_{\tau_1}$\\
$1$ & $\tau_1$ & $(\II_{\tau_1},\LOp_{\tau_1})$
& $R_{a_2}:\II_{\tau_1}\to\II_{\tau_2}$\\
$\vdots$ & $\vdots$ & $\vdots$ & $\vdots$\\
$k-1$ & $\tau_{k-1}$ & $(\II_{\tau_{k-1}},\LOp_{\tau_{k-1}})$
& $R_{a_k}:\II_{\tau_{k-1}}\to\II_{\tau_k}$\\
$k$ & $\tau_k$ & $(\II_{\tau_k},\LOp_{\tau_k})$ & Stop on this path\\
\bottomrule
\end{tabular}
\caption{Effective problems along one fixed root path $p$.
Each row selects one member of the family at that depth, not the entire
level. The maps act on vectors in the retained spaces.
Other paths may stop at different depths; see
\hyperref[par:rol-stopping]{below} for the stopping rule.}
\label{tab:rol-hierarchy}
\end{table}

The input to the root effective operator is a single vector
$b\in\II_\varnothing$; successive localization maps produce the vectors
along a path.

\begin{definition}[Path data]\label{def:rol-path-data}
For a root path $p=(a_1,\ldots,a_k)$ in \cref{def:rol-hierarchy}, let
$t(p)$ denote its endpoint and set
\begin{equation}\label{eq:rol-path-data}
\omega_p=\prod_{j=1}^k \omega_{a_j},\qquad
R_p=R_{a_k}\cdots R_{a_1}:\II_\varnothing\to\II_{t(p)}.
\end{equation}
The empty path has weight $1$, map $\id_{\II_\varnothing}$, and
endpoint $\varnothing$.
Write $|p|$ for the number of arrows in $p$ and
$\mathcal P_r:=\{p:\text{$p$ is a root path with $|p|=r$}\}$.
In particular, $\mathcal P_0$ consists of the empty path.
\end{definition}

\paragraph{Paths stopping at different depths.}
\phantomsection\label{par:rol-stopping}
Starting at the root, choose at each visited path either to stop or to
continue along every specified outgoing step. Stop whenever no further
step is possible. Thus every route from the root to a node with no
outgoing steps encounters exactly one stopping point, possibly the root
itself. The decision may depend on the path taken, not just its endpoint,
and different paths may stop at different depths.

Let $\mathcal P_{\mathrm{stop}}$ be the paths from the root to these
stopping points, and let $\mathcal P_{\mathrm{int}}$ be the distinct
paths visited before stopping. Equivalently, the latter are the
\emph{proper prefixes} of the stopped paths: initial segments obtained
by removing one or more arrows from the end, including the empty path.
Each such path is counted once, even if several stopped paths extend it.
Distinct arrow sequences remain distinct paths even when their endpoints
agree. If we stop at the root, $\mathcal P_{\mathrm{stop}}$ contains only
the empty path and $\mathcal P_{\mathrm{int}}$ is empty.

\paragraph{Nodes selected by the stopping rule.}
The corresponding sets of stopping nodes and internal nodes are
\begin{equation}\label{eq:rol-stopping-node-sets}
\mathscr T_{\mathrm{stop}}
:=\{t(p):p\in\mathcal P_{\mathrm{stop}}\},
\qquad
\mathscr T_{\mathrm{int}}
:=\{t(p):p\in\mathcal P_{\mathrm{int}}\}.
\end{equation}
These sets record the nodes where paths stop and where localization
continues. They need not be disjoint: one path may stop at a node that
another path passes through. Sums in the localization identities
remain indexed by paths, so distinct contributions with the same
endpoint are not lost.

\Cref{fig:rol-root-paths} illustrates this notation at depth three.
Only the displayed steps belong to the chosen hierarchy; it is not
the full face diagram of a simplex.

\begin{figure}[htbp]
\centering
\resizebox{0.9\linewidth}{!}{%
\begin{tikzpicture}[
  x=1cm,y=1cm,
  problem/.style={circle,draw=linkgray,fill=white,
    minimum size=7mm,inner sep=2pt,font=\small},
  stop/.style={rectangle,rounded corners=2pt,draw=linkgray,
    fill=linkgray!8,minimum height=7mm,minimum width=9mm,
    inner sep=3pt,font=\small},
  rolpath/.style={-{Stealth[length=2mm]},draw=linkgray,line width=.8pt},
  prolpath/.style={rolpath,draw=linkteal,line width=1.2pt},
  qrolpath/.style={rolpath,draw=defectorange,line width=1.2pt}
]
\node[rolpanel] at (3.4,.9) {(a) Localization indices};
\node[rolpanel] at (10.5,.9) {(b) Paths unfolded into a tree};
\node[font=\scriptsize,anchor=east] at (.4,0) {Depth 0};
\node[font=\scriptsize,anchor=east] at (.4,-1.25) {1};
\node[font=\scriptsize,anchor=east] at (.4,-2.5) {2};
\node[font=\scriptsize,anchor=east] at (.4,-3.75) {3};

\node[problem] (root) at (3.4,0) {$\varnothing$};
\node[problem] (i) at (2.25,-1.25) {$i$};
\node[problem] (j) at (4.55,-1.25) {$j$};
\node[stop] (ik) at (1.2,-2.5) {$ik$};
\node[problem] (ij) at (3.4,-2.5) {$ij$};
\node[stop] (jk) at (5.6,-2.5) {$jk$};
\node[stop] (ijk) at (3.4,-3.75) {$ijk$};
\draw[prolpath] (root)--(i);
\draw[qrolpath] (root)--(j);
\draw[rolpath] (i)--(ik);
\draw[prolpath] (i)--(ij);
\draw[qrolpath] (j)--(ij);
\draw[rolpath] (j)--(jk);
\draw[rolpath] (ij)--node[right,font=\scriptsize] {shared step} (ijk);

\node[problem] (troot) at (10.5,0) {$\varnothing$};
\node[problem] (ti) at (9.1,-1.25) {$i$};
\node[problem] (tj) at (11.9,-1.25) {$j$};
\node[stop] (tik) at (8.1,-2.5) {$ik$};
\node[problem,draw=linkteal] (tij1) at (9.8,-2.5) {$ij$};
\node[problem,draw=defectorange] (tij2) at (11.2,-2.5) {$ij$};
\node[stop] (tjk) at (12.9,-2.5) {$jk$};
\node[stop,draw=linkteal,fill=linkteal!7] (tijk1) at (9.8,-3.75) {$ijk$};
\node[stop,draw=defectorange,fill=defectorange!7] (tijk2) at (11.2,-3.75) {$ijk$};
\draw[prolpath] (troot)--(ti);
\draw[qrolpath] (troot)--(tj);
\draw[rolpath] (ti)--(tik);
\draw[prolpath] (ti)--(tij1);
\draw[qrolpath] (tj)--(tij2);
\draw[rolpath] (tj)--(tjk);
\draw[prolpath] (tij1)--(tijk1);
\draw[qrolpath] (tij2)--(tijk2);
\node[below=2pt,font=\small,text=linkteal] at (tijk1.south) {$p$};
\node[below=2pt,font=\small,text=defectorange] at (tijk2.south) {$q$};
\node[below=2pt,font=\small] at (tik.south) {$s$};
\node[below=2pt,font=\small] at (tjk.south) {$t$};
\node[font=\scriptsize,align=center] at (3.4,-4.8)
  {Circles: continue localization.\quad Boxes: stop.};
\node[font=\scriptsize,align=center] at (10.5,-4.8)
  {Separate copies of $ij$ and $ijk$ retain the path history.};
\end{tikzpicture}%
}
\[
\begin{aligned}
p=(\varnothing\to i\to ij\to ijk),\qquad
q=(\varnothing\to j\to ij\to ijk),\\
s=(\varnothing\to i\to ik),\qquad
t=(\varnothing\to j\to jk),\qquad
\mathcal P_{\mathrm{stop}}=\{p,q,s,t\}.
\end{aligned}
\]
\caption{Root paths and their prefixes in a chosen localization hierarchy.
On the left, paths $p$ and $q$ merge at the same index $ij$ and have
the same endpoint $t(p)=t(q)=ijk$. On the right, each tree vertex
represents an entire prefix, labelled only by its endpoint for brevity.
The internal prefixes are the empty path,
$(\varnothing\to i)$, $(\varnothing\to j)$,
$(\varnothing\to i\to ij)$, and $(\varnothing\to j\to ij)$.
A prefix shared by several stopped paths is counted once, but different
prefixes with the same endpoint remain distinct. Paths $s,t$ stop at
depth two, whereas $p,q$ stop at depth three.}
\label{fig:rol-root-paths}
\end{figure}

In particular, the common endpoint of $p$ and $q$ does not identify
their composite maps:
\[
R_p=R_{ijk\leftarrow ij}R_{ij\leftarrow i}R_{i\leftarrow\varnothing},
\qquad
R_q=R_{ijk\leftarrow ij}R_{ij\leftarrow j}R_{j\leftarrow\varnothing}.
\]
These maps need not agree. Their weights are likewise products along
their respective paths, so the two terminal contributions remain
separate in the decomposition of \cref{prop:rol-finite-depth}.
For the contrasting link construction, \Cref{fig:rol-link-paths}
shows two orders of taking links with the same endpoint and the same
facet-restriction map.

Concrete realizations of these nodes, spaces, and maps are given
in \cref{sec:rol-toolkit}.

\subsection{One-step localization and depth-two decomposition}
\label{sec:two-level-localization}
We first prove a one-step identity at an arbitrary starting node,
then decompose the root problem through depths one and two.
\Cref{fig:rol-one-two-depths} contrasts one-step assembly at an arbitrary
node with the depth-two decomposition of the root operator.

\begin{figure}[htbp]
\centering
\begin{tikzpicture}[
  x=1cm,y=1cm,
  rolvertex/.style={circle,draw=linkgray,fill=white,
    minimum size=7mm,inner sep=1pt,font=\small},
  roleaf/.style={rolvertex,draw=linkteal,fill=linkteal!7},
  roledge/.style={-{Stealth[length=1.7mm]},draw=linkgray,line width=.7pt},
  rolheading/.style={rolpanel,align=center},
  rollevel/.style={font=\scriptsize,text=linkgray,anchor=east}
]
\begin{scope}
\node[rolheading] at (0,0.85) {(a) One step};
\node[rollevel] at (-1.65,0) {$k$};
\node[rollevel] at (-1.65,-1.2) {$k+1$};
\node[rolvertex] (a0) at (0,0) {$\tau$};
\node[roleaf] (a1) at (-1,-1.2) {$\nu_1$};
\node[roleaf] (a2) at (1,-1.2) {$\nu_2$};
\draw[roledge] (a0)--(a1);
\draw[roledge] (a0)--(a2);
\node[font=\scriptsize,align=center,anchor=north] at (0,-2.85)
  {All steps from any starting node\\Depth $k\to k+1$};
\end{scope}
\begin{scope}[xshift=6.5cm]
\node[rolheading] at (0,0.85) {(b) Depth two};
\node[rollevel] at (-1.9,0) {$0$};
\node[rollevel] at (-1.9,-1.2) {$1$};
\node[rollevel] at (-1.9,-2.25) {$2$};
\node[rolvertex] (c0) at (0,0) {$\varnothing$};
\node[rolvertex] (c1) at (-1,-1.2) {$i_1$};
\node[rolvertex] (c2) at (1,-1.2) {$i_2$};
\node[roleaf] (c11) at (-1.5,-2.25) {};
\node[roleaf] (c12) at (-.5,-2.25) {};
\node[roleaf] (c21) at (.5,-2.25) {};
\node[roleaf] (c22) at (1.5,-2.25) {};
\draw[roledge] (c0)--(c1);
\draw[roledge] (c0)--(c2);
\draw[roledge] (c1)--(c11);
\draw[roledge] (c1)--(c12);
\draw[roledge] (c2)--(c21);
\draw[roledge] (c2)--(c22);
\node[font=\scriptsize,align=center,anchor=north] at (0,-2.85)
  {All length-two root paths\\Stop at depth $2$};
\end{scope}
\end{tikzpicture}
\caption{One-step localization and depth-two decomposition
(\cref{lem:rol-one-step,cor:rol-depth-two}).
Each node carries an effective problem. Teal nodes are the endpoints
where the displayed decomposition stops.
Panel (a) assembles all outgoing steps at any depth.
Panel (b) applies this operation at the root and at every depth-one node.
Two outgoing steps are shown only for illustration.
The tree layout separates paths for readability; their endpoints
need not be distinct nodes of the underlying hierarchy.}
\label{fig:rol-one-two-depths}
\end{figure}

\paragraph{Full problems for one step.}
Fix a starting node $\tau$. At $\tau$ and each subsequent node $\nu$,
supply a finite-dimensional full Hilbert space, a nonnegative quadratic
energy, and a surjective linear readout to the retained space.
Writing $\eta$ for any of these indices, these data are
$\HH_\eta$, $\Energy_\eta$, and
$\Gamma_\eta:\HH_\eta\to\II_\eta$. Define their effective operators by
\begin{equation}\label{eq:rol-one-step-effective-data}
\langle b,\LOp_\eta b\rangle_\eta
=\Eff_\eta(b)
:=\min_{\Gamma_\eta f=b}\Energy_\eta(f).
\end{equation}
For each specified step $a:\tau\to\nu$, supply a linear map
$r_a:\HH_\tau\to\HH_\nu$ in addition to the retained map
$R_a:\II_\tau\to\II_\nu$ and positive weight $\omega_a$.
All adjoints below use the specified retained-space inner products.
Compatibility is expressed by the commutative square in
\eqref{eq:intro-compatible-square}: either route from $\HH_\tau$ to
$\II_\nu$ gives the same vector, as required in
\eqref{eq:rol-one-step-compatibility}.
The maps $r_a,R_a$ need not be surjective or isometric, and the weights
need not sum to one. 
The maps and energy assembly
are supplied as data, not inferred from the effective operators.

\medskip
\begin{lemma}[One-step localization]\label{lem:rol-one-step}
Consider the effective problems
\[
(\II_\tau,\LOp_\tau),\qquad
\bigl\{(\II_\nu,\LOp_\nu)\bigr\}_{a:\tau\to\nu},
\]
obtained from the full energies and readouts specified above.
Assume compatible readouts and exact energy assembly:
\begin{align}
\Gamma_\nu r_a&=R_a\Gamma_\tau
\qquad(a:\tau\to\nu),
\label{eq:rol-one-step-compatibility}\\
\Energy_\tau(f)
&=\sum_{a:\tau\to\nu}\omega_a\Energy_\nu(r_a f)
\qquad(f\in\HH_\tau).
\label{eq:rol-one-step-full-assembly}
\end{align}
Then there is a positive semidefinite operator
$\DOp_\tau$ on $\II_\tau$ such that, for every $b\in\II_\tau$,
the effective energy decomposes as
\begin{equation}\label{eq:rol-one-step-energy}
\Eff_\tau(b)=\langle b,\LOp_\tau b\rangle_\tau
=\sum_{a:\tau\to\nu}\omega_a\Eff_\nu(R_a b)
+\langle b,\DOp_\tau b\rangle_\tau.
\end{equation}
Consequently, the effective operator has the form
\begin{equation}\label{eq:rol-one-step-lemma}
\LOp_\tau
=\sum_{a:\tau\to\nu}\omega_aR_a^*\LOp_\nu R_a
+\DOp_\tau.
\end{equation}
The sum includes every specified step out of $\tau$.
The remainder $\DOp_\tau$ is the \emph{gluing defect}.
\end{lemma}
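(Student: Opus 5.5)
The plan is to \emph{define} the gluing defect as the difference
\[
\Delta_\tau(b):=\Eff_\tau(b)-\sum_{a:\tau\to\nu}\omega_a\Eff_\nu(R_ab),
\]
show that it is a quadratic form, and then show that it is nonnegative. Its representing operator in the metric of $\II_\tau$ is $\DOp_\tau$. Once $\DOp_\tau\succeq0$ is established, \eqref{eq:rol-one-step-energy} holds by construction. The operator identity \eqref{eq:rol-one-step-lemma} then follows from $\Eff_\nu(R_ab)=\langle R_ab,\LOp_\nu R_ab\rangle_\nu=\langle b,R_a^*\LOp_\nu R_ab\rangle_\tau$ and uniqueness of representing operators for Hermitian forms, as in \eqref{eq:energy-representation}.

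For the quadratic-form property, \cref{lem:effective-energy-basic-facts} gives that each $\Eff_\eta$ is a positive semidefinite Hermitian quadratic form, with representing operator $\LOp_\eta$. The composition $b\mapsto\Eff_\nu(R_ab)$ is a Hermitian quadratic form with operator $R_a^*\LOp_\nu R_a$. A finite weighted sum and a difference of Hermitian forms is again Hermitian. So $\DOp_\tau:=\LOp_\tau-\sum_a\omega_aR_a^*\LOp_\nu R_a$ is self-adjoint and represents $\Delta_\tau$.

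The key step is nonnegativity, which is a ``min of a sum $\geq$ sum of mins'' argument. Fix $b\in\II_\tau$. By \cref{lem:effective-energy-basic-facts} choose $f\in\HH_\tau$ with $\Gamma_\tau f=b$ and $\Energy_\tau(f)=\Eff_\tau(b)$. By exact assembly \eqref{eq:rol-one-step-full-assembly},
\[
\Eff_\tau(b)=\sum_a\omega_a\Energy_\nu(r_af).
\]
By compatibility \eqref{eq:rol-one-step-compatibility}, $\Gamma_\nu(r_af)=R_a\Gamma_\tau f=R_ab$. So $r_af$ is feasible in the minimization defining $\Eff_\nu(R_ab)$, which gives $\Energy_\nu(r_af)\geq\Eff_\nu(R_ab)$; this is \eqref{eq:energy-effective-lower-bound} applied at $\nu$. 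Multiplying by $\omega_a>0$ and summing yields $\Delta_\tau(b)\geq0$, hence $\DOp_\tau\succeq0$.

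I expect no serious obstacle; the delicate point is bookkeeping. Parallel steps must be summed separately, and the common minimizer $f$ must be used for all branches simultaneously, which is exactly where the defect arises. Surjectivity of $\Gamma_\nu$ is used only to make $\Eff_\nu$ well defined via \cref{lem:effective-energy-basic-facts}. Surjectivity of $r_a$ and $R_a$ is not needed.
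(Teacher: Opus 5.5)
Your proposal is correct and follows essentially the paper's argument: compatibility makes $r_af$ feasible at each child, exact assembly then gives $\Energy_\tau(f)\geq\sum_a\omega_a\Eff_\nu(R_ab)$, and $\DOp_\tau$ is defined as the resulting positive semidefinite difference. The only cosmetic difference is that you evaluate at an attained parent minimizer, whereas the paper bounds $\Energy_\tau(f)$ for every feasible $f$ and then minimizes, so this step does not need attainment.
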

\medskip
\begin{proof}
Fix $b\in\II_\tau$. If $\Gamma_\tau f=b$, then
\eqref{eq:rol-one-step-compatibility} gives
$\Gamma_\nu r_a f=R_a b$. Applying
\eqref{eq:energy-effective-lower-bound} to each localized vector $r_a f$
and using exact energy assembly, we obtain
\[
\Energy_\tau(f)
=\sum_{a:\tau\to\nu}\omega_a\Energy_\nu(r_a f)
\geq\sum_{a:\tau\to\nu}\omega_a\Eff_\nu(R_a b).
\]
The right-hand side depends only on the fixed retained data $b$,
not on the choice of $f$ satisfying $\Gamma_\tau f=b$.
It therefore remains a lower bound when we minimize over all such $f$:
\[
\Eff_\tau(b)
=\min_{\Gamma_\tau f=b}\Energy_\tau(f)
\geq\sum_{a:\tau\to\nu}\omega_a\Eff_\nu(R_a b).
\]
Writing each effective energy in terms of its representing operator gives
\[
\langle b,\LOp_\tau b\rangle_\tau
\geq\sum_{a:\tau\to\nu}
\omega_a\langle R_a b,\LOp_\nu R_a b\rangle_\nu.
\]
Thus the self-adjoint operator
$\DOp_\tau:=\LOp_\tau-
\sum_{a:\tau\to\nu}\omega_aR_a^*\LOp_\nu R_a$
is positive semidefinite, proving \eqref{eq:rol-one-step-lemma}.
\end{proof}
We define the \emph{energy form of the gluing defect} by
\begin{equation}\label{eq:one-step-effective-assembly}
\Delta_\tau(b):=\langle b,\DOp_\tau b\rangle_\tau
=\Eff_\tau(b)-\sum_{a:\tau\to\nu}\omega_a\Eff_\nu(R_a b)
\geq0.
\end{equation}
It measures the excess of the minimum energy attainable from a common
full vector over the sum of independently minimized local energies.

\paragraph{Depth-one decomposition.}
Using the path data of \cref{def:rol-path-data}, collect the
depth-one contributions and the root defect:
\begin{equation}\label{eq:rol-depth-one-local-part}
\LOp_\varnothing^{[1]}
:=\sum_{p\in\mathcal P_1}\omega_pR_p^*\LOp_{t(p)}R_p,
\qquad \DOp_\varnothing^{[1]}:=\DOp_\varnothing.
\end{equation}
Both operators act on $\II_\varnothing$.
Bracketed superscripts record localization depth, not powers.

Under its hypotheses at the root, \cref{lem:rol-one-step} gives
\begin{equation}\label{eq:rol-root-assembly}
\LOp_\varnothing
=\LOp_\varnothing^{[1]}+\DOp_\varnothing^{[1]},
\qquad \DOp_\varnothing^{[1]}\succeq0.
\end{equation}
\paragraph{Depth-two decomposition.}
We now apply the one-step lemma twice, following all branches through
depth two. Each path visits three nodes, at depths $0,1,2$.
Collect the depth-two contributions and accumulated defect separately:
\begin{equation}\label{eq:rol-depth-two-local-part}
\LOp_\varnothing^{[2]}
:=\sum_{p\in\mathcal P_2}\omega_pR_p^*\LOp_{t(p)}R_p.
\end{equation}
\begin{equation}\label{eq:rol-depth-two-defect}
\DOp_\varnothing^{[2]}
:=\DOp_\varnothing^{[1]}
+\sum_{p\in\mathcal P_1}\omega_pR_p^*\DOp_{t(p)}R_p.
\end{equation}

\medskip
\begin{corollary}[Localization to depth two]\label{cor:rol-depth-two}
Assume the hypotheses of \cref{lem:rol-one-step} at the root and
at every depth-one problem. Then the effective operator satisfies
\begin{equation}\label{eq:rol-two-level-expansion}
\LOp_\varnothing=\LOp_\varnothing^{[2]}+\DOp_\varnothing^{[2]},
\qquad \DOp_\varnothing^{[2]}\succeq0.
\end{equation}
\end{corollary}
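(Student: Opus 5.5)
The plan is to apply \cref{lem:rol-one-step} twice: once at the root and once at every depth-one node. The two expansions are then combined by conjugating the second one along the depth-one maps.

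Applying the lemma at the root gives \eqref{eq:rol-root-assembly}, namely
\[
\LOp_\varnothing=\sum_{p\in\mathcal P_1}\omega_pR_p^*\LOp_{t(p)}R_p+\DOp_\varnothing .
\]
Here the depth-one paths $p$ correspond bijectively to the specified steps $a:\varnothing\to\nu$, with $\omega_p=\omega_a$, $R_p=R_a$ and $t(p)=\nu$. Parallel steps give distinct paths, so no term is merged or lost.

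Next, fix a depth-one path $p$ with endpoint $\nu=t(p)$. The hypotheses of \cref{lem:rol-one-step} hold at $\nu$, so
\[
\LOp_\nu=\sum_{a:\nu\to\xi}\omega_aR_a^*\LOp_\xi R_a+\DOp_\nu,
\qquad \DOp_\nu\succeq0 .
\]
I substitute this into the $p$-th term of the root expansion, using $(R_aR_p)^*=R_p^*R_a^*$. The term becomes
\[
\omega_pR_p^*\LOp_\nu R_p=\sum_{a:\nu\to\xi}\omega_p\omega_a(R_aR_p)^*\LOp_\xi(R_aR_p)+\omega_pR_p^*\DOp_\nu R_p .
\]
The pairs $(p,a)$ with $a$ outgoing from $t(p)$ are exactly the length-two root paths $q=(p,a)$. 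For these, $\omega_q=\omega_p\omega_a$ and $R_q=R_aR_p$ by \cref{def:rol-path-data}. Summing over $p\in\mathcal P_1$ therefore produces $\LOp_\varnothing^{[2]}$ from \eqref{eq:rol-depth-two-local-part}, plus the depth-one defect sum. Adding $\DOp_\varnothing^{[1]}$ recovers \eqref{eq:rol-depth-two-defect}, which gives the identity \eqref{eq:rol-two-level-expansion}.

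It remains to show $\DOp_\varnothing^{[2]}\succeq0$. The root defect is positive semidefinite by the lemma. Each congruence $R_p^*\DOp_{t(p)}R_p$ satisfies $\langle b,R_p^*\DOp_{t(p)}R_pb\rangle=\langle R_pb,\DOp_{t(p)}R_pb\rangle\ge0$. The weights are positive, and a positive combination of positive semidefinite operators is positive semidefinite.

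The argument contains no genuine analytic obstacle. The only point requiring care is the bookkeeping that identifies pairs (depth-one path, outgoing step) with $\mathcal P_2$. This identification must respect parallel steps and paths that share an endpoint, so that terms are neither merged nor double-counted. It must also handle depth-one nodes with no outgoing steps. In that case the lemma gives $\LOp_\nu=\DOp_\nu$ (empty sum), so the whole term is absorbed into the defect and the decomposition still holds as stated.
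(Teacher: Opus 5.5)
Your proposal is correct and follows the paper's own argument: apply \cref{lem:rol-one-step} at the root and at each depth-one node, substitute into \eqref{eq:rol-root-assembly}, combine weights and maps along length-two paths via \eqref{eq:rol-path-data}, and get positivity from the positively weighted pullbacks $R_p^*\DOp_{t(p)}R_p$ of positive semidefinite defects. Your remark on depth-one nodes with no outgoing steps is also sound; in that case the exact-assembly hypothesis actually forces $\LOp_\nu=0$, so that term simply vanishes.
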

\medskip
\begin{proof}
Apply \cref{lem:rol-one-step} at each depth-one problem to obtain
\begin{equation}\label{eq:rol-next-assembly}
\LOp_\tau=\sum_{a:\tau\to\nu}\omega_aR_a^*\LOp_\nu R_a
+\DOp_\tau,\qquad \tau\in\mathscr T_1.
\end{equation}
Substitute these identities into \eqref{eq:rol-depth-one-local-part}
and then \eqref{eq:rol-root-assembly}. Use
\eqref{eq:rol-path-data} to combine the maps along each path.
The remaining terms give \eqref{eq:rol-depth-two-defect}.
Each is positive semidefinite, so their sum is positive semidefinite.
\end{proof}
Writing $b_p:=R_pb$, the energy identity displays the contributions
at all three levels:
\begin{equation}\label{eq:rol-two-level-energy}
\Eff_\varnothing(b)
={}\sum_{p\in\mathcal P_2}\omega_p
\langle b_p,\LOp_{t(p)}b_p\rangle_{t(p)}
+\sum_{p\in\mathcal P_1}\omega_p
\langle b_p,\DOp_{t(p)}b_p\rangle_{t(p)}
+\langle b,\DOp_\varnothing^{[1]}b\rangle_\varnothing.
\end{equation}
In the second sum, $\DOp_{t(p)}$ is the one-step defect at the
depth-one node $t(p)$; the final term is the defect at the root.
Together, these two contributions equal
$\langle b,\DOp_\varnothing^{[2]}b\rangle_\varnothing$, with
$\DOp_\varnothing^{[2]}$ defined in \eqref{eq:rol-depth-two-defect}.

The identities preserve the operator contributions without replacing
them by scalar gap bounds; the quantitative comparison with the root
norm and kernel is developed in \cref{sec:rol-spectral-comparison}.

\subsection{Finite-depth localization}
\label{sec:finite-depth-localization}
We first follow all root paths to a common depth $k$, using the same
notation as at depths one and two. We then allow different paths to
stop at different depths. Throughout, paths remain distinct even when
their endpoint nodes agree.

\paragraph{Decomposition to a common depth.}
For an integer $0\leq k\leq h$, collect the depth-$k$ contributions
and the defects accumulated before reaching that depth:
\begin{equation}\label{eq:rol-common-depth-local-part}
\LOp_\varnothing^{[k]}
:=\sum_{p\in\mathcal P_k}\omega_pR_p^*\LOp_{t(p)}R_p,
\end{equation}
\begin{equation}\label{eq:rol-common-depth-defect}
\DOp_\varnothing^{[k]}
:=\sum_{r=0}^{k-1}\sum_{p\in\mathcal P_r}
\omega_pR_p^*\DOp_{t(p)}R_p.
\end{equation}
Both act on $\II_\varnothing$. At depth zero,
$\LOp_\varnothing^{[0]}=\LOp_\varnothing$ and
$\DOp_\varnothing^{[0]}=0$.
For $k=1,2$, these definitions recover the preceding formulas.

\medskip
\begin{proposition}[Localization to a common depth]
\label{prop:rol-common-depth}
Fix $0\leq k\leq h$. Suppose every root path of length less than $k$
can be extended by at least one arrow. At each of its endpoint nodes
$\tau$, assume
\[
\LOp_\tau=\sum_{a:\tau\to\nu}\omega_aR_a^*\LOp_\nu R_a+\DOp_\tau,
\qquad \DOp_\tau\succeq0.
\]
Then, for every $b\in\II_\varnothing$, the effective energy is
\begin{equation}\label{eq:rol-common-depth-energy}
\Eff_\varnothing(b)=\langle b,\LOp_\varnothing b\rangle_\varnothing
=\sum_{p\in\mathcal P_k}\omega_p\Eff_{t(p)}(R_p b)
+\langle b,\DOp_\varnothing^{[k]}b\rangle_\varnothing.
\end{equation}
Thus the effective operator satisfies
\begin{equation}\label{eq:rol-common-depth-identity}
\LOp_\varnothing
=\LOp_\varnothing^{[k]}+\DOp_\varnothing^{[k]},
\qquad \DOp_\varnothing^{[k]}\succeq0.
\end{equation}
The two terms are defined in
\eqref{eq:rol-common-depth-local-part}--\eqref{eq:rol-common-depth-defect}.
\end{proposition}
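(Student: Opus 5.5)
The plan is induction on $k$, proving the operator identity \eqref{eq:rol-common-depth-identity}; the energy form \eqref{eq:rol-common-depth-energy} then follows by pairing with $b$. For each path $p$ we use $\langle b,R_p^*\LOp_{t(p)}R_pb\rangle_\varnothing=\langle R_pb,\LOp_{t(p)}R_pb\rangle_{t(p)}=\Eff_{t(p)}(R_pb)$. The base case $k=0$ holds by the conventions stated after \eqref{eq:rol-common-depth-defect}: $\LOp_\varnothing^{[0]}=\LOp_\varnothing$ and $\DOp_\varnothing^{[0]}=0$. The cases $k=1,2$ agree with \eqref{eq:rol-root-assembly} and \cref{cor:rol-depth-two}, which serves as a sanity check.

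For the inductive step, assume the identity at depth $k-1<k$; the hypotheses for $k$ contain those for $k-1$. Every $p\in\mathcal P_{k-1}$ has an endpoint $\tau=t(p)$ at which the assumed one-step identity holds. Substitute it into each summand of $\LOp_\varnothing^{[k-1]}$:
\[
\omega_pR_p^*\LOp_{t(p)}R_p=\sum_{a:t(p)\to\nu}\omega_p\omega_a\,(R_aR_p)^*\LOp_\nu(R_aR_p)+\omega_pR_p^*\DOp_{t(p)}R_p .
\]
Write $p a$ for the path $p$ extended by the arrow $a$. By \eqref{eq:rol-path-data}, $R_{pa}=R_aR_p$ and $\omega_{pa}=\omega_p\omega_a$.

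The key bookkeeping step is to check that $(p,a)\mapsto pa$ is a bijection from pairs with $p\in\mathcal P_{k-1}$ and $a$ an outgoing step of $t(p)$ onto $\mathcal P_k$. Surjectivity holds because every path of length $k$ has a unique prefix of length $k-1$ and a unique final arrow. Injectivity holds because paths are recorded as arrow sequences, with parallel steps distinguished. Summing therefore gives
\[
\LOp_\varnothing^{[k-1]}=\LOp_\varnothing^{[k]}+\sum_{p\in\mathcal P_{k-1}}\omega_pR_p^*\DOp_{t(p)}R_p .
\]
Adding $\DOp_\varnothing^{[k-1]}$, the defect terms combine exactly into $\DOp_\varnothing^{[k]}$ as defined in \eqref{eq:rol-common-depth-defect}.

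Positivity of $\DOp_\varnothing^{[k]}$ follows because each term $\omega_pR_p^*\DOp_{t(p)}R_p$ is positive semidefinite: $\langle b,R_p^*\DOp R_pb\rangle=\langle R_pb,\DOp R_pb\rangle\ge0$ and $\omega_p>0$. The only real subtlety is the path/endpoint distinction. Sums must run over paths rather than nodes, so that merging paths such as $p,q$ in \cref{fig:rol-root-paths} contribute separately with their own $R_p,\omega_p$. The extendability hypothesis guarantees that no length-$(k-1)$ path is lost, since a dead-end path would contribute no term to $\mathcal P_k$ and the identity would fail.
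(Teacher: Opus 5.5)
Your proposal is correct and follows the paper's proof step for step. Both argue by induction on the depth, substitute the one-step identity into each path term of $\LOp_\varnothing^{[r]}$, use the unique decomposition of each path in $\mathcal P_{r+1}$ as $pa$ with $\omega_{pa}=\omega_p\omega_a$ and $R_{pa}=R_aR_p$, and absorb the new terms $\omega_pR_p^*\DOp_{t(p)}R_p$ into $\DOp_\varnothing^{[r+1]}$.

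One minor correction to your closing aside: a dead-end path would not make the identity fail. At a node with no outgoing steps, the assumed identity has an empty sum and reads $\LOp_\tau=\DOp_\tau$. That term then moves entirely into the defect, so the extendability hypothesis is a convenience the algebra does not need.
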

\medskip
\begin{proof}
The identity holds at $k=0$. To pass from depth $r$ to depth $r+1$,
substitute the one-step identity at $t(p)$ into each term of
$\LOp_\varnothing^{[r]}$. Appending an arrow $a$ to $p$ gives a path
$pa$ with
\[
\omega_{pa}=\omega_p\omega_a,\qquad R_{pa}=R_aR_p.
\]
Every path in $\mathcal P_{r+1}$ has exactly one such decomposition,
so
\[
\LOp_\varnothing^{[r]}
=\LOp_\varnothing^{[r+1]}
+\sum_{p\in\mathcal P_r}\omega_pR_p^*\DOp_{t(p)}R_p.
\]
Adding this last sum to $\DOp_\varnothing^{[r]}$ gives
$\DOp_\varnothing^{[r+1]}$ by \eqref{eq:rol-common-depth-defect}.
Induction proves \eqref{eq:rol-common-depth-identity}.
Every accumulated term is positive semidefinite because $\omega_p>0$.
\end{proof}

\paragraph{Localization with a stopping rule.}
Use the \hyperref[par:rol-stopping]{stopping paths and internal prefixes}
introduced above, illustrated in \Cref{fig:rol-root-paths}.
The following identity allows paths to stop at different depths.

\begin{proposition}[Localization with a stopping rule]
\label{prop:rol-finite-depth}
Use the hierarchy and path data of
\cref{def:rol-hierarchy,def:rol-path-data}, together with the
\hyperref[par:rol-stopping]{stopping rule above}.
Suppose that every index $\tau$ reached before stopping, at which
localization continues, satisfies
\begin{equation}\label{eq:rol-node-recursion}
\LOp_\tau=\sum_{a:\tau\to\nu}\omega_aR_a^*\LOp_{\nu}R_a+\DOp_\tau,
\qquad \DOp_\tau\succeq0.
\end{equation}
Then, for every $b\in\II_\varnothing$, the effective energy is
\begin{equation}\label{eq:rol-finite-depth-energy}
\begin{aligned}
\Eff_\varnothing(b)
&=\langle b,\LOp_\varnothing b\rangle_\varnothing\\
&=\sum_{p\in\mathcal P_{\mathrm{stop}}}\omega_p\Eff_{t(p)}(R_p b)
+\sum_{p\in\mathcal P_{\mathrm{int}}}\omega_p
\langle R_p b,\DOp_{t(p)}R_p b\rangle_{t(p)}.
\end{aligned}
\end{equation}
Consequently, the effective operator has the form
\begin{equation}\label{eq:rol-finite-depth-identity}
\LOp_\varnothing=
\sum_{p\in\mathcal P_{\mathrm{stop}}}
\omega_pR_p^*\LOp_{t(p)}R_p
+\sum_{p\in\mathcal P_{\mathrm{int}}}
\omega_pR_p^*\DOp_{t(p)}R_p.
\end{equation}
Every term acts on the root retained space $\II_\varnothing$.
No path independence of the composite maps is required.
\end{proposition}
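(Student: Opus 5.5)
The plan is to prove the operator identity \eqref{eq:rol-finite-depth-identity} by induction over the finite tree of visited paths, with the stopping rule replacing the common-depth argument of \cref{prop:rol-common-depth}. The energy identity \eqref{eq:rol-finite-depth-energy} then follows by pairing with $b$, using $\langle b,R_p^*\bm X R_p b\rangle_\varnothing=\langle R_pb,\bm X R_pb\rangle_{t(p)}$. Because every statement is indexed by arrow sequences rather than endpoint nodes, no path independence of $R_p$ is used. Each visited path $p$ carries its own composite $R_p=R_{a_k}\cdots R_{a_1}$ and weight $\omega_p$.

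Concretely, let $\mathcal V=\mathcal P_{\mathrm{stop}}\sqcup\mathcal P_{\mathrm{int}}$ be the visited paths. This union is disjoint: a stopped path is never extended, so it is not a proper prefix of another stopped path. The set $\mathcal V$ is finite, since the hierarchy is graded with depth at most $h$. For each visited path $q$, define the operator on $\II_{t(q)}$
\[
\bm M_q:=\sum_{s}\omega_s R_s^*\LOp_{t(qs)}R_s+\sum_{s'}\omega_{s'}R_{s'}^*\DOp_{t(qs')}R_{s'},
\]
where $s$ ranges over continuations with $qs\in\mathcal P_{\mathrm{stop}}$, and $s'$ over continuations with $qs'\in\mathcal P_{\mathrm{int}}$, the empty continuation included. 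Here $R_s,\omega_s$ denote the composite map and weight along the continuation. The claim to prove is $\bm M_q=\LOp_{t(q)}$ for every visited $q$. Taking $q$ to be the empty path gives exactly \eqref{eq:rol-finite-depth-identity}.

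The induction runs on the height of $q$, meaning the maximal remaining length to a stopped extension, which is finite. If $q$ is stopped, the only continuation is the empty one, so $\bm M_q=\LOp_{t(q)}$. If $q$ is internal, the rule says localization continues along \emph{every} outgoing step $a:t(q)\to\nu$. Each $qa$ is then visited and has smaller height. Splitting continuations by their first arrow, the empty continuation contributes $\DOp_{t(q)}$, and the remaining terms regroup as $\sum_a\omega_aR_a^*\bm M_{qa}R_a$, using $\omega_{as}=\omega_a\omega_s$ and $R_{as}=R_sR_a$. The inductive hypothesis and \eqref{eq:rol-node-recursion} at $t(q)$ then give $\bm M_q=\sum_a\omega_aR_a^*\LOp_\nu R_a+\DOp_{t(q)}=\LOp_{t(q)}$.

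The main obstacle is the bookkeeping in this regrouping. I need that each visited proper extension of an internal $q$ has a unique first arrow and that every outgoing arrow actually appears. That is, the visited paths form the unfolded tree of \Cref{fig:rol-root-paths}, where each internal prefix is counted once as a path even when several paths share an endpoint. I also need that a stopped path has no visited extensions. All three facts come directly from the stopping rule (stop, or continue along every step) and from paths being arrow sequences. Finally, positive semidefiniteness of the defect sum follows because $\omega_p>0$ and each $R_p^*\DOp_{t(p)}R_p\succeq0$.
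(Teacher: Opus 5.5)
Your proposal is correct and follows essentially the paper's argument. Both unfold the stopped paths into the prefix tree and expand each internal prefix's term $\omega_pR_p^*\LOp_{t(p)}R_p$ via \eqref{eq:rol-node-recursion}, using $\omega_{pa}=\omega_p\omega_a$, $R_{pa}=R_aR_p$, and the fact that every outgoing extension of an internal prefix is visited; the only difference is organizational, since you run a bottom-up induction on subtree height through the auxiliary operators $\bm M_q$, whereas the paper substitutes top-down from the root until only stopped terms remain.
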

In \eqref{eq:rol-finite-depth-identity}, the first sum collects terminal
energies and the second the defects at proper prefixes.
Both sums retain distinct paths even when their endpoints agree.
\begin{proof}
Unfold the stopped paths into the finite tree whose nodes are their
prefixes. Start with the empty path as the only term awaiting substitution,
representing $\LOp_\varnothing$. For a nonstopped prefix $p$ ending
at $\tau$, substitute \eqref{eq:rol-node-recursion} into
$\omega_pR_p^*\LOp_{\tau}R_p$. Since $\omega_{pa}=\omega_p\omega_a$ and $R_{pa}=R_aR_p$,
this replaces that term by the terms for its one-edge extensions $pa$
and the additional term $\omega_pR_p^*\DOp_{\tau}R_p$.
All outgoing extensions occur in the prefix tree: otherwise a maximal
path through a missing extension would violate the stopping rule.
Each substitution preserves the identity. After finitely many such
steps, the remaining local-operator terms are exactly those indexed by
$\mathcal P_{\mathrm{stop}}$, and each proper prefix has contributed
its defect once. This gives \eqref{eq:rol-finite-depth-identity}.
Adjoints compose in reverse order in the specified metrics.
\end{proof}

In ROL, \cref{lem:rol-one-step}, applied wherever localization continues, supplies
\eqref{eq:rol-node-recursion} from compatible local problems.
The decomposition itself uses only that operator identity.
Stopping at the root gives a tautology. If every path reaches depth
two and stops there, the formula is \eqref{eq:rol-two-level-expansion}.
The terminal operators record what remains to be estimated directly,
while the second sum retains all intermediate compatibility costs.
Neither positivity of a terminal gap nor control of all root directions
follows solely from continuing the decomposition. Quantitative spectral
estimates require additional input beyond this identity.

\subsection{From effective energies to spectral gaps}
\label{sec:rol-spectral-comparison}
The finite-depth identity determines the root effective energy from
terminal energies and accumulated defects. To turn this identity into
a gap estimate, we must identify its zero-energy space and control the
norm away from that space. We record these two consequences here;
the additional subspace geometry in \cref{sec:weighted-descent} will
supply a concrete mechanism for the required quantitative control.

\subsubsection{Kernel identification}
\label{sec:rol-kernels}
The next identity relates zero-energy retained values to full-space
zero-energy vectors.
\begin{lemma}[Kernel under elimination]
\label{prop:effective-kernel}
For the data of \cref{def:retained-energy},
\begin{equation}\label{eq:effective-kernel}
\ker\LOp=\Gamma(\ker\AOp).
\end{equation}
\end{lemma}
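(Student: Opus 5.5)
We prove the two inclusions separately. Throughout we use that $\LOp\succeq0$, so for $b\in\II$ we have $b\in\ker\LOp$ if and only if $\Eff(b)=\langle b,\LOp b\rangle_{\II}=0$. Likewise, $\AOp\succeq0$ gives $f\in\ker\AOp$ if and only if $\Energy(f)=0$. This turns the statement into one about zero-energy vectors, where the variational definition \eqref{eq:retained-effective-energy} can be applied directly.

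\emph{Inclusion $\Gamma(\ker\AOp)\subseteq\ker\LOp$.} Take $f\in\ker\AOp$ and set $b=\Gamma f$. Since $f$ is feasible for the minimization defining $\Eff(b)$,
\[
0\le\Eff(b)\le\Energy(f)=0 .
\]
Hence $\Eff(b)=0$, and positivity of $\LOp$ gives $b\in\ker\LOp$.

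\emph{Inclusion $\ker\LOp\subseteq\Gamma(\ker\AOp)$.} Take $b\in\ker\LOp$, so $\Eff(b)=0$. By \cref{lem:effective-energy-basic-facts} the minimum is attained, so there is $f\in\HH$ with $\Gamma f=b$ and $\Energy(f)=\Eff(b)=0$. Then $\|\AOp^{1/2}f\|^2=0$, which gives $\AOp f=0$. Thus $b=\Gamma f$ with $f\in\ker\AOp$.

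The only step that is not immediate is attainment of the minimum. It matters because an infimum of zero alone would not yield an exact zero-energy preimage. That step is already supplied by \cref{lem:effective-energy-basic-facts}, through the projection formula \eqref{eq:effective-projection-formula}. One could instead argue directly: in finite dimensions the minimizing preimage is the one obtained from $\operatorname{Proj}_U$. No assumption $\TOp\succ0$ is needed, so the argument also covers singular internal costs.
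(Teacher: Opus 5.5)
Your proof is correct and follows the paper's argument: feasibility of a vector in $\ker\AOp$ gives $\Gamma(\ker\AOp)\subseteq\ker\LOp$. For the reverse inclusion, you take the attained minimizer from \cref{lem:effective-energy-basic-facts} and use positivity of $\AOp$, exactly as the paper does. You also spell out the step the paper leaves implicit, that zero energy for a positive semidefinite operator means membership in its kernel.
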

\begin{proof}
If $b=\Gamma f$ with $f\in\ker\AOp$, then $\Eff(b)=0$.
Conversely, if $\LOp b=0$, choose an attained minimizing vector $f$
with $\Gamma f=b$. Its energy is zero, so $\AOp\succeq0$ implies
$f\in\ker\AOp$.
\end{proof}

Positivity also identifies the root kernel from the terms in the
decomposition, without providing a quantitative bound away from it.
\begin{lemma}[Recursive kernel identification]
\label{prop:rol-recursive-kernel}
Under the assumptions of \cref{prop:rol-finite-depth},
\begin{equation}\label{eq:rol-recursive-kernel}
\ker\LOp_\varnothing
=\bigcap_{p\in\mathcal P_{\mathrm{stop}}}
R_p^{-1}(\ker\LOp_{t(p)})
\ \cap\!\bigcap_{p\in\mathcal P_{\mathrm{int}}}
R_p^{-1}(\ker\DOp_{t(p)}).
\end{equation}
Here $R_p^{-1}$ denotes a subspace preimage, not an inverse map.
An intersection over no indices is the entire root retained space.
\end{lemma}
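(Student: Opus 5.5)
The plan is to start from the operator identity \eqref{eq:rol-finite-depth-identity} of \cref{prop:rol-finite-depth} and use only positivity. Write
\[
\LOp_\varnothing=\sum_{p\in\mathcal P_{\mathrm{stop}}}\omega_pR_p^*\LOp_{t(p)}R_p+\sum_{p\in\mathcal P_{\mathrm{int}}}\omega_pR_p^*\DOp_{t(p)}R_p,
\]
a finite sum of positive semidefinite operators on $\II_\varnothing$. Each summand is positive semidefinite because $\omega_p>0$, $\LOp_{t(p)}\succeq0$ by \cref{def:rol-effective-problem}, and $\DOp_{t(p)}\succeq0$ by hypothesis \eqref{eq:rol-node-recursion}. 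Conjugation by $R_p$ preserves positivity.

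The key elementary fact is this: for positive semidefinite $\bm M_1,\dots,\bm M_m$, one has $\ker\sum_j\bm M_j=\bigcap_j\ker\bm M_j$. The inclusion $\supseteq$ is immediate. For $\subseteq$, if $\sum_j\bm M_jb=0$, then $0=\sum_j\langle b,\bm M_jb\rangle$ with each term nonnegative, so each vanishes. For $\bm M\succeq0$, $\langle b,\bm Mb\rangle=\|\bm M^{1/2}b\|^2=0$ forces $\bm Mb=0$.

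Next I will identify each summand's kernel. I claim that for $\bm K\succeq0$ on $\II_{t(p)}$ and $c>0$, $\ker(cR_p^*\bm KR_p)=R_p^{-1}(\ker\bm K)$. The argument is that $\langle b,R_p^*\bm KR_pb\rangle_\varnothing=\langle R_pb,\bm KR_pb\rangle_{t(p)}$, which by the same square-root argument vanishes exactly when $\bm KR_pb=0$. This is where the adjoint convention from the specified metrics is used, and only through the defining relation of the adjoint.

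Combining these gives \eqref{eq:rol-recursive-kernel}, with stopped paths contributing $R_p^{-1}(\ker\LOp_{t(p)})$ and internal prefixes contributing $R_p^{-1}(\ker\DOp_{t(p)})$. When an index set is empty, the corresponding intersection is the whole space, which matches the convention. For example, stopping at the root gives $\mathcal P_{\mathrm{int}}=\varnothing$ and the identity is tautological. There is no real obstacle. The only point needing care is that the sums are indexed by paths rather than endpoints, so distinct paths with the same endpoint contribute separate preimages $R_p^{-1}$. This is harmless because the argument never merges terms.
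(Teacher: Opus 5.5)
Your proposal is correct and follows the paper's own proof: each term of \eqref{eq:rol-finite-depth-identity} is a positively weighted pullback of a PSD operator whose quadratic form vanishes exactly when $R_pb$ lies in the corresponding local kernel, and a finite sum of PSD operators has kernel equal to the intersection of their kernels. You also spell out the square-root step, $\langle b,\bm M b\rangle=0\Rightarrow \bm M b=0$, which the paper leaves implicit.
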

\begin{proof}
Every term of \eqref{eq:rol-finite-depth-identity} is PSD with a
positive path weight. Its quadratic form vanishes at $b$ exactly
when the corresponding $R_pb$ is in the kernel of the local operator.
The sum vanishes exactly when all these conditions hold.
\end{proof}

Together with \cref{prop:effective-kernel}, this reduces root kernel
identification to local zero-energy conditions when those are known.
Neither identity supplies a lower bound on positive eigenvalues,
and neither kernel formula is asserted for indefinite shifted forms.

\subsubsection{Spectral comparison from local lower bounds}
To use the decomposition quantitatively, we substitute available local
lower bounds for the exact effective operators. For example, a local
gap estimate provides the lower operator
\[
\LowerLOp_\tau
=\delta_\tau\operatorname{Proj}_{(\ker\LOp_\tau)^\perp}
\preceq\LOp_\tau.
\]
The underline denotes a supplied lower bound, not a new effective
problem. We likewise write $\LowerDOp_\tau$ for a lower bound on the
defect. The next lemma collects these bounds at the root and
states the norm control needed for a gap.

Use the hierarchy and stopping paths of \cref{prop:rol-finite-depth}.
Using the node sets in \eqref{eq:rol-stopping-node-sets}, choose PSD
operators on the corresponding retained spaces:
\[
\LowerLOp_\tau:\II_\tau\to\II_\tau
\quad(\tau\in\mathscr T_{\mathrm{stop}}),\qquad
\LowerDOp_\tau:\II_\tau\to\II_\tau
\quad(\tau\in\mathscr T_{\mathrm{int}}).
\]
They serve as lower bounds for terminal energies and one-step
defects, respectively; no recursion among these comparison operators
is required. Collect their weighted pullbacks on the root
space by defining
\begin{equation}\label{eq:rol-root-comparison}
\LowerLOp_\varnothing:=
\sum_{p\in\mathcal P_{\mathrm{stop}}}\omega_pR_p^*
\LowerLOp_{t(p)}R_p+
\sum_{p\in\mathcal P_{\mathrm{int}}}\omega_pR_p^*
\LowerDOp_{t(p)}R_p.
\end{equation}
Thus $\LowerLOp_\varnothing$ collects the proposed local lower bounds in
the root metric.

Put $\mathcal N_\varnothing:=\ker\LOp_\varnothing$ and let
$\operatorname{Proj}_{\mathcal N_\varnothing^\perp}$ denote the orthogonal projection
onto its orthogonal complement in $\II_\varnothing$.
To obtain a root effective gap, the assembled comparison must control
every direction in this complement.

\begin{lemma}[Comparison from localized lower bounds]
\label{prop:rol-spectral-comparison}
Assume the hypotheses of \cref{prop:rol-finite-depth} and the local bounds
\begin{equation}\label{eq:rol-comparison-inputs}
\LOp_{t(p)}\succeq\LowerLOp_{t(p)}
\quad(p\in\mathcal P_{\mathrm{stop}}),\qquad
\DOp_{t(p)}\succeq\LowerDOp_{t(p)}
\quad(p\in\mathcal P_{\mathrm{int}}).
\end{equation}
Then $\LOp_\varnothing\succeq\LowerLOp_\varnothing\succeq0$.
If, in addition, for some $\gamma>0$,
\begin{equation}\label{eq:rol-root-coverage}
\LowerLOp_\varnothing\succeq\gamma\operatorname{Proj}_{\mathcal N_\varnothing^\perp},
\end{equation}
then
\begin{equation}\label{eq:rol-effective-gap}
\langle b,\LOp_\varnothing b\rangle_{\II_\varnothing}
\geq\gamma\operatorname{dist}_{\II_\varnothing}
(b,\mathcal N_\varnothing)^2\qquad(b\in\II_\varnothing).
\end{equation}
If $\mathcal N_\varnothing^\perp\ne\{0\}$, this gives
$\operatorname{gap}(\LOp_\varnothing)\geq\gamma$.
\end{lemma}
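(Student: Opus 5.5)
The plan is to start from the exact identity of \cref{prop:rol-finite-depth},
\[
\LOp_\varnothing=
\sum_{p\in\mathcal P_{\mathrm{stop}}}\omega_pR_p^*\LOp_{t(p)}R_p
+\sum_{p\in\mathcal P_{\mathrm{int}}}\omega_pR_p^*\DOp_{t(p)}R_p,
\]
and compare it term by term with the definition \eqref{eq:rol-root-comparison} of $\LowerLOp_\varnothing$. The one fact needed is that congruence preserves the Loewner order: if $X\succeq Y$ on $\II_{t(p)}$, then for every $b\in\II_\varnothing$,
\[
\langle b,R_p^*(X-Y)R_pb\rangle_\varnothing=\langle R_pb,(X-Y)R_pb\rangle_{t(p)}\ge0.
\]
Here $R_p^*$ is the adjoint with respect to the specified inner products, which is exactly what makes this identity hold.

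Apply this to each stopped path with $X=\LOp_{t(p)}$, $Y=\LowerLOp_{t(p)}$, and to each internal prefix with $X=\DOp_{t(p)}$, $Y=\LowerDOp_{t(p)}$, using the hypotheses \eqref{eq:rol-comparison-inputs}. Multiplying by $\omega_p>0$ and summing over the finitely many paths gives $\LOp_\varnothing-\LowerLOp_\varnothing\succeq0$. The same congruence argument applied with $Y=0$ to the PSD operators $\LowerLOp_{t(p)},\LowerDOp_{t(p)}$ shows $\LowerLOp_\varnothing\succeq0$. A path whose endpoint also appears as an internal node causes no difficulty, since the sums are indexed by paths and each term is handled on its own.

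For the gap statement, combine $\LOp_\varnothing\succeq\LowerLOp_\varnothing$ with \eqref{eq:rol-root-coverage} to get $\LOp_\varnothing\succeq\gamma\operatorname{Proj}_{\mathcal N_\varnothing^\perp}$. Then
\[
\langle b,\LOp_\varnothing b\rangle\ge\gamma\|\operatorname{Proj}_{\mathcal N_\varnothing^\perp}b\|^2=\gamma\operatorname{dist}(b,\mathcal N_\varnothing)^2,
\]
where the equality holds because $\operatorname{Proj}_{\mathcal N_\varnothing^\perp}$ is an orthogonal projection, so $\langle b,\operatorname{Proj}b\rangle=\|\operatorname{Proj}b\|^2$, and the distance to a subspace is the norm of the orthogonal component. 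This is \eqref{eq:rol-effective-gap}. When $\mathcal N_\varnothing^\perp\neq\{0\}$, the equivalence between \eqref{eq:gap-coercivity} and \cref{def:spectral-gap} yields $\operatorname{gap}(\LOp_\varnothing)\ge\gamma$, since $\mathcal N_\varnothing=\ker\LOp_\varnothing$.

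None of these steps is a genuine obstacle. The only care needed is to use adjoints in the correct metrics and to keep path-indexed terms separate.
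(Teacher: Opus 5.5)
Your proposal is correct and matches the paper's proof. Both arguments subtract the comparison sum from the finite-depth identity, note that each difference is a positively weighted pullback $R_p^*(\cdot)R_p$ of a PSD operator, and then combine the result with the coverage hypothesis evaluated at $b$. Your extra details, namely the congruence identity with metric adjoints and $\langle b,\operatorname{Proj}b\rangle=\|\operatorname{Proj}b\|^2=\operatorname{dist}(b,\mathcal N_\varnothing)^2$, just spell out steps the paper leaves implicit.
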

\begin{proof}
Subtract \eqref{eq:rol-root-comparison} from
\eqref{eq:rol-finite-depth-identity}. Each difference is a positive
weighted pullback of a PSD operator by \eqref{eq:rol-comparison-inputs}.
Thus $\LOp_\varnothing\succeq\LowerLOp_\varnothing\succeq0$.
Combining this with \eqref{eq:rol-root-coverage} and evaluating at $b$
proves \eqref{eq:rol-effective-gap}.
\end{proof}

The choice $\LowerDOp_\tau=0$ is allowed, but defects can also
supply directions not controlled by the terminal terms.
Positive terminal coefficients alone do not establish
\eqref{eq:rol-root-coverage}. This is the substantive quantitative input:
it must follow from the local data and an independently identified
kernel, not from assuming the desired root gap.

The local lower bounds may be assembled as operators before being
replaced by scalar gap estimates, preserving their directional information.

\subsubsection{From the retained gap to the full-space gap}
\label{par:rol-root-full-gap}
The preceding comparison bounds the gap of the root retained operator
$\LOp_\varnothing$. When does this also bound the gap of the original
operator $\AOp_\varnothing$ on $\HH_\varnothing$, as sought in
\eqref{eq:global-target}?
Elimination at the root is optional, and two choices of retained space
and readout make the two gaps coincide:
\begin{itemize}
\item Take $\II_\varnothing=\HH_\varnothing$ with the same metric
and $\Gamma_\varnothing=\id$. Then
$\LOp_\varnothing=\AOp_\varnothing$.
\item If $\mathcal N=\ker\AOp_\varnothing$ is known, take
$\II_\varnothing=\mathcal N^\perp$ with its inherited metric and
$\Gamma_\varnothing=\operatorname{Proj}_{\mathcal N^\perp}$ as a map onto this space.
This removes only zero-energy directions. Every lift of $b$ is
$b+z$ with $z\in\mathcal N$, and its energy is
$\langle b,\AOp_\varnothing b\rangle$. Hence
$\LOp_\varnothing=\AOp_\varnothing|_{\mathcal N^\perp}$.
\end{itemize}
In either case, \eqref{eq:rol-effective-gap} therefore yields
\[
\operatorname{gap}(\AOp_\varnothing)
=\operatorname{gap}(\LOp_\varnothing)\geq\gamma,
\]
provided the positive spectrum is nonempty. Once the localization and
spectral-comparison hypotheses are verified, no additional full-space
comparison is needed.

For a general readout, directions carrying nonzero energy may also be
eliminated. A retained gap alone then need not control the full-space
gap: one must also estimate the energy cost of the eliminated directions
and compare the resulting norms with the original metric.
The response decomposition in \cref{prop:minimizing-response} provides
a starting point for this additional comparison.

The subspace descent below takes the defect lower bounds to be zero
and uses only the nonnegativity of the gluing defect.
An optional refinement exploiting the defect itself is developed in
\cref{sec:gluing-response-cost}; it is not needed for the subspace
descent.

\section{Concrete realizations of ROL}
\label{sec:rol-toolkit}
Two constructions connect the abstract framework to the later results.
Functions on links make effective energies and gluing defects explicit,
leading to the classical variance decomposition and down-up gap bounds
(\cref{sec:classical-dictionaries}).
Observable residuals provide the common-space geometry used in subspace
descent (\cref{sec:weighted-descent}) and connect it to classical spin
systems and spectral independence (\cref{sec:classical-si}).

\subsection{Functions on links: restriction and effective variance}
\label{sec:toolkit-links}
\label{sec:link-response-example}
Here a node $\tau$ is a face of a finite simplicial complex $X$.
It selects the facets containing that face, rather than necessarily
specifying values of physical variables. A retained vector assigns a
coefficient to each selected facet. Adding a vertex to the face keeps
only the coefficients of facets containing the enlarged face.

\begin{figure}[htbp]
\centering
\begin{tikzpicture}[scale=.95,
  edge/.style={draw=linkgray,line width=.7pt},
  selected/.style={draw=defectorange,line width=1.5pt},
  every node/.style={font=\small}]
\foreach \s/\heading in {0/{\varnothing},4.7/{\{v\}},9.4/{\{v,a\}}}{
  \begin{scope}[xshift=\s cm]
    \coordinate (v) at (0,0);
    \coordinate (a) at (-.15,1.15);
    \coordinate (b) at (1.25,.05);
    \coordinate (c) at (.1,-1.05);
    \coordinate (d) at (-1.2,-.12);
    \fill[linkteal!12] (v)--(a)--(b)--cycle;
    \fill[linkteal!12] (v)--(d)--(a)--cycle;
    \ifdim\s pt<9pt
      \fill[linkteal!12] (v)--(b)--(c)--cycle;
      \fill[linkteal!12] (v)--(c)--(d)--cycle;
    \else
      \fill[linkgray!6] (v)--(b)--(c)--cycle;
      \fill[linkgray!6] (v)--(c)--(d)--cycle;
    \fi
    \draw[edge] (a)--(b)--(c)--(d)--cycle;
    \foreach \q in {a,b,c,d}{\draw[edge] (v)--(\q);}
    \ifdim\s pt>9pt \draw[selected] (v)--(a);\fi
    \ifdim\s pt>0pt \fill[defectorange] (v) circle (2pt);\fi
    \node[above] at (a) {$a$};
    \node[right] at (b) {$b$};
    \node[below] at (c) {$c$};
    \node[left] at (d) {$d$};
    \node[below right] at (v) {$v$};
    \node at (.42,.43) {$1$};
    \node at (.48,-.4) {$2$};
    \node at (-.4,-.43) {$3$};
    \node at (-.48,.4) {$4$};
    \node at (0,1.95) {$\tau=\heading$};
  \end{scope}
}
\draw[-{Stealth},edge] (1.65,0)--(3.05,0)
  node[midway,above] {add $v$};
\draw[-{Stealth},edge] (6.35,0)--(7.75,0)
  node[midway,above] {add $a$};
\node at (0,-1.95) {$(f_1,f_2,f_3,f_4)$};
\node at (4.7,-1.95) {$(f_1,f_2,f_3,f_4)$};
\node at (9.4,-1.95) {$(f_1,f_4)$};
\end{tikzpicture}
\caption{Facet selection along one path. The facets are
$\sigma_1=vab$, $\sigma_2=vbc$, $\sigma_3=vcd$, and $\sigma_4=vda$;
shading highlights those containing the selected face.
All facets contain $v$, so the first restriction keeps every component.
The second keeps only facets $1$ and $4$. The pictures show selected
facets in the original complex, not drawings of the links themselves.
Each $f_j$ may be a vector, and its original facet label is preserved.}
\label{fig:toolkit-link-restriction}
\end{figure}

\paragraph{Spaces, metrics, and maps.}
\label{sec:link-spaces}
No purity assumption for the underlying simplicial complex is needed for the following construction.
Write $X_\tau=\lk_X(\tau)$ and set
\begin{equation}\label{eq:facet-index}
\Omega_\tau=\{\sigma:\sigma\text{ is a facet of }X,\ \tau\subseteq\sigma\}.
\end{equation}
The correspondence $\sigma\mapsto\sigma\setminus\tau$ identifies $\Omega_\tau$
with the facets of $\lk_X(\tau)$. We keep the original label $\sigma$ to
track the same components and weights across localization levels.

Fix a coefficient space $\C^q$, with $q\geq1$. Each facet carries a vector
$f_\sigma\in\C^q$; $q$ is the coefficient dimension, not the simplex
dimension. A variable in $\II_\tau$ is the collection
$f=(f_\sigma)_{\sigma\in\Omega_\tau}$. Thus $\II_\varnothing$ has one vector
for every facet, $\II_i$ retains facets containing $i$, and $\II_{ij}$ retains
facets containing both $i$ and $j$. This gives a concrete realization of
the retained-space hierarchy illustrated in \cref{tab:rol-hierarchy}.

Assign a positive definite Hermitian matrix $W_\sigma$ to each facet.
The matrices need not commute. Define
\begin{equation}\label{eq:link-space}
\II_\tau=\bigoplus_{\sigma\in\Omega_\tau}\C^q,
\qquad
\langle f,g\rangle_\tau=\sum_{\sigma\in\Omega_\tau}f_\sigma^*W_\sigma g_\sigma.
\end{equation}
Concretely, $\II_\tau=\operatorname{span}\{|\sigma,i\rangle:
\sigma\in\Omega_\tau,\ 1\leq i\leq q\}$, where $|\sigma,i\rangle$
represents the $i$-th coordinate at facet $\sigma$, with inner products
determined by the weights $W_\sigma$.
We use $\II_\tau$ because the facet values will be the variables retained
after an auxiliary mean is eliminated.

For $\tau\subseteq\nu$, the restriction
$R_{\nu\leftarrow\tau}:\II_\tau\to\II_{\nu}$ discards the components
whose facets do not contain $\nu$. Explicitly,
\begin{equation}\label{eq:link-restriction}
(R_{\nu\leftarrow\tau}f)_\sigma=f_\sigma\qquad(\sigma\in\Omega_{\nu}).
\end{equation}
Equivalently, the retained components define a function on the facets of
$X_\nu$, with value $f_{\nu\cup\eta}$ at $\eta$.
This is a concrete choice of retained-variable localization map of the
type introduced in \cref{lem:rol-one-step}.
The remaining weights are unchanged, so the adjoint is extension by zero.
Consequently,
\begin{equation}\label{eq:path-consistency}
R_{\xi\leftarrow\nu}R_{\nu\leftarrow\tau}
=R_{\xi\leftarrow\tau}
\qquad(\tau\subseteq\nu\subseteq\xi).
\end{equation}
Both the metric and the restriction at the final face are independent of the
path used to reach it.

\begin{remark}[Scalar cochain interpretation]
For $q=1$ on a pure complex with $d$ vertices per facet, a vertex
ordering identifies facet functions with $C^{d-1}(X;\C)$.
Relabeling a retained facet by $\sigma\setminus\tau$ and accounting
for orientation signs identifies the root-to-$\tau$ component map
with cochain localization to degree $d-1-|\tau|$
(\cref{def:cochain-localization}), not degree-preserving restriction.
This identifies the maps, not the variance energy with a cochain Laplacian.
\end{remark}

\paragraph{Weighted variance as an effective energy.}
Effective energy minimizes over the variables not retained by the
readout. Here we retain the facet values $f$ and minimize over a common
representative $\alpha$. The resulting energy measures the discrepancy
that remains after choosing the best representative.
For scalar values and positive scalar weights $w_\sigma$, this is
precisely the familiar least-squares characterization of variance:
\[
\min_{\alpha\in\C}\sum_{\sigma\in\Omega_\tau}
w_\sigma|f_\sigma-\alpha|^2
=W_\tau\operatorname{Var}_{\mu_\tau}(f),
\qquad
W_\tau=\sum_{\sigma\in\Omega_\tau}w_\sigma,\quad
\mu_\tau(\sigma)=\frac{w_\sigma}{W_\tau}.
\]
The best representative is $\mathbb E_{\mu_\tau}f$: geometrically,
the mean is the orthogonal projection onto constant functions, and
variance is the squared norm of the residual. This is the constant
case of the standard least-squares interpretation of conditional
expectation~\cite[Theorem~4.1.15]{durrett2019probability}.

The same minimization principle applies to vector-valued facet
functions and positive definite matrix weights. With a common
representative $\alpha\in\C^q$, define
\begin{equation}\label{eq:link-full-problem}
\HH_\tau=\II_\tau\oplus\C^q,\qquad \Gamma_\tau(f,\alpha)=f,\qquad
\Energy_\tau(f,\alpha)=\sum_{\sigma\in\Omega_\tau}
(f_\sigma-\alpha)^*W_\sigma(f_\sigma-\alpha).
\end{equation}
Use the facet metric on $\II_\tau$ and the Euclidean metric on $\C^q$.
The retained variable $b$ of the general construction is here the entire
function $f$, not the coefficient $\alpha$.

No purity assumption is needed for this minimization.
Put $W_\tau=\sum_{\sigma\in\Omega_\tau}W_\sigma$ and define the
weighted mean and variance by
\begin{align}
\Av_\tau(f)&=W_\tau^{-1}\sum_{\sigma\in\Omega_\tau}W_\sigma f_\sigma,
\label{eq:mean}\\
\Var_\tau(f)&=\min_{\alpha\in\C^q}
\sum_{\sigma\in\Omega_\tau}(f_\sigma-\alpha)^*W_\sigma(f_\sigma-\alpha).
\label{eq:variance}
\end{align}
The variance here is unnormalized, as in the scalar formula above.
The following lemma identifies the minimizing representative and
the effective energy, and records the extra cost of any other choice.

\begin{lemma}[Weighted variance as an effective energy]
\label{lem:link-effective-variance}
For the full problem \eqref{eq:link-full-problem} and each
$f\in\II_\tau$, the unique minimizing representative and the effective
energy are
\[
\alpha_{\min}=\Av_\tau(f),\qquad
\Eff_\tau(f)=\Var_\tau(f).
\]
More precisely, for every $\alpha\in\C^q$,
\begin{equation}\label{eq:mean-response}
\Energy_\tau(f,\alpha)
=\Var_\tau(f)+(\alpha-\Av_\tau(f))^*W_\tau(\alpha-\Av_\tau(f)).
\end{equation}
\end{lemma}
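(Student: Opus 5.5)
The plan is to prove the completing-the-square identity \eqref{eq:mean-response} directly. Both claims about the minimizer and the effective energy then follow at once. Fix $f\in\II_\tau$ and abbreviate $m:=\Av_\tau(f)$. First I will note that $W_\tau=\sum_{\sigma\in\Omega_\tau}W_\sigma$ is positive definite, hence invertible, so $m$ is well defined. This uses $\Omega_\tau\neq\varnothing$: every face $\tau$ of a finite complex lies in at least one facet. By the defining relation \eqref{eq:mean}, the weighted residuals balance:
\[
\sum_{\sigma\in\Omega_\tau}W_\sigma(f_\sigma-m)=W_\tau m-W_\tau m=0 .
\]

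Next, for arbitrary $\alpha\in\C^q$, I will write $f_\sigma-\alpha=(f_\sigma-m)-(\alpha-m)$ and expand each Hermitian form $(f_\sigma-\alpha)^*W_\sigma(f_\sigma-\alpha)$. Summing over $\sigma$ gives three kinds of terms. The first is $\sum_\sigma(f_\sigma-m)^*W_\sigma(f_\sigma-m)$. The second is $(\alpha-m)^*W_\tau(\alpha-m)$. The third consists of the cross terms $-2\operatorname{Re}\,(\alpha-m)^*\sum_\sigma W_\sigma(f_\sigma-m)$. Here I use that each $W_\sigma$ is Hermitian, so $(f_\sigma-m)^*W_\sigma(\alpha-m)=\overline{(\alpha-m)^*W_\sigma(f_\sigma-m)}$. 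The cross terms vanish by the balance identity, so $\Energy_\tau(f,\alpha)=\Energy_\tau(f,m)+(\alpha-m)^*W_\tau(\alpha-m)$.

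Finally, since $W_\tau\succ0$, the second term is nonnegative and vanishes only when $\alpha=m$. Hence $m$ is the unique minimizer, and by \eqref{eq:variance} the minimum value $\Energy_\tau(f,m)$ equals $\Var_\tau(f)$. This establishes \eqref{eq:mean-response}. By \cref{def:retained-energy}, with $\Gamma_\tau(f,\alpha)=f$, the effective energy is $\min_\alpha\Energy_\tau(f,\alpha)=\Var_\tau(f)$, as claimed.

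Alternatively, the same result is an instance of \cref{prop:minimizing-response} with retained variable $f$, internal variable $\alpha$, $\TOp=W_\tau$, $B f=\sum_\sigma W_\sigma f_\sigma$ and response $h=\TOp^{-1}B=\Av_\tau$. The direct computation is cleaner, however. The only step needing care is the cross-term cancellation with noncommuting matrix weights: the Hermitian symmetry of each $W_\sigma$ must be used so that the two cross terms are complex conjugates. No obstacle beyond this is expected.
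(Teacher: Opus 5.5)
Your proof is correct and follows essentially the paper's argument: the same splitting $f_\sigma-\alpha=(f_\sigma-\Av_\tau(f))-(\alpha-\Av_\tau(f))$, with the cross terms cancelled by $\sum_\sigma W_\sigma(f_\sigma-\Av_\tau(f))=0$. The only difference is the order of steps: the paper first identifies the minimizer through the stationarity equation and strict convexity and then proves the identity, whereas you read off uniqueness and the minimum value directly from the identity, which makes the derivative step unnecessary (you also correctly note that $\Omega_\tau\neq\varnothing$ is what guarantees $W_\tau\succ0$).
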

\begin{proof}
We minimize $\Energy_\tau(f,\alpha)$ over $\alpha$ with $f$ fixed.
For any direction $h\in\C^q$ and real parameter $t$,
\[
\left.\frac{d}{dt}\Energy_\tau(f,\alpha+th)\right|_{t=0}
=2\operatorname{Re}\left[
h^*\left(W_\tau\alpha-\sum_{\sigma\in\Omega_\tau}W_\sigma f_\sigma\right)
\right],
\]
where $W_\tau=\sum_{\sigma\in\Omega_\tau}W_\sigma$.
Requiring this derivative to vanish for every $h$ gives the
stationary equation
\[
W_\tau\alpha=\sum_{\sigma\in\Omega_\tau}W_\sigma f_\sigma.
\]
Since $W_\tau\succ0$, the energy is strictly convex in $\alpha$, so
its unique minimizer is $\alpha=\Av_\tau(f)$.
The readout fixes $f$ and leaves precisely $\alpha$ free; hence its
minimum is both $\Eff_\tau(f)$ and $\Var_\tau(f)$.
For the final identity, write
\[
f_\sigma-\alpha
=(f_\sigma-\Av_\tau(f))-(\alpha-\Av_\tau(f)).
\]
Expanding the quadratic form and summing cancels the cross terms because
$\sum_\sigma W_\sigma(f_\sigma-\Av_\tau(f))=0$, which gives
\eqref{eq:mean-response}.
\end{proof}

The two terms in \eqref{eq:mean-response} separate the variation that
remains even at the optimal representative from the extra cost of choosing
a representative away from the mean. This extra-cost formula will also
measure the gluing defect below, when the child links have different
optimal means but must share one representative.

To identify the remaining ROL data, use the lift $f\mapsto(f,0)$;
the response decomposition then reads
\begin{equation}\label{eq:link-response-dictionary}
\Eff_\tau(f)=\Var_\tau(f),\qquad
h_\tau f=\Av_\tau(f),\qquad \TOp_\tau=W_\tau.
\end{equation}
The last equality uses the Euclidean internal metric.
Thus the mean is the minimizing response; $\Gamma_\tau$ only reads
the retained facet values.

The effective operator has a simple projection form. Define
\[
(\Pi_\tau^{\mathrm{const}}f)_\sigma:=\Av_\tau(f)
\qquad(\sigma\in\Omega_\tau).
\]
This is the orthogonal projection onto constant functions in the
weighted metric \eqref{eq:link-space}. Hence
\begin{equation}\label{eq:variance-operator}
\LOp_\tau=I-\Pi_\tau^{\mathrm{const}}.
\end{equation}
No commutation of the matrices $W_\sigma$ is used here.

\paragraph{Compatible localization and the gluing defect.}
We next connect these effective energies across one localization step.
Each child can minimize over its own representative, whereas localization
of a parent variable passes the same representative to every child.
The gluing defect records the cost of this shared choice.
For the counting and variance identities below, assume that $X$ is pure
with $d$ vertices in each facet, so its dimension is $d-1$.
Write $\tau i=\tau\cup\{i\}$ for a legal one-vertex extension and let
$m_\tau=d-|\tau|>0$ be the remaining rank. Each facet containing $\tau$ occurs in exactly
$m_\tau$ such extensions, hence
\begin{equation}\label{eq:incidence}
\frac1{m_\tau}\sum_i R_i^*R_i=I,
\qquad R_i=R_{\tau i\leftarrow\tau}.
\end{equation}
The full-variable localization keeps the internal coefficient:
\begin{equation}\label{eq:link-compatible-assembly}
r_i(f,\alpha)=(R_i f,\alpha),\qquad \Gamma_{\tau i}r_i=R_i\Gamma_\tau,\qquad
\Energy_\tau(f,\alpha)=\frac1{m_\tau}\sum_i\Energy_{\tau i}(R_i f,\alpha).
\end{equation}
These data satisfy \cref{lem:rol-one-step}.
The weights $1/m_\tau$ count facet incidences; they need not sum to
one over all legal vertices.
Set $\alpha=\Av_\tau(f)$ and $\alpha_i=\Av_{\tau i}(R_i f)$. Applying
\eqref{eq:mean-response} on each one-step link and counting facets yields
\begin{align}
\Var_\tau(f)&=\frac1{m_\tau}\sum_i\Var_{\tau i}(R_i f)
+\GG_\tau(f),\label{eq:link-splitting}\\
\GG_\tau(f)&=\frac1{m_\tau}\sum_i
(\alpha_i-\alpha)^*W_{\tau i}(\alpha_i-\alpha)\geq0.\label{eq:link-defect}
\end{align}
Equivalently, the response-cost formula of
\cref{prop:gluing-response-cost} minimizes the cost of fitting a common
representative to the child means:
\begin{equation}\label{eq:link-gluing-minimum}
\Delta_\tau(f)=\GG_\tau(f)
=\min_{c\in\C^q}\frac1{m_\tau}\sum_i(c-\alpha_i)^*W_{\tau i}(c-\alpha_i).
\end{equation}
Facet counting gives $\sum_iW_{\tau i}=m_\tau W_\tau$ and
$\sum_iW_{\tau i}\alpha_i=m_\tau W_\tau\Av_\tau(f)$, so the minimizer is
$c=\Av_\tau(f)$. The defect vanishes exactly when all local means
agree. It is a positive energy contribution, not necessarily an error
to be made small.

This example realizes the ROL decomposition, but does not by itself
provide the common-metric subspace hierarchy required for geometric
descent. Its scalar specialization yields the law of total variance;
\cref{sec:classical-dictionaries} develops this connection and the
resulting classical gap bounds.

\subsection{Observable spaces: passing a residual to the next node}
\label{sec:toolkit-observables}
We next construct a common-space localization hierarchy in which
enlarging the observable space removes newly observable components
and passes the remaining residual to the next node.
Here we choose to label each node by a set of sites, without specifying
their values. In one fixed Hilbert space $\mathcal K$, choose nested
observable spaces $\mathcal A_\tau$, with
$\mathcal A_\tau\subseteq\mathcal A_\nu$ whenever $\tau\subseteq\nu$,
and let $\Pi_\tau^{\mathrm{obs}}$ project onto
$\mathcal A_\tau$. Set $\II_\tau=\mathcal A_\tau^\perp$ and
$\Pi_\tau=I-\Pi_\tau^{\mathrm{obs}}$. For $\tau\subseteq\nu$, the retained map is
$R_{\nu\leftarrow\tau}=\Pi_\nu|_{\II_\tau}$.

\begin{figure}[htbp]
\centering
\begin{tikzpicture}[every node/.style={font=\small}]
\draw[draw=linkgray,fill=linkgray!8] (0,0) rectangle (3,1);
\draw[draw=linkgray,fill=defectorange!15] (3,0) rectangle (6,1);
\draw[draw=linkgray,fill=linkteal!12] (6,0) rectangle (9,1);
\node at (1.5,.5) {$\mathcal A_\tau$};
\node at (4.5,.5) {$\mathcal A_\nu\cap\mathcal A_\tau^\perp$};
\node at (7.5,.5) {$\II_\nu$};
\draw[linkgray] (3,1.2)--(3,1.4)--(9,1.4)--(9,1.2);
\node at (6,1.75) {current residual space $\II_\tau$};
\node at (1.5,-.45) {already removed};
\node at (4.5,-.45) {newly removed};
\node at (7.5,-.45) {passed onward};
\end{tikzpicture}
\caption{One localization in a common metric. The three blocks are
orthogonal subspaces of $\mathcal K$, not tensor factors or dimensions
drawn to scale. Enlarging the observable space shrinks the residual.}
\label{fig:toolkit-residual}
\end{figure}

\paragraph{The meaning of one step.}
For $b\in\II_\tau$, the orthogonal decomposition is
\[
b=(\Pi_\nu^{\mathrm{obs}}-\Pi_\tau^{\mathrm{obs}})b+\Pi_\nu b.
\]
The first component is newly representable on the enlarged site set;
the second is the residual passed to the child. Adding a site can
remove correlations with previously selected sites, not only an
observable of the new site alone. Nested projections compose along
paths, although projections at incomparable nodes need not commute.

Classically, take $\mathcal K=L^2(\mu)$ and let $\mathcal A_\tau$
be the functions of $X_\tau$. Then $\Pi_\tau^{\mathrm{obs}}$ is conditional expectation.
Here $f-\mathbb E[f\mid X_\tau]$ retains residuals across all boundary
values, instead of fixing a single assignment $X_\tau=s$.
The classical construction is detailed in \cref{sec:classical-configurations}.
For qubits or qudits, one can similarly take local observable algebras
in one fixed KMS Hilbert space, as discussed in
\cref{par:outlook-gibbs}.
This specifies the spaces and localization maps; compatible energies
and the additional assumptions needed for descent must be supplied
separately.

\section{Descent of Spectral Gaps from Subspace Geometry}
\label{sec:weighted-descent}
To propagate local energy bounds into a global spectral estimate,
we control how much norm is retained under localization. We study this
question for nested subspaces indexed by a weighted simplicial complex.
The complex need not be complete, and different outgoing steps may
have different weights. The result combines three quantitative inputs:
overlap bounds near the stopping level, spectral gaps of the link
walks, and terminal energy lower bounds.

We first specify the hierarchy and its geometric and energy inputs,
then state and prove the gap theorem. The complete uniform specialization
gives an explicit coefficient in \cref{cor:weighted-complete-gap}.

\subsection{Links, spaces, and localization maps}
\label{sec:weighted-hierarchy}
Let $X$ be a finite pure simplicial complex whose facets have
cardinality $d\geq2$. Give each facet $F$ a positive weight $w_F$.\footnote{These facet weights are distinct from the ROL assembly weights $\omega_a$; their relation is discussed in \cref{sec:descent-rol-energy}.}
For a face $\tau$, define its total extension weight and remaining
rank by
\begin{equation}\label{eq:weighted-face-mass}
W_\tau:=\sum_{F\supseteq\tau}w_F,
\qquad m_\tau:=d-|\tau|.
\end{equation}
Write $N_\tau=\{i:\tau i\in X,\ i\notin\tau\}$ for the vertices
of $\lk_X(\tau)$. The number $m_\tau$ counts the vertices still
needed to complete a facet. It need not equal $|N_\tau|$.

The ROL label graph has faces as nodes and legal inclusions as arrows:
\[
\mathscr T_k=\{\tau\in X:|\tau|=k\},
\qquad a:\tau\longrightarrow\tau i\quad(i\in N_\tau).
\]
We stop one level before the facets, so the visited nodes are
\begin{equation}\label{eq:weighted-stopping}
\mathscr T_{\mathrm{stop}}=\mathscr T_{d-1},\qquad
\mathscr T_{\mathrm{int}}=\bigsqcup_{k=0}^{d-2}\mathscr T_k,
\qquad
\mathscr T=\mathscr T_{\mathrm{int}}\sqcup\mathscr T_{\mathrm{stop}}.
\end{equation}
Thus $m_\tau=1$ at a stop. A root-to-stop path
$p=(\tau_0=\varnothing,\tau_1,\ldots,\tau_{d-1})$ records an
ordered sequence of legal additions, with endpoint $t(p)=\tau_{d-1}$.
Different paths can reach the same node. No space or energy at a
facet is needed for the argument.

The deepest internal level consists of the parents of stopping nodes:
\begin{equation}\label{eq:weighted-bottom-level}
\mathscr T_{\mathrm{bot}}:=\mathscr T_{d-2}
=\{\tau\in\mathscr T_{\mathrm{int}}:m_\tau=2\}.
\end{equation}
We write $\tau_\star$ for a node in this level when emphasizing the
bottom geometric input; its children $\tau_\star i$ are stopping nodes.
The undecorated symbol $\tau$ continues to denote a general node.

\subsubsection{Weights and walks on links}
At each internal node $\tau$, a classical weighted random walk on
the vertices of the link $\lk_X(\tau)$ measures the connectivity
among the outgoing localization labels. Its spectral gap enters
the one-step Gram estimate used in the descent.
The walk has edge weights $W_{\tau ij}$ on the $1$-skeleton of the
link. It can be viewed as a non-lazy
up--down walk: from a vertex, choose an incident edge proportionally
to its weight, then move to its other endpoint
(\Cref{fig:link-up-down-walk}). The outgoing probability for the ROL
hierarchy and the transition matrix of this link walk are
\begin{equation}\label{eq:weighted-link-walk}
\pi_\tau(i):=\frac{W_{\tau i}}{m_\tau W_\tau},\qquad
P_\tau(i,j):=\frac{W_{\tau ij}}{(m_\tau-1)W_{\tau i}}
\quad(i\ne j),\qquad P_\tau(i,i):=0.
\end{equation}
Here $W_{\tau ij}=0$ when $\tau ij\notin X$.
For the complete uniform hierarchy, explicit formulas for $\pi_\tau$,
$P_\tau$, and the link-walk gap are given in
\eqref{eq:weighted-complete-parameters} below.

\begin{figure}[htbp]
\centering
\resizebox{.98\linewidth}{!}{%
\begin{tikzpicture}[x=1cm,y=1cm,font=\small,
  >=Stealth,line cap=round,line join=round]
\node[rolpanel] at (3.6,3.5) {(a) In the link};
\node[rolpanel] at (12,3.5) {(b) In the original complex};
\fill[linkteal] (0,0) circle (2.5pt);
\node[below=5pt] at (0,0) {$i$};
\draw[thick,linkgray] (3,1.5) -- (4.2,1.5);
\fill[linkteal] (3,1.5) circle (2.5pt);
\fill[defectorange] (4.2,1.5) circle (2.5pt);
\node[above=6pt] at (3.6,1.5) {$\{i,j\}$};
\fill[defectorange] (7.2,0) circle (2.5pt);
\node[below=5pt] at (7.2,0) {$j$};
\draw[->,thick,linkteal] (.3,.15) -- (2.65,1.32)
  node[midway,above,sloped,font=\small\sffamily] {up};
\draw[->,thick,defectorange] (4.55,1.32) -- (6.9,.15)
  node[midway,above,sloped,font=\small\sffamily] {down};
\node[font=\scriptsize\sffamily,align=center] at (1.7,-.85)
  {Choose an incident edge\\with probability proportional to $W_{\tau ij}$};
\node[font=\scriptsize\sffamily,align=center] at (5.65,-.85)
  {Move to the\\other endpoint};
\foreach \x/\y/\rx/\ry in {9/0/.72/.62,12/1.5/.84/.72,15/0/.72/.62}{
  \begin{scope}[shift={(\x,\y)}]
    \draw[linkgray!55,fill=linkgray!3] (0,.325) ellipse (\rx cm and \ry cm);
    \foreach \s in {-.4,0,.4}
      \fill[linkgray!65] (\s,0) circle (2pt);
  \end{scope}
}
\fill[linkteal] (9,.65) circle (2.5pt);
\fill[linkteal] (11.6,2.15) circle (2.5pt);
\fill[defectorange] (12.4,2.15) circle (2.5pt);
\fill[defectorange] (15,.65) circle (2.5pt);
\node[below] at (9,-.9) {$\tau\cup\{i\}$};
\node[above] at (12,2.55) {$\tau\cup\{i,j\}$};
\node[below] at (15,-.9) {$\tau\cup\{j\}$};
\draw[->,thick,linkteal] (9.8,.5)--(11,1.45)
  node[midway,above,sloped,font=\small\sffamily] {add $j$};
\draw[->,thick,defectorange] (13,1.45)--(14.2,.5)
  node[midway,above,sloped,font=\small\sffamily] {remove $i$};
\node[font=\scriptsize\sffamily] at (12,-.85)
  {The vertices of $\tau$ stay fixed};
\end{tikzpicture}}
\caption{One step of the link walk $P_\tau$: vertex to edge to the
other vertex (left), or equivalently
$\tau\cup\{i\}\to\tau\cup\{i,j\}\to\tau\cup\{j\}$ in the original
complex (right). The gray groups schematically represent the vertices
of the same fixed face $\tau$; this is not a
path descending the ROL hierarchy. Choosing either endpoint uniformly
in the down step would instead give the lazy walk
$(\id+P_\tau)/2$.}
\label{fig:link-up-down-walk}
\end{figure}

Equivalently, $\pi_\tau$ is sampled by choosing a facet $F$ containing
$\tau$ proportionally to its weight, then a vertex of $F\setminus\tau$
uniformly. To sample $P_\tau(i,\cdot)$, choose a facet $F$ containing
$\tau i$ proportionally to its weight, then choose $j$ uniformly from
$F\setminus(\tau\cup\{i\})$. The facet is sampled afresh at each step.
Counting the remaining vertices in each facet gives
\[
\sum_iW_{\tau i}=m_\tau W_\tau,\qquad
\sum_{j\ne i}W_{\tau ij}=(m_\tau-1)W_{\tau i}.
\]
Consequently $\pi_\tau$ is a probability distribution and $P_\tau$
is a Markov matrix. It is reversible, since
\begin{equation}\label{eq:weighted-link-detailed-balance}
\pi_\tau(i)P_\tau(i,j)
=\frac{W_{\tau ij}}{m_\tau(m_\tau-1)W_\tau}
=\pi_\tau(j)P_\tau(j,i).
\end{equation}
Moreover, on each legal edge, the transition distribution from $i$ in
the parent's link agrees with the outgoing distribution at the child
$\tau i$:
\begin{equation}\label{eq:weighted-link-consistency}
P_\tau(i,j)=\pi_{\tau i}(j)\qquad(m_\tau\geq3).
\end{equation}
This consistency is why one common set of facet
weights is used, rather than unrelated choices at different nodes.
Even equal facet weights need not give uniform outgoing probabilities:
the numbers of containing facets may differ between extensions.

Assume that each link graph with $m_\tau\geq2$ is connected.
For $m_\tau\geq3$, let $\lambda_\tau\in(0,2]$ be a specified
lower bound on the spectral gap of the classical link walk $P_\tau$,
measured by its normalized Laplacian $\id-P_\tau$:
\begin{equation}\label{eq:weighted-link-gap}
\operatorname{gap}(\id-P_\tau)
=1-\beta_2(P_\tau)\geq\lambda_\tau>0
\quad\text{in }L^2(\pi_\tau).
\end{equation}
Here $\beta_2(P_\tau)$ denotes the second-largest eigenvalue. This is not the absolute
spectral gap, and bipartite link graphs are allowed.

\subsubsection{The common-metric subspace hierarchy}
Fix a finite-dimensional Hilbert space $\mathcal K$. At every visited
face choose a linear subspace $\II_\tau\subseteq\mathcal K$, with
its inherited metric. The structural requirements are
\begin{equation}\label{eq:weighted-space-conditions}
\II_{\tau i}\subseteq\II_\tau,
\qquad \sum_{i\in N_\tau}\II_{\tau i}=\II_\tau
\qquad(\tau\in\mathscr T_{\mathrm{int}}).
\end{equation}
The sum denotes linear span, not a direct sum: the child spaces may
overlap and need not be mutually orthogonal. Thus every vector in
$\II_\tau$ can be written as $\sum_{i\in N_\tau}u_i$ with
$u_i\in\II_{\tau i}$, but this decomposition need not be unique.
Let $\Pi_\tau$ be the orthogonal projection in $\mathcal K$ onto
$\II_\tau$. The localization and removed-space projection are
\begin{equation}\label{eq:weighted-localization}
R_a=\Pi_{\tau i}|_{\II_\tau}:\II_\tau\to\II_{\tau i},
\qquad E_{\tau,i}=\Pi_\tau-\Pi_{\tau i}
\quad(a:\tau\to\tau i).
\end{equation}
In particular,
\[
\mathcal U_{\tau,i}:=\II_\tau\cap\II_{\tau i}^\perp,
\qquad \II_\tau=\II_{\tau i}\oplus\mathcal U_{\tau,i},
\qquad \bigcap_{i\in N_\tau}\mathcal U_{\tau,i}=\{0\}.
\]
Thus $E_{\tau,i}$ is the orthogonal projection onto the removed
subspace $\mathcal U_{\tau,i}$.
The last identity follows from the span condition. Zero-dimensional
spaces and zero increments are allowed.

Along a fixed path, these splittings form an orthogonal filtration:
\[
\II_\varnothing=\II_{t(p)}\oplus
\bigoplus_{k=0}^{d-2}
\bigl(\II_{\tau_k}\cap\II_{\tau_{k+1}}^\perp\bigr).
\]
In contrast, the removed spaces at different outgoing edges of one
node can overlap. The next quantity measures their weighted overlap.
This nested, common-metric geometry is an additional assumption beyond
general ROL; specifying the label complex alone does not supply it.

\subsection{Geometric inputs and Gram recursion}
\label{sec:weighted-gap-inputs}
\subsubsection{Gram data}
For each internal node $\tau$, define its \emph{Gram norm} by
\begin{equation}\label{eq:weighted-gram}
\theta_\tau:=\left\|\sum_{i\in N_\tau}\pi_\tau(i)E_{\tau,i}|_{\II_\tau}\right\|.
\end{equation}
Thus $\theta_\tau$ measures the worst-case average loss of squared
norm under localization:
\begin{equation}\label{eq:weighted-loss}
\sum_{i\in N_\tau}\pi_\tau(i)\|E_{\tau,i}b\|^2
\leq\theta_\tau\|b\|^2\qquad(b\in\II_\tau).
\end{equation}

\begin{remark}[Gram-operator interpretation]
Let $J_{\tau,i}:\mathcal U_{\tau,i}\hookrightarrow\II_\tau$ be the
isometric inclusions. On the external orthogonal direct sum, define
\begin{equation}\label{eq:weighted-synthesis}
\mathsf S_\tau^\pi:\bigoplus_{i\in N_\tau}\mathcal U_{\tau,i}\to\II_\tau,
\qquad \mathsf S_\tau^\pi((u_i)_i)
=\sum_i\sqrt{\pi_\tau(i)}\,J_{\tau,i}u_i.
\end{equation}
The weighted Gram operator is $\bm C_\tau^\pi=(\mathsf S_\tau^\pi)^*\mathsf S_\tau^\pi$,
with blocks
\[
(\bm C_\tau^\pi)_{ij}
=\sqrt{\pi_\tau(i)\pi_\tau(j)}J_{\tau,i}^*J_{\tau,j}.
\]
By \cref{lem:gram-norm}, $\|\bm C_\tau^\pi\|=\theta_\tau$;
this explains the term Gram norm.
The external coordinates are orthogonal, but their images in
$\II_\tau$ need not be.
\end{remark}

\begin{figure}[tbp]
\centering
\begin{tikzpicture}[x=1cm,y=1cm,font=\small,
  >=Stealth,line cap=round,line join=round]
\node[rolpanel] at (2.3,3.5) {(a) The last branching};
\node[rolpanel] at (8.6,3.5) {(b) Removed directions};
\node[anchor=east,text=linkgray,font=\footnotesize] at (-.15,2.5)
  {$\mathscr T_{\mathrm{bot}}$};
\node[anchor=east,text=linkgray,font=\footnotesize] at (-.15,.3)
  {$\mathscr T_{\mathrm{stop}}$};
\node[circle,draw=linkgray,minimum size=9mm] (gramparent) at (2.3,2.5) {$\tau_\star$};
\node[circle,draw=linkteal,fill=linkteal!7,minimum size=9mm] (grami) at (.65,.3) {$\tau_\star i$};
\node[circle,draw=defectorange,fill=defectorange!7,minimum size=9mm] (gramj) at (3.3,.3) {$\tau_\star j$};
\node at (4.35,.3) {$\cdots$};
\draw[->,thick,linkteal] (gramparent) -- node[left,font=\small] {$\mathcal U_{\tau_\star,i}$} (grami);
\draw[->,thick,defectorange] (gramparent) -- node[right,font=\small] {$\mathcal U_{\tau_\star,j}$} (gramj);
\draw[->,linkgray,dashed] (gramparent) -- (4.3,.8);
\node[font=\small\sffamily] at (2.3,-.65) {Stopping nodes};
\draw[draw=linkgray,fill=linkgray!3] (8.6,1.35) ellipse (2.65 and 1.65);
\node[text=linkgray] at (7.3,2.15) {$\II_{\tau_\star}$};
\begin{scope}[shift={(8.4,.9)}]
\draw[rotate=15,draw=linkteal,fill=linkteal!9] (0,0) ellipse (1.8 and .42);
\draw[rotate=65,draw=defectorange,fill=defectorange!9,fill opacity=.7] (.5,0) ellipse (1.25 and .42);
\draw[->,thick,linkgray] (0,0) -- (.25,1.65) node[above] {$b$};
\draw[->,thick,linkteal] (0,0) -- (.646,.173);
\draw[->,thick,defectorange] (0,0) -- (.677,1.451);
\draw[densely dotted,linkteal] (.25,1.65) -- (.646,.173);
\draw[densely dotted,defectorange] (.25,1.65) -- (.677,1.451);
\node[text=linkteal,font=\scriptsize] at (1.05,.25) {$E_{\tau_\star,i}b$};
\node[text=defectorange,font=\scriptsize,right] at (.78,1.18) {$E_{\tau_\star,j}b$};
\node[text=linkteal] at (1.9,.75) {$\mathcal U_{\tau_\star,i}$};
\node[text=defectorange] at (-.45,-.87) {$\mathcal U_{\tau_\star,j}$};
\fill[linkgray] (0,0) circle (1.3pt);
\end{scope}
\node at (8.6,-.9) {$\displaystyle\sum_{i\in N_{\tau_\star}}\pi_{\tau_\star}(i)\|E_{\tau_\star,i}b\|^2
  \leq\theta_{\tau_\star}\|b\|^2$};
\end{tikzpicture}
\caption{The deepest Gram input is measured at a parent $\tau_\star\in\mathscr T_{\mathrm{bot}}$ of stopping
nodes, across its outgoing localizations. Each $\mathcal U_{\tau_\star,i}$
contains the directions removed along one branch. The same parent
vector can have substantial projections onto several such spaces;
$\theta_{\tau_\star}$ bounds the weighted fraction of squared norm removed.
Only two branches are illustrated in detail; a parent at this last
internal level may have more children. The rounded regions are schematic depictions of subspaces:
their overlapping areas do not represent subspace intersections.
Nonzero Gram overlap can occur even when the common intersection is
$\{0\}$.}
\label{fig:deepest-gram-geometry}
\end{figure}

The deepest geometric input is imposed one step before the stopping
nodes, as illustrated in \Cref{fig:deepest-gram-geometry}.
These parents form $\mathscr T_{\mathrm{bot}}$: a node $\tau_\star$
has two vertices still needed to complete a facet, while its children
have only one and are stopping nodes. At each node $\tau_\star$ in
this level, supply a possibly node-dependent \emph{Gram bound}
$\overline{\theta}_{\tau_\star}$, meaning an upper bound on its Gram norm:
\begin{equation}\label{eq:weighted-bottom-input}
\theta_{\tau_\star}\leq \overline{\theta}_{\tau_\star}<1,\qquad \overline{\theta}_{\tau_\star}\geq0
\qquad(\tau_\star\in\mathscr T_{\mathrm{bot}}).
\end{equation}
The remaining rank does not count the available children: a rank-two
link may have more than two vertices. Thus this is generally a multi-subspace
Gram bound, not just an angle between two spaces.

The following update comes from the one-step Gram estimate proved in
\cref{lem:weighted-one-step-gram}. Write
\[
F_\lambda(c):=\frac{(2-\lambda)c}{\lambda(1-c)}
\qquad(0<\lambda\leq2,\ 0\leq c<1).
\]
Starting with these Gram bounds on $\mathscr T_{\mathrm{bot}}$, recursively
set, for $\tau\in\mathscr T_{\mathrm{int}}\setminus\mathscr T_{\mathrm{bot}}$,
\begin{equation}\label{eq:weighted-gram-recursion}
c_\tau^{\downarrow}:=\max_{i\in N_\tau}\overline{\theta}_{\tau i},\qquad
\overline{\theta}_\tau:=F_{\lambda_\tau}(c_\tau^{\downarrow}).
\end{equation}
The downward arrow indicates that the child Gram bound
$c_\tau^{\downarrow}$ collects the largest Gram bound from the
children of $\tau$; $\overline{\theta}_\tau$ is the propagated
Gram bound at the parent.
To keep these bounds below one, require at each such node
\begin{equation}\label{eq:weighted-retention-condition}
c_\tau^{\downarrow}<\lambda_\tau/2.
\end{equation}
Bottom Gram bounds and internal
link-walk gap bounds are separate inputs: neither is inferred merely from
the existence of a weighted complex. 

\subsection{From ROL decomposition to energy transport}
\label{sec:descent-rol-energy}
The geometric recursion controls how much norm is retained. To turn
this control into a spectral-gap bound, we need compatible energy
bounds. The connection with ROL is its one-step decomposition:
under the compatibility and full-energy assembly assumptions of
\cref{lem:rol-one-step}, \eqref{eq:rol-one-step-energy} gives
\[
\Eff_\tau(b)
=\sum_{a:\tau\to\tau i}\omega_a\Eff_{\tau i}(R_ab)
+\langle b,\DOp_\tau b\rangle,
\qquad \DOp_\tau\succeq0.
\]
Here $\omega_a$ is an energy assembly coefficient, whereas the facet
weights $w_F$ determine the child probabilities $\pi_\tau(i)$ and
the link walks. For the descent below, the assembly coefficients
must match these child probabilities, possibly after normalizing
the energies.

More explicitly, dividing the energy at each node by a positive
number $s_\tau$ changes its assembly coefficients to
\[
\widehat{\Eff}_\tau:=\frac{\Eff_\tau}{s_\tau},
\qquad
\widehat \omega_a:=\omega_a\frac{s_{\tau i}}{s_\tau}
\quad(a:\tau\to\tau i).
\]
The defect is divided by $s_\tau$ and remains nonnegative.
The required matching is $\widehat \omega_a=\pi_\tau(i)$; it is a
condition on the energy construction, not an automatic consequence
of the facet weights. For example, the scalar variance construction
in \cref{prop:classical-variance-rol} has $\omega_a=1/m_\tau$ before
normalization. Taking $s_\tau=W_\tau$ gives
\[
\widehat \omega_a=\frac{W_{\tau i}}{m_\tau W_\tau}=\pi_\tau(i).
\]
In what follows, the energies and their terminal bounds are understood
in the chosen normalization, and we omit the hats.

\paragraph{Energy assembly and terminal lower bounds.}
Let $\LOp_\tau\succeq0$ act on $\II_\tau$, and write
$\Eff_\tau(b)=\langle b,\LOp_\tau b\rangle$.
For a compatible ROL construction with matching coefficients, the
passage from the exact decomposition to a lower bound is
\[
\begin{aligned}
\Eff_\tau(b)
&=\sum_{a:\tau\to\tau i}\pi_\tau(i)\Eff_{\tau i}(R_ab)
  +\underbrace{\langle b,\DOp_\tau b\rangle}_{\geq0}\\
&\geq\sum_{a:\tau\to\tau i}\pi_\tau(i)\Eff_{\tau i}(R_ab).
\end{aligned}
\]
Thus the lower bound retains only the energy inherited from the
children; any additional energy supplied by the gluing defect is left
unused here. \Cref{sec:compatible-assembly} develops spectral bounds
that use this additional contribution.
The descent argument requires only the final inequality, which we
take as our assumption whether obtained from ROL or verified directly:
\begin{equation}\label{eq:weighted-energy-assembly}
\Eff_\tau(b)\geq\sum_{a:\tau\to\tau i}
\pi_\tau(i)\Eff_{\tau i}(R_ab)
\qquad(b\in\II_\tau,\ \tau\in\mathscr T_{\mathrm{int}}).
\end{equation}
At the stopping nodes, assume positive lower bounds, which may depend
on the node:
\begin{equation}\label{eq:weighted-terminal-input}
\Eff_\sigma(b)\geq\delta_\sigma\|b\|^2,
\qquad\delta_\sigma>0
\qquad(\sigma\in\mathscr T_{\mathrm{stop}},\ b\in\II_\sigma).
\end{equation}
The geometric inputs control loss of norm during localization;
the terminal constants supply the energy scale.

\subsection{The main gap theorem}
The following statement first propagates the geometric input and then
transports terminal energy bounds along every root-to-stop path.
\begin{theorem}[Spectral descent along the hierarchy]
\label{thm:weighted-link-gap}
In the hierarchy of \cref{sec:weighted-hierarchy}, take the following
numerical inputs:
\begin{itemize}
\item deepest Gram bounds $\overline{\theta}_{\tau_\star}$ for
$\tau_\star\in\mathscr T_{\mathrm{bot}}$ satisfying
\eqref{eq:weighted-bottom-input};
\item classical link-walk gap lower bounds
$\lambda_\tau>0$ for
$\tau\in\mathscr T_{\mathrm{int}}\setminus\mathscr T_{\mathrm{bot}}$
satisfying \eqref{eq:weighted-link-gap};
\item terminal energy lower bounds $\delta_\sigma>0$ for
$\sigma\in\mathscr T_{\mathrm{stop}}$
satisfying \eqref{eq:weighted-terminal-input}.
\end{itemize}
Define the remaining $\overline{\theta}_\tau$ by \eqref{eq:weighted-gram-recursion},
and assume the retention condition \eqref{eq:weighted-retention-condition}
and the energy assembly inequality \eqref{eq:weighted-energy-assembly}.
Then
$\theta_\tau\leq \overline{\theta}_\tau<1$ at every internal node.
Extend the given terminal energy lower bounds $\delta_\sigma$ to all internal nodes
recursively by
\begin{equation}\label{eq:weighted-gap-recursion}
\delta_\tau:=(1-\overline{\theta}_\tau)\min_{i\in N_\tau}\delta_{\tau i}
\quad(\tau\in\mathscr T_{\mathrm{int}}).
\end{equation}
If $\II_\varnothing\ne\{0\}$, then $\LOp_\varnothing$ is positive
definite and its spectral gap satisfies
\begin{equation}\label{eq:weighted-root-path-bound}
\operatorname{gap}(\LOp_\varnothing)\geq \delta_\varnothing
=\min_{p\in\mathcal P_{\mathrm{stop}}}
\left\{\delta_{t(p)}\prod_{k=0}^{d-2}(1-\overline{\theta}_{\tau_k})\right\},
\end{equation}
\end{theorem}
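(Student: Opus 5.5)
The proof has two layers. First, prove the Gram bounds $\theta_\tau\le\overline\theta_\tau$ by induction upward from $\mathscr T_{\mathrm{bot}}$. Second, transport the terminal energy bounds upward by a second induction that uses only \eqref{eq:weighted-energy-assembly} and these Gram bounds. At $\mathscr T_{\mathrm{bot}}$ the Gram bound is the hypothesis \eqref{eq:weighted-bottom-input}. At a higher internal node $\tau$, assume $\theta_{\tau i}\le\overline\theta_{\tau i}\le c_\tau^\downarrow<1$ for every child. Condition \eqref{eq:weighted-retention-condition} gives $F_{\lambda_\tau}(c_\tau^\downarrow)<1$, because $(2-\lambda)c<\lambda(1-c)$ is equivalent to $c<\lambda/2$. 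So everything reduces to a one-step estimate $\theta_\tau\le F_{\lambda_\tau}(\max_i\theta_{\tau i})$, which I would isolate as a lemma.

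\emph{Energy transport.} This step is routine once the Gram bounds are known. For $b\in\II_\tau$ we have $R_ab=\Pi_{\tau i}b$ and $\|\Pi_{\tau i}b\|^2=\|b\|^2-\|E_{\tau,i}b\|^2$. Inductively assume $\Eff_{\tau i}\ge\delta_{\tau i}\|\cdot\|^2$ on $\II_{\tau i}$. Then \eqref{eq:weighted-energy-assembly} and \eqref{eq:weighted-loss} give
\[
\Eff_\tau(b)\ge\min_i\delta_{\tau i}\sum_i\pi_\tau(i)\bigl(\|b\|^2-\|E_{\tau,i}b\|^2\bigr)\ge(1-\overline\theta_\tau)\min_i\delta_{\tau i}\|b\|^2=\delta_\tau\|b\|^2 .
\]
At the root this says $\LOp_\varnothing\succeq\delta_\varnothing>0$, so $\LOp_\varnothing$ is positive definite, its kernel is trivial, and $\operatorname{gap}(\LOp_\varnothing)\ge\delta_\varnothing$. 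Unrolling the minima in \eqref{eq:weighted-gap-recursion} along root-to-stop paths gives the product formula \eqref{eq:weighted-root-path-bound}.

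\emph{One-step Gram estimate (the main obstacle).} Fix an internal $\tau\notin\mathscr T_{\mathrm{bot}}$ and $b\in\II_\tau$. Set $g_i:=\Pi_{\tau i}b$ and $M:=\sum_i\pi_\tau(i)\Pi_{\tau i}|_{\II_\tau}$. Then $\langle b,Mb\rangle=\sum_i\pi_\tau(i)\|g_i\|^2$ and $\theta_\tau=1-\lambda_{\min}(M)$. The goal is therefore a lower bound $M\succeq(1-F_{\lambda_\tau}(c))\id$ with $c:=\max_i\theta_{\tau i}$.

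I would compare the quadratic form $Q:=\sum_{i,j}\pi_\tau(i)P_\tau(i,j)\langle g_i,g_j\rangle$ from two sides. For the upper bound, apply the spectral gap of $P_\tau$ to the $\mathcal K$-valued function $i\mapsto g_i$ in $L^2(\pi_\tau;\mathcal K)$, coordinatewise in an orthonormal basis. This gives $\operatorname{Re}Q\le\|\bar g\|^2+(1-\lambda_\tau)\bigl(\sum_i\pi_\tau(i)\|g_i\|^2-\|\bar g\|^2\bigr)$ with $\bar g=Mb$. One could also use the lower spectral bound $P_\tau\succeq-\id$, which is where $2-\lambda$ naturally appears.

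For the lower bound, use the child geometry through the consistency \eqref{eq:weighted-link-consistency}, $P_\tau(i,\cdot)=\pi_{\tau i}$. Since $\II_{\tau ij}\subseteq\II_{\tau i}\cap\II_{\tau j}$, we have $\Pi_{\tau ij}g_i=\Pi_{\tau ij}b=\Pi_{\tau ij}g_j$. The induction hypothesis at $\tau i$ then gives $\sum_jP_\tau(i,j)\|\Pi_{\tau ij}g_i\|^2\ge(1-c)\|g_i\|^2$. The cross terms between the retained parts $\Pi_{\tau ij}g_\bullet$ and the removed parts $E_{\tau i,j}g_i$, $E_{\tau j,i}g_j$ must be controlled by Cauchy--Schwarz with the mass $c\|g_i\|^2$ of the removed parts. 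I expect this to yield an inequality of the form $(1-c)\sum\pi\|g_i\|^2-c\,(\cdots)\lesssim Q$.

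Combining the two sides with $\|\bar g\|^2\le\|Mb\|\,\|b\|$ should produce $\lambda(1-c)\langle b,(\id-M)b\rangle\le(2-\lambda)c\|b\|^2$, which is exactly $\theta_\tau\le F_\lambda(c)$. The span condition \eqref{eq:weighted-space-conditions} enters to rule out $b$ orthogonal to every child. Getting the constant exactly $(2-\lambda)c/(\lambda(1-c))$, rather than a weaker one, is the delicate bookkeeping I expect to spend the most effort on. If the direct route stalls, I would first try working with $\id-M=\sum_i\pi_\tau(i)E_{\tau,i}$ and its Gram operator $\bm C^\pi_\tau$ via \cref{lem:gram-norm}.
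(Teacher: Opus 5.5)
Your two-layer structure matches the paper. The energy-transport half is complete as written: assembly, $\|R_ab\|^2=\|b\|^2-\|E_{\tau,i}b\|^2$, induction, and unrolling the minima along root-to-stop paths.

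\textbf{The gap.} It is the one-step Gram estimate $\theta_\tau\le F_{\lambda_\tau}(\max_i\theta_{\tau i})$, which carries the whole theorem and which you leave as expected bookkeeping. The comparison you outline, once carried out, does not reach the stated constant.

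Take $b$ a unit top eigenvector of $\sum_i\pi_\tau(i)E_{\tau,i}$, so $\|g\|_\pi^2=1-\theta$ and $\bar g=(1-\theta)b$. The retained--removed cross terms you plan to control by Cauchy--Schwarz actually vanish, since $E_{\tau j,i}g_j\in\II_{\tau ij}^\perp$. Thus $\langle g_i,g_j\rangle=\|\Pi_{\tau ij}b\|^2+\langle E_{\tau i,j}g_i,E_{\tau j,i}g_j\rangle$. The child Gram bound, with Cauchy--Schwarz on the last term, then gives $Q\ge(1-2c)(1-\theta)$. Your Poincar\'e bound reads $Q\le(1-\theta)(1-\lambda\theta)$, so together they give only $\theta_\tau\le 2c/\lambda_\tau$.

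Since $F_\lambda(c)=\frac{2-\lambda}{2(1-c)}\cdot\frac{2c}{\lambda}$, this is strictly weaker whenever $c<\lambda/2$, and the loss is real. In the complete uniform case with $m_\tau=3$ and orthogonal bottom increments ($c=1/2$, $\lambda=3/2$), one has $F_\lambda(c)=1/3$, attained by independent spins, whereas this route gives $2/3$.

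Invoking $P_\tau\succeq-\id$ does not help. It is another lower bound on $Q$, in the same direction as the child estimate. The paper uses no such bound (bipartite links are allowed), and $2-\lambda$ arises simply as $\lambda(2/\lambda-1)$.

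\textbf{The missing idea: use the gap dually.} Your field $g_i-\bar g=-(E_{\tau,i}b-\theta b)$ is, up to the factor $\sqrt{\theta(1-\theta)}$, exactly the paper's unit centered field $x$. Your two-sided comparison of $Q$ amounts to the Poincar\'e inequality $\langle x,\Delta x\rangle_\pi\ge\lambda\|x\|_\pi^2$.

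In the notation of \cref{lem:weighted-one-step-gram}, the paper instead proceeds as follows. It places $x$ in $\mathcal Z_0=(\mathcal H_{\mathrm{const}}+\mathcal U_\pi)\cap\mathcal H_{\mathrm{const}}^\perp$. It solves $T\psi=x$ for the compressed Laplacian $T=\operatorname{Proj}_{\mathcal Z}\Delta|_{\mathcal Z}$, which satisfies $T\succeq\lambda$ on $\mathcal Z_0$. It then estimates the edge differences $d_{ij}=\psi_i-\psi_j$ of the solution, not of $x$.

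Writing $\psi_i=u+u_i$ with $u_i\in\mathcal U_i$ puts $(\id-E_i)d_{ij}$ in $\mathcal V_{ij}$. Because $\operatorname{Proj}_{\mathcal Z}$ fixes $\mathcal U_\pi$-components and contracts the rest, Jensen's inequality and the child Gram bound (in the synthesis-map form of \cref{lem:gram-norm}) give
\[
\begin{aligned}
c(1-\theta)+\theta
&=c\sum_i\pi_i\|E_ix_i\|^2+\sum_i\pi_i\|(\id-E_i)x_i\|^2
\le c\sum_{i,j}\pi_iP_{ij}\|d_{ij}\|^2\\
&=2c\langle\psi,T\psi\rangle_\pi
\le\frac{2c}{\lambda}\|T\psi\|_\pi^2=\frac{2c}{\lambda},
\end{aligned}
\]
which rearranges to $\theta\le F_\lambda(c)$.

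What produces the factor $1/(1-c)$ in $F_\lambda$ is the combination of two choices your primal comparison cannot reproduce. First, the $\mathcal U_i$-component of $x$ is weighted by $c$ and its complement by $1$. Second, the gap enters as $\langle\psi,T\psi\rangle\le\lambda^{-1}\|T\psi\|^2$ rather than as $\langle x,Tx\rangle\ge\lambda\|x\|^2$.
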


Under the root identifications of \cref{par:rol-root-full-gap}, this
also bounds the gap of the original operator; a general readout
requires an additional full-space comparison.

\subsection{Proof of the main gap theorem}
\label{sec:weighted-gap-proof}
The proof separates geometry from energy. We first express norm
retention in terms of Gram overlap and state the geometric bound
needed throughout the hierarchy. We then use that bound to transport
terminal energy lower bounds to the root and obtain the global gap.
Finally, we prove the geometric bound using the link-walk gaps.

Starting from the bottom Gram bounds $\overline{\theta}_{\tau_\star}$ and terminal
energy bounds $\delta_\sigma$, the updates propagate these bounds upward
from children to parents:
\[
\begin{aligned}
\text{Geometry:}\quad
\underbrace{\overline{\theta}_\tau}_{\text{parent bound}}
&=F_{\lambda_\tau}\!\left(
\underbrace{\max_i\overline{\theta}_{\tau i}}_{\text{child bounds}}
\right),\\[6pt]
\text{Energy:}\quad
\underbrace{\delta_\tau}_{\text{parent bound}}
&=\underbrace{(1-\overline{\theta}_\tau)}_{\text{retention factor}}
\underbrace{\min_i\delta_{\tau i}}_{\text{child bounds}}.
\end{aligned}
\]
The geometric update applies above the bottom level; the energy
update applies at every internal node.

\subsubsection{Norm retention and the Gram-descent bound}
For $b\in\II_\tau$ and $a:\tau\to\tau i$, localization retains
$R_ab$ and removes $E_{\tau,i}b$. These components are orthogonal, so
\[
\|R_ab\|^2=\|b\|^2-\|E_{\tau,i}b\|^2.
\]
Averaging over the outgoing branches and using
$\sum_i\pi_\tau(i)=1$ gives
\[
\begin{aligned}
\sum_{a:\tau\to\tau i}\pi_\tau(i)\|R_ab\|^2
&=\|b\|^2-\sum_i\pi_\tau(i)\|E_{\tau,i}b\|^2\\
&\geq(1-\theta_\tau)\|b\|^2.
\end{aligned}
\]
The last step is the average-loss bound \eqref{eq:weighted-loss},
which follows from the operator-norm identity \eqref{eq:weighted-gram}.
Thus $\theta_\tau$ bounds the fraction of squared norm lost on average,
while $1-\theta_\tau$ is the fraction guaranteed to remain.
Equivalently, in operator form,
\begin{equation}\label{eq:weighted-norm-retention}
\sum_{a:\tau\to\tau i}\pi_\tau(i)R_a^*R_a
=\id_{\II_\tau}-\sum_i\pi_\tau(i)E_{\tau,i}|_{\II_\tau}
\succeq(1-\theta_\tau)\id_{\II_\tau}.
\end{equation}
The span condition and finite dimensionality imply $\theta_\tau<1$:
equality would require a unit vector in every removed space, whose
intersection is zero. If $\II_\tau=\{0\}$, its norm is zero instead.
This strict inequality is qualitative. The role of Gram descent is
to obtain the explicit Gram bounds $\overline{\theta}_\tau$ needed in
\eqref{eq:weighted-norm-retention}.

\begin{proposition}[Gram descent through the hierarchy]
\label{prop:weighted-gram-descent}
Consider the subspace hierarchy of \cref{sec:weighted-hierarchy},
with classical link-walk gap bounds \eqref{eq:weighted-link-gap}
and bottom Gram bounds \eqref{eq:weighted-bottom-input}.
Define the remaining Gram bounds by
\eqref{eq:weighted-gram-recursion} and assume
\eqref{eq:weighted-retention-condition}. Then
\[
\theta_\tau\leq\overline{\theta}_\tau<1
\qquad(\tau\in\mathscr T_{\mathrm{int}}).
\]
\end{proposition}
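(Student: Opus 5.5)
The plan is to argue by upward induction on the depth, from $\mathscr T_{\mathrm{bot}}$ toward the root. At $\tau_\star\in\mathscr T_{\mathrm{bot}}$ the claim $\theta_{\tau_\star}\le\overline\theta_{\tau_\star}<1$ is exactly the hypothesis \eqref{eq:weighted-bottom-input}. Now let $\tau\in\mathscr T_{\mathrm{int}}\setminus\mathscr T_{\mathrm{bot}}$, so $m_\tau\ge3$, and suppose $\theta_{\tau i}\le\overline\theta_{\tau i}\le c:=c^\downarrow_\tau$ for all $i\in N_\tau$. It then suffices to prove a one-step Gram estimate (this is \cref{lem:weighted-one-step-gram}, which I would prove here):
\[
\theta_\tau\le F_{\lambda_\tau}(c).
\]
By \eqref{eq:weighted-retention-condition} we have $c<\lambda_\tau/2$. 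A short computation then gives $F_{\lambda_\tau}(c)<1$: indeed $(2-\lambda)c<\lambda(1-c)$ is equivalent to $2c<\lambda$. This closes the induction.

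For the one-step estimate, fix a unit vector $b\in\II_\tau$ and write $\pi=\pi_\tau$, $P=P_\tau$, $\lambda=\lambda_\tau$. Set $f_i:=\Pi_{\tau i}b$ and $u_i:=E_{\tau,i}b=b-f_i$, and put
\[
s:=\sum_i\pi(i)\|u_i\|^2=\langle b,(\id-A)b\rangle,
\]
where $A=\sum_i\pi(i)\Pi_{\tau i}$ on $\II_\tau$. The goal is to bound $s$ in terms of $c$ and $\lambda$. We view $i\mapsto u_i$ as a $\mathcal K$-valued function on the link and apply the spectral gap \eqref{eq:weighted-link-gap} coordinatewise in an orthonormal basis of $\mathcal K$. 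The mean of this function is $m=(\id-A)b$. Since $P\preceq\beta_2$ on mean-zero functions, we get
\[
\sum_{i,j}\pi(i)P(i,j)\langle u_i,u_j\rangle\le\|m\|^2+(1-\lambda)(s-\|m\|^2).
\]
Moreover $\|m\|^2\le\|\id-A\|\,\langle b,(\id-A)b\rangle\le\theta_\tau s$.

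The two-step term is bounded from below using the children. Expanding $u_i=b-f_i$ and using $\langle b,f_i\rangle=\|f_i\|^2$ gives
\[
\sum_{i,j}\pi(i)P(i,j)\langle u_i,u_j\rangle=1-2a+\sum_{i,j}\pi(i)P(i,j)\langle f_i,f_j\rangle,\qquad a:=1-s.
\]
For a legal pair $i,j$ we have $\II_{\tau ij}\subseteq\II_{\tau i}\cap\II_{\tau j}$. Hence
\[
\langle f_i,f_j\rangle=\|\Pi_{\tau ij}b\|^2+\langle E_{\tau i,j}f_i,E_{\tau j,i}f_j\rangle ,
\]
because the cross terms vanish by orthogonality to $\II_{\tau ij}$. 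Also $\|\Pi_{\tau ij}b\|^2=\|f_i\|^2-\|E_{\tau i,j}f_i\|^2$. The consistency \eqref{eq:weighted-link-consistency} gives $P(i,\cdot)=\pi_{\tau i}$, so the child bound \eqref{eq:weighted-loss} at $\tau i$ yields $\sum_jP(i,j)\|E_{\tau i,j}f_i\|^2\le c\|f_i\|^2$. Reversibility \eqref{eq:weighted-link-detailed-balance} symmetrizes the cross term, which is then handled by Cauchy--Schwarz. Together these give
\[
\sum_{i,j}\pi(i)P(i,j)\langle f_i,f_j\rangle\ge(1-\kappa c)\,a ,
\]
with $\kappa\le2$ and $a=1-s$.

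Combining the two displays and taking the supremum over unit $b$ (so that $s=\theta_\tau$ at the extremizer) yields an inequality of the form $\lambda\theta_\tau(1-\theta_\tau)\le\kappa c(1-\theta_\tau)$. Already with $\kappa=2$ this gives $\theta_\tau\le 2c/\lambda<1$, which is qualitatively the right bound.

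The main obstacle is getting the exact constant $F_\lambda(c)=(2-\lambda)c/(\lambda(1-c))$ rather than a cruder variant such as $2c/\lambda$. To match it I would not drop the diagonal parts of the cross term. Instead I would write the two-step quantity as a Dirichlet form for $i\mapsto f_i$ with the child losses $\|E_{\tau i,j}f_i\|^2$ kept explicitly. Then I would use $\langle f,(\id-P)f\rangle\le 2\|f\|^2$ on one part and the gap $\lambda$ on the mean-zero part, which is where the factor $2-\lambda$ should come from. Finally I would normalize by the retained mass $1-c$ rather than by $1$. If this exact bookkeeping fails, the cruder bound still gives $\theta_\tau<1$ under \eqref{eq:weighted-retention-condition}. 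In that case the recursion \eqref{eq:weighted-gram-recursion} would have to be read with the corresponding update function.
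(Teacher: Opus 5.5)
Your one-step computations are individually correct. What they prove, however, is $\lambda_\tau\theta_\tau(1-\theta_\tau)\le 2c(1-\theta_\tau)$ with $c=c_\tau^{\downarrow}$, that is, $\theta_\tau\le 2c/\lambda_\tau$. This does not give the proposition. The retention condition $c<\lambda_\tau/2$ is equivalent to $2(1-c)>2-\lambda_\tau$. So in exactly the regime where the statement has content,
\[
F_{\lambda_\tau}(c)=\frac{(2-\lambda_\tau)c}{\lambda_\tau(1-c)}<\frac{2c}{\lambda_\tau},
\]
and your estimate does not yield $\theta_\tau\le\overline{\theta}_\tau$. The induction then fails to close, because the next level needs $\theta_{\tau i}\le\overline{\theta}_{\tau i}$, not a weaker surrogate.

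The loss is not cosmetic. Take the complete uniform case with mutually orthogonal increments ($\epsilon=0$). There the paper's recursion is exact, $\overline{\theta}=1/m$. Iterating $c\mapsto 2c/\lambda$ with $\lambda=m/(m-1)$ instead gives $1/2\to 2/3\to 1$, which hits the threshold $c=\lambda/2$ already at $m=4$. Your closing paragraph (keeping child losses, using $\id-P\preceq 2$, normalizing by $1-c$) is not an argument. The factor $2-\lambda$ does not arise from $\id-P\preceq 2$ in any case. Your fallback of changing the update function proves a different, much weaker statement.

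The missing ingredient is the dual argument in the paper's proof of the one-step Gram estimate.
\begin{itemize}
\item Take the top eigenvector $v$ of $\sum_i\pi_iE_i$ and the centered unit field $x_i=(E_iv-\theta v)/\sqrt{\theta(1-\theta)}$. This is essentially your field $u_i$, centered. It lies in $\mathcal Z_0$, the mean-zero part of $\mathcal Z=\mathcal H_{\mathrm{const}}+\mathcal U_\pi$.
\item Invert the compressed Laplacian $T=\operatorname{Proj}_{\mathcal Z}\Delta|_{\mathcal Z}$, which satisfies $T\succeq\lambda$ on $\mathcal Z_0$, to get $\psi\in\mathcal Z_0$ with $T\psi=x$.
\item The edge differences then satisfy $d_{ij}=\psi_i-\psi_j\in\mathcal U_i+\mathcal U_j\subseteq\mathcal B_{ij}$, so $(\id-E_i)d_{ij}\in\mathcal U_{\tau i,j}$.
\item Because $\mathcal U_\pi\subseteq\mathcal Z$, one has $E_ix_i=\sum_jP_{ij}E_id_{ij}$, and Jensen controls this component. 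The child Gram bound, used as a synthesis bound, controls the $(\id-E_i)$-component. Both carry the same coefficient $c$ and recombine into the full edge energy:
\[
c(1-\theta)+\theta\le c\sum_{i,j}\pi_iP_{ij}\|d_{ij}\|^2=2c\langle\psi,T\psi\rangle\le\frac{2c}{\lambda}.
\]
\end{itemize}
Compared with your $\theta\le 2c/\lambda$, the extra term $c(1-\theta)$ comes from the $\mathcal U_i$-components of the field, which your primal argument never exploits. Rearranging with that term is exactly what gives $\theta\le F_\lambda(c)$. This is the piece you need to supply.
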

We defer its proof to \cref{sec:gram-descent-proof} and first show
how these bounds yield the energy conclusion of the main theorem.

\subsubsection{Energy transport and the global gap}
\label{sec:gap-transport-proof}
\begin{proof}[Proof of \cref{thm:weighted-link-gap} assuming
\cref{prop:weighted-gram-descent}]
The proposition gives the geometric conclusion. We now prove the
stronger nodewise energy bound
\begin{equation}\label{eq:weighted-gap-conclusion}
\Eff_\tau(b)\geq \delta_\tau\|b\|^2\quad(b\in\II_\tau),
\qquad \LOp_\tau\succeq \delta_\tau\id_{\II_\tau},
\end{equation}
at every visited node. At a stopping node the
energy bound is \eqref{eq:weighted-terminal-input}. Suppose it holds
at every child of $\tau$. For $b\in\II_\tau$, assembly and norm
retention give
\begin{align*}
\Eff_\tau(b)
&\geq\sum_{a:\tau\to\tau i}\pi_\tau(i)
\delta_{\tau i}\|R_ab\|^2\\
&\geq\left(\min_i \delta_{\tau i}\right)(1-\theta_\tau)\|b\|^2
\geq \delta_\tau\|b\|^2.
\end{align*}
The last step uses \cref{prop:weighted-gram-descent} and the
definition of $\delta_\tau$ in \eqref{eq:weighted-gap-recursion}.
This proves \eqref{eq:weighted-gap-conclusion} by induction.

To make the path formula explicit, write a root-to-stop path as
$p=(\tau_0,\ldots,\tau_{d-1})$, where
$\tau_0=\varnothing$ and $\tau_{d-1}=t(p)$.
At each internal node, \eqref{eq:weighted-gap-recursion} reads
\[
\delta_\tau
=\min_{i\in N_\tau}
\bigl\{(1-\overline{\theta}_\tau)\delta_{\tau i}\bigr\}.
\]
The retention factor is positive, so substituting the same formula
for each child lets us combine the successive minima. More generally,
after $j$ steps, for $0\leq j\leq d-1$,
\[
\delta_\varnothing
=\min_{\substack{\tau_0=\varnothing\to\tau_1\to\cdots\to\tau_j}}
\left\{\delta_{\tau_j}
\prod_{k=0}^{j-1}(1-\overline{\theta}_{\tau_k})\right\},
\]
with the empty product equal to one. At $j=d-1$, all endpoints
are stopping nodes, so $\delta_{\tau_j}$ is the prescribed terminal
energy lower bound. This expansion is over paths, and remains valid
when different paths reach the same node.

All factors are positive. Thus \eqref{eq:weighted-gap-conclusion}
at the nonzero root space makes $\LOp_\varnothing$ positive definite,
and gives
\[
\operatorname{gap}(\LOp_\varnothing)\geq\delta_\varnothing
=\min_{p\in\mathcal P_{\mathrm{stop}}}
\left\{\delta_{t(p)}
\prod_{k=0}^{d-2}(1-\overline{\theta}_{\tau_k})\right\},
\]
which is \eqref{eq:weighted-root-path-bound}.
\end{proof}

\subsubsection{Proof of the Gram-descent bound}
\label{sec:gram-descent-proof}
It remains to establish \cref{prop:weighted-gram-descent}.
The recursion \eqref{eq:weighted-gram-recursion} is obtained by
repeatedly applying a one-step estimate. As we will show below, if
all child Gram norms satisfy $\theta_{\tau i}\leq c^{\downarrow}<1$, the parent's
link-walk gap bound $\lambda_\tau$ gives
\[
\theta_\tau\leq\frac{(2-\lambda_\tau)c^{\downarrow}}{\lambda_\tau(1-c^{\downarrow})}
=F_{\lambda_\tau}(c^{\downarrow})\qquad(m_\tau\geq3).
\]
Here $\theta_\tau$ is the actual Gram norm, whereas
$\overline{\theta}_\tau$ is the Gram bound computed from the geometric inputs. Monotonicity of
$F_\lambda$ lets us replace the child norms by their bounds
$\overline{\theta}_{\tau i}$, giving exactly the stated recursion.
The condition $c_\tau^{\downarrow}<\lambda_\tau/2$ is equivalent to
$F_{\lambda_\tau}(c_\tau^{\downarrow})<1$ on its domain. It therefore leaves a
positive retained fraction $1-\overline{\theta}_\tau$ of squared norm in the weighted
average over children, by \eqref{eq:weighted-norm-retention}.

We first prove the underlying Hilbert-space estimate for a single
reversible graph, and then apply it to the hierarchy in
\eqref{eq:weighted-hierarchy-gram-step}.
For the application to our hierarchy, fix a parent $\tau$ with
$m_\tau\geq3$. The spaces below describe directions removed after
one or two localization steps. We suppress the parent label in the
lemma to keep the notation short; the correspondence is
\begin{equation}\label{eq:weighted-child-increments}
\begin{aligned}
\II&=\II_\tau,\\
\mathcal U_i&=\mathcal U_{\tau,i}
=\II_\tau\cap\II_{\tau i}^\perp,\\
\mathcal B_{ij}&=\II_\tau\cap\II_{\tau ij}^\perp,\\
\mathcal V_{ij}&=\mathcal B_{ij}\cap\mathcal U_i^\perp
=\II_{\tau i}\cap\II_{\tau ij}^\perp
=\mathcal U_{\tau i,j}.
\end{aligned}
\end{equation}
Here $\tau ij\in X$, and orthogonal complements are taken in
$\II_\tau$. Thus $\mathcal U_i$ is removed at the parent,
$\mathcal B_{ij}$ is removed over two steps, and $\mathcal V_{ij}$
is removed in the second step, from $\tau i$ to $\tau ij$.
The state space is $N=N_\tau$, the set of child labels, with
$P=P_\tau$, $\pi=\pi_\tau$, and $\lambda=\lambda_\tau$.

We now formulate the estimate for an abstract subspace configuration
over a reversible Markov chain.
Let $P$ be a connected reversible Markov matrix on a finite state space
$N$ with at least two states, with $P(i,i)=0$ as for the link walks defined above.
Write $\pi_i>0$ for its stationary probabilities.
Let $\mathcal U_i$, $i\in N$, be linear subspaces of a finite-dimensional Hilbert
space $\II$, with orthogonal projections $E_i$.
For each edge $P_{ij}>0$, let a linear subspace
$\mathcal B_{ij}=\mathcal B_{ji}$ contain
$\mathcal U_i+\mathcal U_j$.
Write $\operatorname{Proj}_{\mathcal M}$ for the orthogonal projection onto
a subspace $\mathcal M$ in the relevant ambient Hilbert space. Put
\begin{equation}\label{eq:weighted-one-step-data}
\mathcal V_{ij}:=\mathcal B_{ij}\cap\mathcal U_i^\perp,
\qquad c^{\downarrow}:=\max_i\left\|\sum_jP_{ij}\operatorname{Proj}_{\mathcal V_{ij}}\right\|,
\qquad\theta:=\left\|\sum_i\pi_iE_i\right\|,
\end{equation}
where each sum over $j$ runs over states $j\in N$ with $P_{ij}>0$.
In the hierarchy configuration \eqref{eq:weighted-child-increments},
$\theta$ is the parent Gram norm $\theta_\tau$, and $c^{\downarrow}$ is the
largest child Gram norm, bounded by $c_\tau^{\downarrow}$.

\begin{lemma}[One-step Gram estimate]
\label{lem:weighted-one-step-gram}
In the setting above, suppose that
\[
\bigcap_{i\in N}\mathcal U_i=\{0\},\qquad
\operatorname{gap}(\id-P)\geq\lambda>0
\quad\text{in }L^2(\pi),\qquad c^{\downarrow}<1.
\]
Then
\begin{equation}\label{eq:weighted-one-step-bound}
\theta\leq F_\lambda(c^{\downarrow})=\frac{(2-\lambda)c^{\downarrow}}{\lambda(1-c^{\downarrow})}.
\end{equation}
\end{lemma}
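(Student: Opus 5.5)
The plan is a Garland-type argument. Treat the removed components as a vector-valued function on the state space $N$, bound its variance below by the Poincar\'e inequality for $P$, and bound its Dirichlet form above using the child data $\mathcal V_{ij}$.

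\emph{Reduction to an eigenvector.} The operator $\sum_i\pi_iE_i$ is positive semidefinite on $\II$. If $\theta=0$ there is nothing to prove, so fix a unit vector $b\in\II$ with $\sum_i\pi_iE_ib=\theta b$. Put $g_i:=E_ib\in\mathcal U_i$ and regard $g$ as an element of $L^2(\pi;\II)$. Then $T:=\sum_i\pi_i\|g_i\|^2=\langle b,\sum_i\pi_iE_ib\rangle=\theta$, and the $\pi$-mean of $g$ is $y=\theta b$, so the variance of $g$ is $\theta-\theta^2$. Since $P$ is reversible with $\operatorname{gap}(\id-P)\ge\lambda$ in $L^2(\pi)$, the same bound holds for vector-valued functions, coordinatewise in an orthonormal basis of $\II$. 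Hence
\[
\tfrac12\sum_{i,j}\pi_iP_{ij}\|g_i-g_j\|^2\ \ge\ \lambda\,\theta(1-\theta).
\]
The hypothesis $\bigcap_i\mathcal U_i=\{0\}$ gives $\theta<1$, as already observed after \eqref{eq:weighted-norm-retention}. So the right side is a genuine positive quantity and we may later divide by $1-\theta$.

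\emph{Upper bound on the Dirichlet form.} On an edge $P_{ij}>0$ we have $\mathcal U_i\subseteq\mathcal B_{ij}$, so $\operatorname{Proj}_{\mathcal B_{ij}}=E_i+\operatorname{Proj}_{\mathcal V_{ij}}$. Because $g_j\in\mathcal U_j\subseteq\mathcal B_{ij}$, this gives
\[
\|g_j\|^2=\langle g_j,b\rangle=\langle g_j,\operatorname{Proj}_{\mathcal B_{ij}}b\rangle=\langle g_j,g_i\rangle+\langle g_j,\operatorname{Proj}_{\mathcal V_{ij}}b\rangle .
\]
Expanding $\|g_i-g_j\|^2$ and using reversibility, the terms $\|g_i\|^2-\|g_j\|^2$ cancel in the weighted sum. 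The Dirichlet form therefore equals $\sum_{i,j}\pi_iP_{ij}\operatorname{Re}\langle \operatorname{Proj}_{\mathcal V_{ij}}g_j,\operatorname{Proj}_{\mathcal V_{ij}}b\rangle$. Since $\mathcal V_{ij}\perp\mathcal U_i$, we have $\operatorname{Proj}_{\mathcal V_{ij}}b=\operatorname{Proj}_{\mathcal V_{ij}}(\id-E_i)b$. The definition of $c^{\downarrow}$ then yields
\[
\sum_jP_{ij}\|\operatorname{Proj}_{\mathcal V_{ij}}b\|^2\le c^{\downarrow}\bigl(1-\|g_i\|^2\bigr).
\]
Summing over $i$ with weights $\pi_i$ gives at most $c^{\downarrow}(1-\theta)$.

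\emph{Combining, and the main obstacle.} To control the $g_j$-side, I would rewrite $\operatorname{Proj}_{\mathcal V_{ij}}g_j$ using $g_j=\operatorname{Proj}_{\mathcal B_{ij}}g_j$. This gives $\|\operatorname{Proj}_{\mathcal V_{ij}}g_j\|^2=\|g_j\|^2-\|E_ig_j\|^2$. The cross term can then be expressed through $\|g_j\|^2$, $\|E_ig_j\|^2$ and $\operatorname{Proj}_{\mathcal V_{ij}}b$, so that $c^{\downarrow}$ enters both factors of a weighted Cauchy--Schwarz inequality with parameter $\varepsilon$. The result should be an inequality of the form
\[
\lambda\,\theta(1-\theta)\le\tfrac{\varepsilon}{2}\,(\cdots)+\tfrac{c^{\downarrow}}{2\varepsilon}(1-\theta).
\]
Dividing by $1-\theta$ and choosing $\varepsilon$ appropriately should then produce the linear bound $\theta\bigl(\lambda(1-c^{\downarrow})\bigr)\le(2-\lambda)c^{\downarrow}$. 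The factor $2-\lambda$ should arise from the bound $\|\id-P\|\le2$ on $L^2(\pi)$, used on the mean-zero part of $g$.

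I expect this bookkeeping to be the main obstacle. A naive Cauchy--Schwarz only gives $\theta(1-\theta)\lesssim c^{\downarrow}/\lambda^2$, which is not the stated shape. Getting exactly $F_\lambda(c^{\downarrow})$ will require pairing the $\lambda$-Poincar\'e lower bound with the upper spectral bound $2-\lambda$ on the same vector, rather than optimizing $\varepsilon$ blindly.
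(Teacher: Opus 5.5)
Your first two steps are correct, but the proof stops where the real work begins. The final step is only a hope, and the mechanisms you anticipate will not close it. An $\varepsilon$-weighted Cauchy--Schwarz with $c^{\downarrow}$ in both factors does not work, and neither does deriving the factor $2-\lambda$ from $\|\id-P\|\le 2$. To see why, write $\mathcal E(g)=\tfrac12\sum_{i,j}\pi_iP_{ij}\|g_i-g_j\|^2$ and $Q:=\sum_{i,j}\pi_iP_{ij}\|E_ig_j\|^2$. Apply plain Cauchy--Schwarz to your identity for $\mathcal E(g)$, together with three facts you already have: your bound on the $b$-side, the relation $\|\operatorname{Proj}_{\mathcal V_{ij}}g_j\|^2=\|g_j\|^2-\|E_ig_j\|^2$, and stationarity $\sum_i\pi_iP_{ij}=\pi_j$. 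This gives
\[
\mathcal E(g)^2\le c^{\downarrow}(1-\theta)(\theta-Q).
\]
Here $c^{\downarrow}$ controls only one factor. Optimizing a weight $\varepsilon$ merely reproduces this Cauchy--Schwarz bound and cannot improve it. If you then discard $Q\ge0$, you are left with the naive $\lambda^2\theta(1-\theta)\le c^{\downarrow}$ that you mention. So, as written, the proposal does not prove the lemma.

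The missing idea is that $Q$ is bounded below by the Dirichlet form itself. Since $\langle g_i,g_j\rangle=\langle g_i,E_ig_j\rangle$ and $\sum_{i,j}\pi_iP_{ij}\|g_i\|^2=\theta$, you have
\[
|\theta-\mathcal E(g)|=|\langle g,Pg\rangle_\pi|\le\sum_{i,j}\pi_iP_{ij}\|g_i\|\,\|E_ig_j\|\le\sqrt{\theta Q}.
\]
Substituting $Q\ge(\theta-\mathcal E(g))^2/\theta$ gives $\mathcal E(g)^2\le c^{\downarrow}(1-\theta)\,\mathcal E(g)\bigl(2\theta-\mathcal E(g)\bigr)/\theta$. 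Dividing by $\mathcal E(g)>0$ yields $\mathcal E(g)\bigl(\theta+c^{\downarrow}(1-\theta)\bigr)\le 2c^{\downarrow}\theta(1-\theta)$. Combined with your Poincar\'e bound $\mathcal E(g)\ge\lambda\theta(1-\theta)$, this becomes $\lambda\bigl(c^{\downarrow}(1-\theta)+\theta\bigr)\le 2c^{\downarrow}$, which rearranges exactly to the claim. The $2$ therefore comes from $\theta^2-(\theta-\mathcal E)^2=\mathcal E(2\theta-\mathcal E)$, not from the top of the spectrum of $\id-P$.

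With this step added, your argument is a valid primal alternative to the paper's proof, and a somewhat more elementary one. The paper works dually, following Ershov--Jaikin-Zapirain. In $\mathcal Z=\mathcal H_{\mathrm{const}}+\mathcal U_\pi$ it compresses the Laplacian to $T=\operatorname{Proj}_{\mathcal Z}\Delta|_{\mathcal Z}$, normalizes the centered field $x_i\propto E_iv-\theta v$, and solves $T\psi=x$. It then applies Jensen's inequality to $E_ix_i$ and the child synthesis bound to $(\id-E_i)x_i$, using that the edge differences of $\psi$ lie in $\mathcal B_{ij}$. This reaches the same inequality through $c^{\downarrow}(1-\theta)+\theta\le 2c^{\downarrow}\langle\psi,T\psi\rangle\le 2c^{\downarrow}/\lambda$. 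Your route applies Poincar\'e directly to $(E_ib)_i$ and needs no auxiliary operator or its inverse.
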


\begin{remark}[Relation to the spectral criterion]
\label{rem:weighted-spectral-criterion}
The proof of \cref{lem:weighted-one-step-gram} uses the same Hilbert-space argument as
the spectral criterion of Ershov and
Jaikin-Zapirain~\cite[Theorem~5.1 and Section~5.3]{ershov2010property};
Section~5.5 of that work also treats weighted generalizations.
The subspaces $\mathcal U_i$ and $\mathcal B_{ij}$ play the roles of
the vertex and edge fixed spaces, respectively; the Laplacian is
normalized as $\id-P$ on $L^2(\pi)$.
For a $k$-regular graph with unnormalized Laplacian gap $\Lambda$,
$\lambda=\Lambda/k$, so \eqref{eq:weighted-one-step-bound} recovers
the coefficient $[c^{\downarrow}/(1-c^{\downarrow})](2k/\Lambda-1)$.
Here the argument is formulated for a fixed subspace configuration,
rather than as a group-theoretic statement over all representations.
We give a self-contained proof to make the connection to weighted
child Gram operators explicit.
\end{remark}

\begin{proof}[Proof of \cref{lem:weighted-one-step-gram}]
If every $\mathcal U_i$ is zero, then $\theta=0$ and the conclusion
holds. Otherwise $0<\theta<1$, using the zero intersection and finite
dimensionality.
The key is to encode subspace overlap in a mean-zero
vector field, where the Markov-chain spectral gap becomes available. Local
geometric bounds then connect this overlap to graph energy.

\paragraph{Vector fields and the Markov-chain gap.}
We work with vector fields assigning a vector $f_i\in\II$
to each state $i\in N$ of the Markov chain $P$, in the Hilbert space
\[
\mathcal H_\pi=L^2(\pi;\II),
\qquad \|f\|_\pi^2=\sum_{i\in N}\pi_i\|f_i\|^2.
\]
The constant fields assign the same vector $u\in\II$ to every state:
\[
\mathcal H_{\mathrm{const}}
=\{f\in\mathcal H_\pi:\text{there exists }u\in\II
\text{ such that }f_i=u\text{ for all }i\in N\}.
\]
The second subspace allows the vector to vary with the state, but requires
each value $f_i$ to lie in the corresponding subspace $\mathcal U_i$:
\[
\mathcal U_\pi
=\{f\in\mathcal H_\pi:f_i\in\mathcal U_i\text{ for all }i\in N\}.
\]
Their intersection is zero because $\bigcap_i\mathcal U_i=\{0\}$.
We use their sum and its mean-zero part,
\[
\mathcal Z=\mathcal H_{\mathrm{const}}+\mathcal U_\pi,
\qquad
\mathcal Z_0=\mathcal Z\cap\mathcal H_{\mathrm{const}}^\perp.
\]
Thus a field in $\mathcal Z$ is a common vector plus components in
the local removed spaces; membership in $\mathcal Z_0$ additionally
means $\sum_i\pi_i f_i=0$.

The Markov-chain Laplacian acts on these fields by
\[
(\Delta f)_i:=f_i-\sum_jP_{ij}f_j.
\]
In particular, it annihilates every constant field, since
$\sum_jP_{ij}=1$.
The space $\mathcal Z$ need not be invariant under $\Delta$, so we
apply $\Delta$ and then project back to $\mathcal Z$:
\[
T=\operatorname{Proj}_{\mathcal Z}\Delta|_{\mathcal Z}.
\]

\medskip
\noindent\textbf{Claim.}
The operator $T$ is self-adjoint and positive semidefinite, with
\[
\ker T=\mathcal H_{\mathrm{const}},
\qquad
\langle f,Tf\rangle_\pi\geq\lambda\|f\|_\pi^2
\quad(f\in\mathcal Z_0).
\]
In particular, $T$ preserves $\mathcal Z_0$ and is invertible on it.

\begin{proof}[Proof of the claim]
This compression preserves the energy. Indeed, for $f\in\mathcal Z$
the projection can be removed from the inner product, and reversibility gives
\[
\langle f,Tf\rangle_\pi
=\langle f,\Delta f\rangle_\pi
 =\frac12\sum_{i,j}\pi_iP_{ij}\|f_i-f_j\|^2.
\]
Thus $T$ is self-adjoint and positive semidefinite. The energy vanishes
exactly when the field has the same value at the two endpoints of
every edge. Since the chain is connected, such a field is constant
on $N$. All constant fields belong to $\mathcal Z$, so
\[
\ker T=\mathcal H_{\mathrm{const}}.
\]

Removing these zero-energy fields gives the orthogonal decomposition
\[
\mathcal Z=\mathcal H_{\mathrm{const}}\oplus\mathcal Z_0,
\qquad
\mathcal Z_0=\left\{f\in\mathcal Z:\sum_i\pi_i f_i=0\right\}.
\]
On this mean-zero part, the scalar Markov-chain Poincar\'e inequality,
applied in each orthonormal coordinate of $\II$ and summed, yields
\[
\langle f,Tf\rangle_\pi
\geq\lambda\|f\|_\pi^2
\qquad(f\in\mathcal Z_0).
\]
Since $T$ is self-adjoint and annihilates $\mathcal H_{\mathrm{const}}$,
it preserves $\mathcal Z_0$. Its restriction to this space is therefore
invertible, which will allow us to solve $T\psi=x$ for a mean-zero field $x$.
\end{proof}

\paragraph{Encoding Gram overlap in a mean-zero field.}
We now construct a field to which the preceding gap estimate applies,
while keeping its two orthogonal components explicitly tied to $\theta$.
The operator $\sum_i\pi_iE_i$ is positive semidefinite and self-adjoint,
so its norm $\theta$ is its largest eigenvalue. Choose a unit
eigenvector $v$ realizing this value:
\[
\sum_i\pi_iE_iv=\theta v.
\]
Thus the field of local projections $(E_iv)_i$ has average $\theta v$.
Since each $E_i$ is an orthogonal projection,
\[
\sum_i\pi_i\|E_iv\|^2
=\left\langle v,\sum_i\pi_iE_iv\right\rangle=\theta,
\qquad
\sum_i\pi_i\|(\id-E_i)v\|^2=1-\theta.
\]
Subtracting the average therefore gives
\[
\sum_i\pi_i\|E_iv-\theta v\|^2
=\theta-2\theta^2+\theta^2=\theta(1-\theta).
\]
Because $0<\theta<1$, we can normalize this centered field by setting
\[
x_i:=\frac{E_iv-\theta v}{\sqrt{\theta(1-\theta)}}.
\]
By construction,
\[
\sum_i\pi_i x_i=0,\qquad
\|x\|_\pi^2
=\frac{\theta-\theta^2}{\theta(1-\theta)}=1.
\]
The field is a sum of a constant field and a field in
$\mathcal U_\pi$, hence $x\in\mathcal Z_0$.
Its two orthogonal components at state $i\in N$ are
\[
E_i x_i=\frac{(1-\theta)E_iv}{\sqrt{\theta(1-\theta)}},
\qquad
(\id-E_i)x_i=\frac{-\theta(\id-E_i)v}{\sqrt{\theta(1-\theta)}}.
\]
Using the two averaged squared norms of $v$ above, we obtain
\begin{equation}\label{eq:weighted-complementary-angle}
\begin{aligned}
\sum_i\pi_i\|E_i x_i\|^2
&=\frac{(1-\theta)^2}{\theta(1-\theta)}\,\theta=1-\theta,\\
\sum_i\pi_i\|(\id-E_i)x_i\|^2
&=\frac{\theta^2}{\theta(1-\theta)}\,(1-\theta)=\theta.
\end{aligned}
\end{equation}

Thus $x$ is a unit mean-zero field whose average squared component
outside the local spaces $\mathcal U_i$ is exactly the Gram norm
$\theta$. It records the subspace geometry while lying in the
mean-zero sector controlled by the Markov-chain gap.

\paragraph{Combining child geometry with the Markov-chain gap.}
Recall that $c^{\downarrow}$ is the largest child Gram norm in the hierarchy,
as defined in \eqref{eq:weighted-one-step-data}. We now combine this
child-level geometric control with the Markov-chain gap $\lambda$
to bound the parent Gram norm $\theta$ encoded in $x$.

\medskip
\noindent\textbf{Claim.}
The Gram norm $\theta$ satisfies
\[
c^{\downarrow}(1-\theta)+\theta\leq\frac{2c^{\downarrow}}{\lambda}.
\]

\begin{proof}[Proof of the claim]
By \eqref{eq:weighted-complementary-angle}, the quantity to be bounded is
\[
c^{\downarrow}(1-\theta)+\theta
=c^{\downarrow}\sum_i\pi_i\|E_i x_i\|^2
+\sum_i\pi_i\|(\id-E_i)x_i\|^2.
\]
We estimate the two terms on the right in order: the first by Jensen's
inequality, and the second by the child Gram bound. Adding the estimates
\eqref{eq:weighted-local-component-bound} and
\eqref{eq:weighted-complement-component-bound}, and then applying the
Markov-chain gap, will prove the claim.

The Markov-chain gap makes $T$ invertible on $\mathcal Z_0$, so we can choose
$\psi\in\mathcal Z_0$ with $T\psi=x$.
Both estimates use the relation
$x=T\psi=\operatorname{Proj}_{\mathcal Z}\Delta\psi$
and the expression of $\Delta\psi$ in terms of edge differences.
Since $\sum_jP_{ij}=1$,
\[
(\Delta\psi)_i
=\psi_i-\sum_jP_{ij}\psi_j
=\sum_jP_{ij}(\psi_i-\psi_j).
\]
Thus the relevant quantities are the differences between the values of
$\psi$ at adjacent states. For each edge $P_{ij}>0$, write
\[
d_{ij}:=\psi_i-\psi_j,
\qquad (\Delta\psi)_i=\sum_jP_{ij}d_{ij}.
\]

\medskip
\noindent\textit{The first term: Jensen's inequality.}
Because $\mathcal U_\pi\subseteq\mathcal Z$, projecting onto
$\mathcal Z$ preserves the $\mathcal U_\pi$ component. Consequently,
\[
E_ix_i=E_i(\Delta\psi)_i=\sum_jP_{ij}E_id_{ij}.
\]
Jensen's inequality, followed by multiplication by $c^{\downarrow}\geq0$, gives
\begin{equation}\label{eq:weighted-local-component-bound}
c^{\downarrow}\sum_i\pi_i\|E_i x_i\|^2
\leq c^{\downarrow}\sum_{i,j}\pi_iP_{ij}\|E_id_{ij}\|^2.
\end{equation}

\medskip
\noindent\textit{The second term: the child Gram bound.}
Since $\psi\in\mathcal H_{\mathrm{const}}+\mathcal U_\pi$, we can
write $\psi_i=u+u_i$, where $u\in\II$ is common to all states and
$u_i\in\mathcal U_i$. The common component cancels, giving
\[
d_{ij}=u_i-u_j\in\mathcal U_i+\mathcal U_j
\subseteq\mathcal B_{ij}.
\]
Because $\mathcal U_i\subseteq\mathcal B_{ij}$, removing the
$\mathcal U_i$ component keeps the difference in $\mathcal B_{ij}$
and makes it orthogonal to $\mathcal U_i$. Hence
\[
(\id-E_i)d_{ij}\in
\mathcal B_{ij}\cap\mathcal U_i^\perp=\mathcal V_{ij}.
\]

For a fixed state $i\in N$, consider the map that adds vectors from
these spaces with weights $\sqrt{P_{ij}}$:
\[
\bigoplus_{j:P_{ij}>0}\mathcal V_{ij}\longrightarrow\II,
\qquad (z_j)_j\longmapsto\sum_j\sqrt{P_{ij}}\,z_j.
\]
This is the weighted synthesis map, with the orthogonal direct-sum
metric on its domain. By \cref{lem:gram-norm}, its squared operator
norm is $\|\sum_jP_{ij}\operatorname{Proj}_{\mathcal V_{ij}}\|$, which is at most
$c^{\downarrow}$ by \eqref{eq:weighted-one-step-data}. Hence
\[
\left\|\sum_j\sqrt{P_{ij}}\,z_j\right\|^2
\leq c^{\downarrow}\sum_j\|z_j\|^2
\qquad(z_j\in\mathcal V_{ij}).
\]
We apply this bound to the edge differences by taking
\[
z_j=\sqrt{P_{ij}}(\id-E_i)d_{ij}.
\]
The weighted sum on the left is then
$(\id-E_i)(\Delta\psi)_i$, so
\[
\|(\id-E_i)(\Delta\psi)_i\|^2
\leq c^{\downarrow}\sum_jP_{ij}\|(\id-E_i)d_{ij}\|^2.
\]
Because $\mathcal U_\pi\subseteq\mathcal Z$, the projection
$\operatorname{Proj}_{\mathcal Z}$ preserves the $\mathcal U_\pi$ component and
contracts its orthogonal complement. Using
$x=\operatorname{Proj}_{\mathcal Z}\Delta\psi$, we consequently have
\begin{equation}\label{eq:weighted-complement-component-bound}
\sum_i\pi_i\|(\id-E_i)x_i\|^2
\leq\sum_i\pi_i\|(\id-E_i)(\Delta\psi)_i\|^2
\leq c^{\downarrow}\sum_{i,j}\pi_iP_{ij}\|(\id-E_i)d_{ij}\|^2.
\end{equation}

\medskip
\noindent\textit{Combining the estimates.}
Add \eqref{eq:weighted-local-component-bound} and
\eqref{eq:weighted-complement-component-bound}. Both orthogonal
parts of each edge difference have the same coefficient $c^{\downarrow}$.
The left-hand side is the quantity identified at the start of the proof;
on the right, the two orthogonal parts of each $d_{ij}$ recombine.
Thus orthogonality and reversibility yield
\begin{align*}
c^{\downarrow}(1-\theta)+\theta
&\leq c^{\downarrow}\sum_{i,j}\pi_iP_{ij}\|d_{ij}\|^2
=2c^{\downarrow}\langle\psi,\Delta\psi\rangle\\
&=2c^{\downarrow}\langle\psi,T\psi\rangle
\leq\frac{2c^{\downarrow}}{\lambda}\|T\psi\|^2
=\frac{2c^{\downarrow}}{\lambda}.
\end{align*}
The penultimate inequality uses $T\succeq\lambda\id$ on
$\mathcal Z_0$, and the last equality uses $T\psi=x$ and
$\|x\|_\pi=1$.
\end{proof}

Rearranging the claim now gives
\[
(1-c^{\downarrow})\theta\leq\frac{2c^{\downarrow}}{\lambda}-c^{\downarrow}
=\frac{(2-\lambda)c^{\downarrow}}{\lambda}.
\]
Division by $1-c^{\downarrow}>0$ proves the result.
\end{proof}

\begin{proof}[Proof of \cref{prop:weighted-gram-descent}]
Fix $\tau$ with $m_\tau\geq3$. Apply \cref{lem:weighted-one-step-gram}
with the spaces in \eqref{eq:weighted-child-increments}.
Nesting gives $\mathcal U_i+\mathcal U_j\subseteq\mathcal B_{ij}$,
and the span condition gives $\bigcap_i\mathcal U_i=\{0\}$.
By \eqref{eq:weighted-link-consistency}, the projection sum for a
fixed $i$ in \cref{lem:weighted-one-step-gram} is exactly the
child's weighted increment sum, extended by zero from $\II_{\tau i}$
to $\II_\tau$. Zero extension does not change its norm, so the local
parameter in the lemma is
\[
c^{\downarrow}=\max_{i\in N_\tau}
\left\|\sum_j\pi_{\tau i}(j)
\operatorname{Proj}_{\mathcal U_{\tau i,j}}|_{\II_{\tau i}}\right\|
=\max_{i\in N_\tau}\theta_{\tau i}.
\]
Therefore
\begin{equation}\label{eq:weighted-hierarchy-gram-step}
\theta_\tau\leq
F_{\lambda_\tau}\left(\max_{i\in N_\tau}\theta_{\tau i}\right).
\end{equation}
Starting at \eqref{eq:weighted-bottom-input}, use monotonicity of
$F_\lambda$ on $[0,1)$ and
\eqref{eq:weighted-retention-condition} to conclude inductively that
$\theta_\tau\leq \overline{\theta}_\tau<1$ at every internal node. In particular,
the hypothesis $c^{\downarrow}<1$ of the one-step lemma holds at each application.
\end{proof}

\subsection{The complete uniform hierarchy}
\label{sec:weighted-complete-case}
We now specialize the general descent theorem to the complete uniform
hierarchy, where the recursion yields an explicit spectral-gap lower bound.
For $X=2^V$, with $|V|=n$, there is one facet, so $d=n$.
Its induced weights are uniform at every node.
Writing $m:=m_\tau=n-|\tau|$, for $m\geq2$ we have
\begin{equation}\label{eq:weighted-complete-parameters}
\pi_\tau(i)=\frac1m,\qquad
P_\tau(i,j)=\frac1{m-1}\ (i\ne j),\qquad
\operatorname{gap}(\id-P_\tau)=\frac m{m-1}.
\end{equation}
Thus the link-walk gap input is explicit and only the bottom geometric
bound remains to be supplied. For comparison with unweighted
increment Gram operators, define the unweighted Gram operator
$\bm C_\tau$ and record its norm:
\begin{equation}\label{eq:weighted-uniform-gram-notation}
\bm C_\tau:=m\bm C_\tau^\pi=(J_{\tau,i}^*J_{\tau,j})_{i,j},
\qquad
\|\bm C_\tau\|
=\left\|\sum_iE_{\tau,i}|_{\II_\tau}\right\|=m\theta_\tau.
\end{equation}

At $m=2$, with remaining vertices $i,j$, the unweighted Gram operator is
\[
\begin{pmatrix}
\id_{\mathcal U_{\tau,i}}&J_{\tau,i}^*J_{\tau,j}\\
J_{\tau,j}^*J_{\tau,i}&\id_{\mathcal U_{\tau,j}}
\end{pmatrix}.
\]
If at least one increment is nonzero, its norm is
$1+\|J_{\tau,i}^*J_{\tau,j}\|$: the off-diagonal self-adjoint
block has extreme eigenvalues equal to plus and minus the largest
singular value of $J_{\tau,i}^*J_{\tau,j}$, with the zero-block case
immediate. If both increments vanish, $\theta_\tau=0$.
Hence an angle bound $\|J_{\tau,i}^*J_{\tau,j}\|\leq\epsilon$
implies $\theta_\tau\leq(1+\epsilon)/2$ in either case.

\begin{corollary}[Complete uniform descent]
\label{cor:weighted-complete-gap}
Let $X=2^V$, $|V|=n\geq2$, and assume the subspace conditions
\eqref{eq:weighted-space-conditions} and the energy assembly
\eqref{eq:weighted-energy-assembly} with weights $1/(n-|\tau|)$.
Suppose every stopping energy has the common lower bound $\delta>0$,
and
\begin{equation}\label{eq:weighted-complete-input}
\theta_\tau\leq\frac{1+\epsilon}{2}\quad(|\tau|=n-2),
\qquad 0\leq\epsilon<\frac1{n-1}.
\end{equation}
Then, at each depth $k=|\tau|$ with $0\leq k\leq n-2$, the
levelwise Gram bound is
\begin{equation}\label{eq:weighted-complete-gram}
\theta_\tau\leq \overline{\theta}(k):=
\frac{1+\epsilon}{(n-k)[1-(n-k-2)\epsilon]}<1.
\end{equation}
If $\II_\varnothing\ne\{0\}$, then $\LOp_\varnothing$ is positive
definite and its spectral gap satisfies
\begin{equation}\label{eq:weighted-complete-gap}
\operatorname{gap}(\LOp_\varnothing)
\geq\frac{\delta[1-(n-1)\epsilon]}n.
\end{equation}
\end{corollary}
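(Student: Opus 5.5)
The corollary follows from \cref{thm:weighted-link-gap}. I will specialize its inputs to $X=2^V$, solve the Gram recursion \eqref{eq:weighted-gram-recursion} in closed form by downward induction on the depth, and then evaluate the path product in \eqref{eq:weighted-root-path-bound}, which telescopes.

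\textbf{Inputs.} Write $m=n-k$ for a node at depth $k=|\tau|$.
\begin{itemize}
\item By \eqref{eq:weighted-complete-parameters}, $\pi_\tau(i)=1/m$. So the prescribed assembly weights $1/(n-|\tau|)$ are exactly the child probabilities, and \eqref{eq:weighted-energy-assembly} holds in the required normalization.
\item The link-walk gap may be taken to be $\lambda_\tau=m/(m-1)$ for $m\geq3$, and this lies in $(0,2]$.
\item The bottom level is $m=2$, i.e.\ $k=n-2$. There I take $\overline{\theta}(n-2)=(1+\epsilon)/2$, which is the hypothesis \eqref{eq:weighted-complete-input}. It is $<1$ since $\epsilon<1$, and it agrees with \eqref{eq:weighted-complete-gram} at $k=n-2$.
\end{itemize}
With $2-\lambda=(m-2)/(m-1)$, the update becomes
\[
F_{\lambda_\tau}(c)=\frac{m-2}{m}\cdot\frac{c}{1-c}.
\]
Because the claimed bound depends only on the level, the maximum over children in \eqref{eq:weighted-gram-recursion} is just the child-level value $c=\overline{\theta}(k+1)$.

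\textbf{Inductive step for \eqref{eq:weighted-complete-gram}.} Assume $c=(1+\epsilon)/\bigl((m-1)[1-(m-3)\epsilon]\bigr)$. Using the identity $(m-1)(m-3)+1=(m-2)^2$, one gets
\[
1-c=\frac{(m-2)[1-(m-2)\epsilon]}{(m-1)[1-(m-3)\epsilon]}.
\]
Hence
\[
\frac{c}{1-c}=\frac{1+\epsilon}{(m-2)[1-(m-2)\epsilon]},
\qquad
F_{\lambda_\tau}(c)=\frac{1+\epsilon}{m[1-(m-2)\epsilon]},
\]
which is \eqref{eq:weighted-complete-gram} at level $k$. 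Positivity of the bracketed factors is guaranteed by $\epsilon<1/(n-1)$.

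\textbf{Retention condition.} Condition \eqref{eq:weighted-retention-condition} is equivalent to $F_{\lambda_\tau}(c)<1$. By the formula just derived, this means $m[1-(m-2)\epsilon]>1+\epsilon$. Rearranging gives $(m-1)-(m-1)^2\epsilon>0$, i.e.\ $\epsilon<1/(m-1)$. The most restrictive case is $m=n$, and there it is exactly the hypothesis $\epsilon<1/(n-1)$. This is the one delicate point: one must check that the retention condition holds at every level and that it coincides with the stated range of $\epsilon$, rather than being an extra assumption.

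\textbf{Gap bound.} \cref{thm:weighted-link-gap} then gives $\theta_\tau\leq\overline{\theta}(k)<1$ at every internal node. For the bound \eqref{eq:weighted-root-path-bound}, first compute the retention factor:
\[
1-\overline{\theta}(k)=\frac{(m-1)[1-(m-1)\epsilon]}{m[1-(m-2)\epsilon]}.
\]
Every root-to-stop path visits exactly one node at each level $m=n,n-1,\dots,2$, and all terminal bounds equal $\delta$. So the minimum over paths equals $\delta$ times the product of these factors over $m=2,\dots,n$. Both factors in the product telescope:
\[
\prod_{m=2}^n\frac{m-1}{m}=\frac1n,
\qquad
\prod_{m=2}^n\frac{1-(m-1)\epsilon}{1-(m-2)\epsilon}=1-(n-1)\epsilon.
\]
This yields \eqref{eq:weighted-complete-gap}. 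Positive definiteness of $\LOp_\varnothing$ on a nonzero root space comes directly from the theorem.
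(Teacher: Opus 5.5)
Your proposal is correct and follows essentially the same route as the paper: specialize \cref{thm:weighted-link-gap} to the complete uniform parameters, verify the closed form $\overline{\theta}(k)$ by substituting into $F_{m/(m-1)}(c)=\frac{m-2}{m}\frac{c}{1-c}$, check the retention condition from $\epsilon<1/(n-1)$, and telescope the factors $1-\overline{\theta}(k)$ to $\delta[1-(n-1)\epsilon]/n$. The only cosmetic difference is that the paper reads the bound off the nodewise estimate \eqref{eq:weighted-gap-conclusion}, which also gives bounds at every proper face, whereas you read it directly from the path formula \eqref{eq:weighted-root-path-bound}; for the root statement these are equivalent.
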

\begin{proof}
The base is $\overline{\theta}(n-2)=(1+\epsilon)/2$.
For a parent at depth $k\leq n-3$, write $m=n-k$ for the
number of remaining vertices. Substitution gives
\[
F_{m/(m-1)}(\overline{\theta}(k+1))
=\frac{m-2}{m}\frac{\overline{\theta}(k+1)}{1-\overline{\theta}(k+1)}
=\overline{\theta}(k).
\]
The assumption on $\epsilon$ ensures
$\overline{\theta}(k+1)<m/[2(m-1)]$, as required by
\eqref{eq:weighted-retention-condition}. At any internal depth $k$,
with $m=n-k\geq2$, the norm-retention factor is
\[
1-\overline{\theta}(k)
=\frac{(m-1)[1-(m-1)\epsilon]}{m[1-(m-2)\epsilon]}.
\]
Indexing the subsequent factors by their remaining vertex counts
$q=2,\ldots,m$, both the ratios $(q-1)/q$
and the ratios $[1-(q-1)\epsilon]/[1-(q-2)\epsilon]$ telescope.
Multiplying by the terminal energy lower bound $\delta$ gives
\begin{equation}\label{eq:weighted-complete-telescope}
\delta\prod_{q=2}^m(1-\overline{\theta}(n-q))
=\frac{\delta[1-(m-1)\epsilon]}m.
\end{equation}
The nodewise estimate \eqref{eq:weighted-gap-conclusion} in the proof of
\cref{thm:weighted-link-gap} therefore gives
\[
\Eff_\tau(b)\geq
\frac{\delta[1-(m-1)\epsilon]}m\|b\|^2
\qquad(b\in\II_\tau)
\]
at every proper face; the empty product at $m=1$ gives the terminal energy lower bound.
Taking $\tau=\varnothing$ proves the asserted positivity and root gap.
\end{proof}

\begin{remark}[The same endpoint bound from subspace angles]
\label{rem:weighted-kassabov}
The coefficient in \cref{cor:weighted-complete-gap} also follows from
Kassabov's distance estimate~\cite[Theorem~5.1]{kassabov2011subspace}.
Take $B_i=\II_{V\setminus\{i\}}^\perp$ inside $\II_\varnothing$.
Under the span condition, the bottom Gram assumption bounds the
pairwise Friedrichs-angle cosines of these spaces by $\epsilon$
(with zero increments contributing zero).
Kassabov's estimate, combined with uniform energy assembly and terminal
energy lower bounds, gives the same root bound $\delta[1-(n-1)\epsilon]/n$;
applying it within each face also gives the node bounds.
Thus the complete uniform coefficient is inherited from subspace-angle
estimates: our proof exhibits its propagation through the hierarchy
within the general formulation, rather than a new endpoint constant.
\end{remark}

The factor $1/n$ in the root bound corresponds to normalized
averaging over root labels.
The dependence of the gap lower bound on size-dependent terminal and
overlap inputs is discussed in \cref{par:outlook-inputs}.

\subsection{Descent with Gram bounds at every level}
\label{sec:levelwise-gram-descent}
The main theorem derives Gram bounds throughout the hierarchy from
bottom-level Gram bounds and intermediate link-walk gaps.
Alternatively, one may supply Gram bounds directly at every internal
node and use only the energy-transport step. The following result
formulates this second route, which will be used in the comparison
with spectral independence in \cref{sec:classical-si}.

\begin{corollary}[Descent from levelwise Gram bounds]
\label{cor:levelwise-gram-descent}
Consider the weighted face hierarchy with subspaces and localization
maps satisfying \cref{eq:weighted-space-conditions,eq:weighted-localization}.
Assume the energy assembly and positive terminal energy lower bounds in
\cref{eq:weighted-energy-assembly,eq:weighted-terminal-input}.
Suppose that Gram bounds are given at every internal node:
\[
\theta_\tau\leq\overline{\theta}_\tau<1
\qquad(\tau\in\mathscr T_{\mathrm{int}}),
\]
where the Gram norm $\theta_\tau$ is defined by \eqref{eq:weighted-gram}.
If $\II_\varnothing\ne\{0\}$, then $\LOp_\varnothing$ is positive
definite and
\[
\operatorname{gap}(\LOp_\varnothing)
\geq\min_{p\in\mathcal P_{\mathrm{stop}}}
\left\{\delta_{t(p)}
\prod_{k=0}^{d-2}(1-\overline{\theta}_{\tau_k})\right\}.
\]
Here $p=(\tau_0=\varnothing,\tau_1,\ldots,\tau_{d-1}=t(p))$
is a path to a stopping node, and each factor
$1-\overline{\theta}_{\tau_k}$ uses the given Gram bound at the
corresponding internal node.
Moreover, if common bounds are available at each level,
\[
\theta_\tau\leq\overline{\theta}(k)<1
\qquad(|\tau|=k,\quad k=0,\ldots,d-2),
\]
then
\[
\operatorname{gap}(\LOp_\varnothing)
\geq
\left(\min_{\sigma\in\mathscr T_{\mathrm{stop}}}\delta_\sigma\right)
\prod_{k=0}^{d-2}\bigl(1-\overline{\theta}(k)\bigr).
\]
No link-walk gap hypothesis or Gram-recursion condition is required.
\end{corollary}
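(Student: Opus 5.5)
The plan is to rerun the energy-transport half of the proof of \cref{thm:weighted-link-gap} with the supplied levelwise bounds in place of \cref{prop:weighted-gram-descent}. That proof used the geometric recursion only through one fact: the inequality $\theta_\tau\leq\overline{\theta}_\tau<1$ at every internal node. Here this inequality is a hypothesis. Hence no link-walk gap and no retention condition \eqref{eq:weighted-retention-condition} enter.

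First, fix an internal node $\tau$. The norm-retention identity \eqref{eq:weighted-norm-retention} follows only from the nesting $\II_{\tau i}\subseteq\II_\tau$, the orthogonal splitting $\II_\tau=\II_{\tau i}\oplus\mathcal U_{\tau,i}$, and $\sum_i\pi_\tau(i)=1$. Together with the given bound, it yields
\[
\sum_{a:\tau\to\tau i}\pi_\tau(i)R_a^*R_a\succeq(1-\overline{\theta}_\tau)\id_{\II_\tau}.
\]
Next, define $\delta_\tau$ recursively by \eqref{eq:weighted-gap-recursion}, now using the supplied $\overline{\theta}_\tau$. We then prove $\Eff_\tau(b)\geq\delta_\tau\|b\|^2$ by downward induction on depth. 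The base case at stopping nodes is \eqref{eq:weighted-terminal-input}. For the inductive step, combine the assembly inequality \eqref{eq:weighted-energy-assembly} with the children's bounds:
\[
\Eff_\tau(b)\geq\Bigl(\min_i\delta_{\tau i}\Bigr)\sum_{a:\tau\to\tau i}\pi_\tau(i)\|R_ab\|^2\geq\delta_\tau\|b\|^2.
\]

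Then unfold the recursion along paths exactly as in the displayed expansion in that proof. Since all retention factors are positive, the nested minima combine into a single minimum over $\mathcal P_{\mathrm{stop}}$ of $\delta_{t(p)}\prod_k(1-\overline{\theta}_{\tau_k})$. Each factor here is positive because $\overline{\theta}_\tau<1$ and $\delta_\sigma>0$. Therefore, on the nonzero root space, $\LOp_\varnothing\succeq\delta_\varnothing\id$ with $\delta_\varnothing>0$. So $\LOp_\varnothing$ is positive definite, its kernel is zero, and $\operatorname{gap}(\LOp_\varnothing)\geq\delta_\varnothing$ by \cref{def:spectral-gap}.

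For the levelwise statement, bound each path term from below. Replace $\delta_{t(p)}$ by $\min_\sigma\delta_\sigma$, and replace each $\overline{\theta}_{\tau_k}$ by the common bound $\overline{\theta}(k)$; this is legitimate because the node $\tau_k$ on any root-to-stop path has $|\tau_k|=k$. The product then no longer depends on $p$.

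There is no genuine obstacle here. The only point needing care is to check that the transport step never used the link-walk hypotheses, which holds because retention comes purely from the orthogonal splitting.
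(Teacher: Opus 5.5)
Your proposal is correct and follows the paper's proof essentially step for step: retention factors from \eqref{eq:weighted-norm-retention} with the supplied Gram bounds, then the energy induction of \cref{sec:gap-transport-proof} with $\delta_\tau$ defined by \eqref{eq:weighted-gap-recursion}, then unfolding the recursion along paths. For the levelwise part, the paper states your ``replacement'' step more precisely as applying the first part with the choice $\overline{\theta}_\tau:=\overline{\theta}(|\tau|)$, since no separate per-node bounds are given there; the product is then the same for every path and moves outside the minimum, as you observe.
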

\begin{proof}
The given Gram bounds and \eqref{eq:weighted-norm-retention} give
retention factors $1-\overline{\theta}_\tau$.
Define the propagated energy lower bounds $\delta_\tau$
by \eqref{eq:weighted-gap-recursion}
and repeat the energy induction in \cref{sec:gap-transport-proof}.
For the levelwise conclusion, take
$\overline{\theta}_\tau=\overline{\theta}(|\tau|)$.
Every path to a stopping node passes through exactly one internal node
at each level $k=0,\ldots,d-2$, so the product is independent of the
path and can be taken outside the minimum.
\end{proof}

\section{Comparison with Classical Local-to-Global Methods}
\label{sec:classical-comparison}
Two correspondences explain what the operator framework extends and
what its geometric quantities measure in classical settings:
\begin{enumerate}
\item \emph{Energy decomposition.} Total variance is a scalar
realization of ROL: effective energy is variance, and the gluing
defect is the variance of conditional means
(\cref{prop:classical-variance-rol}).
\item \emph{Subspace geometry.} For full-support binary laws,
conditional influence describes the overlap encoded by increment Gram
operators (\cref{prop:classical-si-gram}). Spectral-independence bounds on
conditional influence therefore yield Gram bounds, recovering the
classical spectral-independence gap product
(\cref{cor:classical-si-gap}).
\end{enumerate}

\subsection{Total variance as a classical realization of ROL}
\label{sec:classical-dictionaries}
We take the scalar specialization of the link construction in
\cref{sec:toolkit-links}: set $q=1$ and $W_\sigma=w_\sigma>0$.
The following setting is shared by the total-variance decomposition
and the down-up gap argument below.
Use the weighted complex of \cref{sec:weighted-hierarchy}, with
exactly $d$ vertices in each facet, where $d\ge2$, and positive
scalar weights $w_\sigma$.
Let $\Omega_\tau$ be the facets containing $\tau$, and write
\[
W_\tau=\sum_{\sigma\in\Omega_\tau}w_\sigma,\qquad
m_\tau=d-|\tau|.
\]
The facet distribution and, for $m_\tau\ge1$, the distribution on
legal next vertices $N_\tau$ are
\[
\mu_\tau(\sigma)=\frac{w_\sigma}{W_\tau},\qquad
\pi_\tau(i)=\frac{W_{\tau i}}{m_\tau W_\tau}
=\frac1{m_\tau}\Pr_{\sigma\sim\mu_\tau}
\bigl[i\in\sigma\setminus\tau\bigr].
\]
Thus sampling a facet with law $\mu_\tau$ and then choosing one of
its $m_\tau$ remaining vertices uniformly gives the vertex law
$\pi_\tau$. Conditional on choosing $i$, the facet law is
$\mu_{\tau i}$.
Throughout this subsection, $f\in L^2(\mu_\tau)$ is a scalar-valued
function on $\Omega_\tau$. At a child $\tau i$, we use its restriction
$f|_{\Omega_{\tau i}}$; expectations and variances under $\mu_{\tau i}$
are written with the same symbol $f$.

\subsubsection{Conditional means and the gluing defect}
We first identify the ROL data in this scalar setting.
The retained variable is $f$, while the extra $\C$ summand in
$\HH_\tau=\II_\tau\oplus\C$ carries the common scalar $\alpha$
over which we minimize.
For an arrow $a_i:\tau\to\tau i$, the ROL data are
\[
\begin{aligned}
\II_\tau&=\C^{\Omega_\tau},
&\HH_\tau&=\II_\tau\oplus\C,
&\Gamma_\tau(f,\alpha)&=f,\\
R_i f&=f|_{\Omega_{\tau i}},
&r_i(f,\alpha)&=(R_i f,\alpha),
&\omega_{a_i}&=\frac1{m_\tau}.
\end{aligned}
\]
The full energy is
\[
\Energy_\tau(f,\alpha)
=\sum_{\sigma\in\Omega_\tau}w_\sigma|f_\sigma-\alpha|^2.
\]
Localization restricts $f$ but preserves the same scalar $\alpha$
across all children. Independent minimization at the children can
instead choose a different scalar at each child; the gluing defect
measures the cost of reconciling those choices.

For these ROL data, \cref{lem:link-effective-variance} identifies the minimizing response
and effective energy as
\begin{equation}\label{eq:scalar-moment-dictionary}
\Av_\tau(f)=\mathbb E_{\mu_\tau}f,\qquad
\Eff_\tau(f)=\Var_\tau(f)
=W_\tau\operatorname{Var}_{\mu_\tau}(f).
\end{equation}
Here $\Var_\tau$ is the unnormalized energy, whereas
$\operatorname{Var}_{\mu_\tau}$ is probability variance.
The next proposition identifies the decomposition across children
with the classical total-variance formula.

\begin{proposition}[Total variance as a ROL decomposition]
\label{prop:classical-variance-rol}
For $m_\tau\ge1$, write $R_i f=f|_{\Omega_{\tau i}}$.
The effective-energy decomposition is
\begin{equation}\label{eq:classical-effective-decomposition}
\Eff_\tau(f)
=\frac1{m_\tau}\sum_i\Eff_{\tau i}(R_i f)
+W_\tau\operatorname{Var}_{i\sim\pi_\tau}
  \bigl(\mathbb E_{\mu_{\tau i}}f\bigr).
\end{equation}
The final term is the gluing defect for ROL step weights $1/m_\tau$.
Equivalently, dividing by $W_\tau$ gives the law of total variance:
\begin{equation}\label{eq:classical-link-variance}
\operatorname{Var}_{\mu_\tau}(f)
=\sum_i\pi_\tau(i)\operatorname{Var}_{\mu_{\tau i}}(f)
+\operatorname{Var}_{i\sim\pi_\tau}
  \bigl(\mathbb E_{\mu_{\tau i}}f\bigr).
\end{equation}
\end{proposition}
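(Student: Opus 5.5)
The plan is to specialize the vector-valued link computation \eqref{eq:link-splitting}--\eqref{eq:link-defect} to $q=1$, $W_\sigma=w_\sigma$, and then rewrite the defect probabilistically. Three ingredients are used.

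First, \cref{lem:link-effective-variance} and \eqref{eq:scalar-moment-dictionary} identify $\Eff_\tau(f)=W_\tau\operatorname{Var}_{\mu_\tau}(f)$. They also identify $\Eff_{\tau i}(R_if)=W_{\tau i}\operatorname{Var}_{\mu_{\tau i}}(f)$, with child means $\alpha_i=\mathbb E_{\mu_{\tau i}}f$ and parent mean $\alpha=\mathbb E_{\mu_\tau}f$.

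Second, the compatibility and assembly data \eqref{eq:link-compatible-assembly} satisfy \cref{lem:rol-one-step} with weights $1/m_\tau$. That assembly rests on the incidence identity \eqref{eq:incidence}: each facet containing $\tau$ contains exactly $m_\tau$ legal next vertices, by purity. So the gluing defect is $\Delta_\tau=\Eff_\tau-\frac1{m_\tau}\sum_i\Eff_{\tau i}\circ R_i$.

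For the explicit form of the defect, I will verify \eqref{eq:link-splitting} directly rather than cite it.
\begin{enumerate}
\item Write $\Eff_\tau(f)=\sum_{\sigma\in\Omega_\tau}w_\sigma|f_\sigma-\alpha|^2$ and insert the incidence count: $\sum_{\sigma\in\Omega_\tau}=\frac1{m_\tau}\sum_i\sum_{\sigma\in\Omega_{\tau i}}$.
\item Apply \eqref{eq:mean-response} on each child with the non-optimal representative $\alpha$. This gives $\sum_{\sigma\in\Omega_{\tau i}}w_\sigma|f_\sigma-\alpha|^2=\Var_{\tau i}(f)+W_{\tau i}|\alpha_i-\alpha|^2$.
\item Summing over $i$ yields the decomposition with defect $\frac1{m_\tau}\sum_iW_{\tau i}|\alpha_i-\alpha|^2$.
\item Since $W_{\tau i}/m_\tau=W_\tau\pi_\tau(i)$, the defect equals $W_\tau\sum_i\pi_\tau(i)|\alpha_i-\alpha|^2$.
\end{enumerate}

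To see that this is $W_\tau\operatorname{Var}_{i\sim\pi_\tau}(\alpha_i)$, I need $\sum_i\pi_\tau(i)\alpha_i=\alpha$. This holds because $\sum_i W_{\tau i}\alpha_i=\sum_i\sum_{\sigma\in\Omega_{\tau i}}w_\sigma f_\sigma=m_\tau\sum_{\sigma\in\Omega_\tau}w_\sigma f_\sigma$, again by facet counting. This is the two-stage sampling remark that $\mu_\tau$ mixes the $\mu_{\tau i}$ under $\pi_\tau$. Note that $|\cdot|^2$ is used since $f$ is complex-valued; the variance is the complex variance $\sum\pi|\alpha_i-\bar\alpha|^2$.

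Dividing \eqref{eq:classical-effective-decomposition} by $W_\tau$ and using $W_{\tau i}/(m_\tau W_\tau)=\pi_\tau(i)$ gives \eqref{eq:classical-link-variance}.

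The only delicate point is the counting step: the assembly and the mean identity both rely on purity. The case $m_\tau=1$ needs a check. There each child $\tau i$ is itself a facet, so $\Omega_{\tau i}$ is a singleton and the child variance is zero. The formula then says the variance equals the variance of facet values under $\pi_\tau=\mu_\tau$, which is consistent. No other obstacle is expected.
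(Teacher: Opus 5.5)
Your proof is correct and follows essentially the paper's argument: incidence counting for exact assembly, the scalar mean-response identity \eqref{eq:mean-response} on each child, and the conversion $W_{\tau i}/m_\tau=W_\tau\pi_\tau(i)$. The only difference is that you fix $\alpha$ at the parent mean and then check $\sum_i\pi_\tau(i)\mathbb E_{\mu_{\tau i}}f=\mathbb E_{\mu_\tau}f$ by a second facet count, whereas the paper keeps $\alpha$ free and minimizes both sides over it, so the child-mean variance appears directly as $\min_\alpha\sum_i\pi_\tau(i)|m_i-\alpha|^2$ without that separate check.
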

\begin{proof}
Each parent facet belongs to exactly $m_\tau$ children.
Consequently, $\sum_iW_{\tau i}=m_\tau W_\tau$, so
$\pi_\tau$ is a probability distribution. The same counting gives
exact assembly while keeping the representative $\alpha$ common to
all children:
\[
\Energy_\tau(f,\alpha)
=\frac1{m_\tau}\sum_i\Energy_{\tau i}(R_i f,\alpha).
\]

Write $m_i:=\mathbb E_{\mu_{\tau i}}f$ for the mean chosen by child $i$.
The scalar case of \eqref{eq:mean-response} gives
\[
\Energy_{\tau i}(R_i f,\alpha)
=\Eff_{\tau i}(R_i f)+W_{\tau i}|m_i-\alpha|^2.
\]
Substituting into the assembly identity and using
$W_{\tau i}/m_\tau=W_\tau\pi_\tau(i)$ yields
\[
\Energy_\tau(f,\alpha)
=\frac1{m_\tau}\sum_i\Eff_{\tau i}(R_i f)
+W_\tau\sum_i\pi_\tau(i)|m_i-\alpha|^2.
\]
The first term is independent of $\alpha$. Minimizing the second
over the common representative gives
\[
\min_{\alpha\in\C}\sum_i\pi_\tau(i)|m_i-\alpha|^2
=\operatorname{Var}_{i\sim\pi_\tau}(m_i),
\]
attained at $\alpha=\sum_i\pi_\tau(i)m_i$.
Thus minimizing the full energy proves
\eqref{eq:classical-effective-decomposition}: the gluing defect is
exactly the cost of replacing the separate child means by one
common representative.

Finally, divide by $W_\tau$ and use
\eqref{eq:scalar-moment-dictionary} at the parent and each child.
The coefficient of the child variance becomes
$W_{\tau i}/(m_\tau W_\tau)=\pi_\tau(i)$, giving
\eqref{eq:classical-link-variance}.
\end{proof}

Separate children choose separate means, while the parent uses one
common mean: their mismatch is precisely the defect.
ROL replaces this minimization over a common mean by minimization
under a general linear readout, and tracks the cost of reconciling
local minimizers in the same way.
This is the classical variance mechanism used in local-to-global
arguments such as \cite{guo2020local,chen2023optimal}.
ROL also allows matrix weights, which need not define a scalar
probability law.

\subsubsection{From link expansion to the down-up gap}
For the same facet functions and probability distributions fixed above,
the total-variance decomposition in \cref{prop:classical-variance-rol}
and local link expansion give a global down-up gap directly, without
a separate Gram-bound recursion. The following corollary recovers
the classical bound of Alev--Lau~\cite[Theorem~1.5]{alev2020improved}.

\begin{corollary}[Recovery of the down-up gap bound]
\label{cor:classical-down-up-gap}
In the scalar setting above, suppose $X$ has at least two facets and
the non-lazy link walks in \eqref{eq:weighted-link-walk} satisfy
\[
\beta_2(P_\tau)\le a_k<1
\qquad(|\tau|=k,\quad k=0,\ldots,d-2).
\]
Let $P_{\rm DU}$ be the walk on facets that deletes a uniformly chosen
vertex and then resamples a facet containing the remaining face
$\eta$ according to $\mu_\eta$. With one deletion and resampling
counted as one step,
\begin{equation}\label{eq:classical-down-up-gap}
\operatorname{gap}(I-P_{\rm DU})
\ge\frac1d\prod_{k=0}^{d-2}(1-a_k).
\end{equation}
\end{corollary}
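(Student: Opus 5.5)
The plan is to run the variance-retention version of the descent in \cref{sec:gap-transport-proof}, using the scalar total-variance decomposition of \cref{prop:classical-variance-rol} in place of the Gram recursion. There are three steps: identify the Dirichlet form of $P_{\rm DU}$ with the path-averaged terminal variance, prove a one-step retention inequality from the link-walk bound $a_k$, and iterate along the hierarchy.

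\emph{Step 1 (Dirichlet form).} Deleting a uniform vertex from $\sigma\sim\mu_\varnothing$ gives a codimension-one face $\eta$. I claim $\eta$ has the same law as the endpoint of a random root-to-stop path that chooses $\tau\to\tau i$ with probability $\pi_\tau(i)$ at each step. To see this, order the vertices of $\sigma$ uniformly and keep the first $d-1$. Each successive conditional law is then $\pi_\tau$, and conditional on the endpoint the facet law is $\mu_\eta$. Since $P_{\rm DU}$ resamples $\sigma$ from $\mu_\eta$ given $\eta$, it is the conditional expectation onto functions of $\eta$. Hence
\[
\langle f,(I-P_{\rm DU})f\rangle_{\mu_\varnothing}
=\sum_{p\in\mathcal P_{\mathrm{stop}}}\Bigl(\prod_{k}\pi_{\tau_k}(\tau_{k+1}\setminus\tau_k)\Bigr)\operatorname{Var}_{\mu_{t(p)}}(f).
\]
The hypothesis of at least two facets only guarantees that $\operatorname{Var}_{\mu_\varnothing}$ is not identically zero, so the gap is meaningful.

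\emph{Step 2 (one-step retention; the main step).} Fix $\tau$ with $|\tau|=k$, $m=m_\tau\ge2$, and set $g(i)=\mathbb E_{\mu_{\tau i}}f$. By \eqref{eq:classical-link-variance} it suffices to show
\[
\operatorname{Var}_{\pi_\tau}(g)\le\Bigl(\tfrac1m+\tfrac{m-1}m a_k\Bigr)\operatorname{Var}_{\mu_\tau}(f),
\]
because this gives $\sum_i\pi_\tau(i)\operatorname{Var}_{\mu_{\tau i}}(f)\ge\frac{m-1}{m}(1-a_k)\operatorname{Var}_{\mu_\tau}(f)$.

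Write $g=Df$, where $D:L^2(\mu_\tau)\to L^2(\pi_\tau)$ is the conditional-expectation map. Its adjoint is $U\varphi(\sigma)=\frac1m\sum_{i\in\sigma\setminus\tau}\varphi(i)$. By the sampling description of $\pi_\tau$ and $\mu_{\tau i}$, the composite $DU$ on $L^2(\pi_\tau)$ has kernel
\[
DU(i,j)=\tfrac1m\delta_{ij}+\tfrac{m-1}{m}P_\tau(i,j).
\]
Both $D$ and $U$ preserve constants, and $D$ maps mean-zero functions to mean-zero functions. On the mean-zero parts, $UD$ and $DU$ share nonzero spectrum, so $\operatorname{Var}_{\pi_\tau}(Df)\le\beta_2(DU)\operatorname{Var}_{\mu_\tau}(f)$. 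Finally, $\beta_2(DU)=\frac1m+\frac{m-1}m\beta_2(P_\tau)\le\frac1m+\frac{m-1}m a_k$. I expect this step to be the main obstacle. The points to check carefully are the exact form of $DU$, which relies on the facet-counting identities $\sum_{j\ne i}W_{\tau ij}=(m-1)W_{\tau i}$, and the reduction to the mean-zero sector. Note that the non-lazy walk enters only through $\beta_2$, not the absolute gap, which is consistent with allowing bipartite links.

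\emph{Step 3 (iteration).} Apply Step 2 at every internal node and substitute recursively into the path sum of Step 1, exactly as in the energy induction proving \eqref{eq:weighted-gap-conclusion}. The retention factor at depth $k$ is $\frac{m-1}{m}(1-a_k)$ with $m=d-k$. These factors depend only on the level, so the product comes out of the path sum. The factors $\frac{m-1}{m}$ for $m=2,\dots,d$ telescope to $1/d$, giving
\[
\langle f,(I-P_{\rm DU})f\rangle\ge\frac1d\prod_{k=0}^{d-2}(1-a_k)\operatorname{Var}_{\mu_\varnothing}(f),
\]
which is \eqref{eq:classical-down-up-gap}.
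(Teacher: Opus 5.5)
Your proposal is correct and follows the paper's proof essentially step for step: the paper also uses the conditional-mean map with $T_\tau T_\tau^*=(I+(m-1)P_\tau)/m$ to get the one-step retention factor $\frac{m-1}{m}(1-a_k)$, identifies $\langle f,(I-P_{\rm DU})f\rangle_\mu$ with the expected terminal variance via the uniform vertex-reveal description of the endpoint law, and telescopes the rank factors to $1/d$. One small correction in Step 1: $P_{\rm DU}$ is not itself a conditional expectation on $L^2(\mu_\varnothing)$, since it is not idempotent. It is $Q^*Q$ for $Q:L^2(\mu_\varnothing)\to L^2(q)$, $(Qf)(\eta)=\mathbb E_{\mu_\eta}f$, where $q$ is the law of $\eta$, and this factorization is exactly what your Dirichlet-form identity needs.
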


\begin{proof}
\textit{One-step variance retention.}
For $m=m_\tau\ge2$, define the conditional-mean map
\begin{equation}\label{eq:classical-mean-map}
T_\tau:L^2(\mu_\tau)\longrightarrow L^2(\pi_\tau),
\qquad (T_\tau f)(i)=\mathbb E_{\mu_{\tau i}}f.
\end{equation}
With the probability inner products, the bridge between conditional
means and the link walk is
\begin{equation}\label{eq:classical-mean-walk}
T_\tau T_\tau^*=\frac{I+(m-1)P_\tau}{m}.
\end{equation}
Indeed, let $g\in L^2(\pi_\tau)$ be any scalar-valued function on
the link vertices $N_\tau$. The adjoint of the map in
\eqref{eq:classical-mean-map} sends $g$ to a function on facets
$\sigma\in\Omega_\tau$: sampling a facet and then a uniform remaining
vertex gives
\[
(T_\tau^*g)(\sigma)=\frac1m
  \sum_{i\in\sigma\setminus\tau}g(i).
\]
Conditioning this average on $i$ yields a term $g(i)/m$.
For $j\ne i$, its coefficient is
$W_{\tau ij}/(mW_{\tau i})=(m-1)P_\tau(i,j)/m$.
This proves \eqref{eq:classical-mean-walk}.

If $\beta_2(P_\tau)\le a_\tau<1$, the identity bounds the squared
norm of $T_\tau$ on mean-zero functions by $[1+(m-1)a_\tau]/m$.
Thus the normalized defect in \eqref{eq:classical-link-variance}
satisfies
\[
\operatorname{Var}_{i\sim\pi_\tau}
  \bigl(\mathbb E_{\mu_{\tau i}}f\bigr)
\le\frac{1+(m-1)a_\tau}{m}
       \operatorname{Var}_{\mu_\tau}(f).
\]
Subtracting this upper bound from the parent variance gives
\begin{equation}\label{eq:classical-variance-retention}
\sum_i\pi_\tau(i)\operatorname{Var}_{\mu_{\tau i}}(f)
\ge\frac{m-1}{m}(1-a_\tau)
       \operatorname{Var}_{\mu_\tau}(f).
\end{equation}

\medskip
\noindent\textit{Iteration and the down-up gap.}
Let $\mu=\mu_\varnothing$ and obtain $\eta$ by deleting a uniform
vertex from a sampled facet. Its law is
$q(\eta)=W_\eta/(dW_\varnothing)$ for $|\eta|=d-1$.
The facet down-up walk $P_{\rm DU}$ resamples a facet from
$\mu_\eta$, so
\begin{equation}\label{eq:classical-down-up-form}
\langle f,(I-P_{\rm DU})f\rangle_\mu
=\mathbb E_{\eta\sim q}\operatorname{Var}_{\mu_\eta}(f).
\end{equation}
This follows because, conditionally on $\eta$, the facets before
and after resampling are independent samples from $\mu_\eta$.

Starting at $\varnothing$, choose each next vertex with law
$\pi_\tau$; equivalently, sample a facet with law $\mu$ and reveal
its vertices in uniformly random order. The endpoint at depth
$d-1$ therefore has law $q$.
At depth $k$, \eqref{eq:classical-variance-retention} retains a
fraction at least $(d-k-1)(1-a_k)/(d-k)$ of the variance.
Iteration and \eqref{eq:classical-down-up-form} give the bound:
the rank factors telescope to $1/d$, and positivity identifies
the kernel as the constants.
\end{proof}

Our facet size $d$ corresponds to dimension $d-1$ in
\cite{alev2020improved}, with the same one-step normalization.

Here \eqref{eq:classical-mean-walk} gives variance retention directly
from link expansion, without separate Gram bounds. This realizes
the descent mechanism of \cref{thm:weighted-link-gap}, but is not a
direct application: the conditional spaces need not form its
common-metric nested hierarchy.

\subsection{Conditional correlations as subspace geometry}
\label{sec:classical-si}
The preceding scalar variance argument obtained retention directly
from conditional means and the link walk, without a separate
Gram-geometric estimate. We now make the subspace geometry explicit:
placing conditional problems in one common Hilbert space lets us
interpret the overlap quantities used in geometric descent as
classical conditional correlations.
The central correspondence is \cref{prop:classical-si-gram}, which
identifies Gram overlap with conditional influence.
\subsubsection{Spin configurations, links, and observable hierarchies}
\label{sec:classical-configurations}
We specialize the observable-space construction of
\cref{sec:toolkit-observables} to a full-support probability law
$\mu$ on a finite product $\prod_{i\in V}\mathcal X_i$.
Here $V$ is the set of sites, $\mathcal X_i$ is the finite state
space at site $i$, and $X_i(x)=x_i$ is the coordinate random variable.
For $S\subseteq V$, write $X_S=(X_i)_{i\in S}$.
In this classical setting, an observable is a function $f(x)$ of the
spin configuration. Observables supported on $S$ depend only on $X_S$
and form the subspace $\mathcal A_S$. In the $L^2(\mu)$ metric, the
best approximation to $f$ from this subspace is the conditional
expectation $\mathbb E_\mu[f\mid X_S]$. Subtracting it leaves the
residual component orthogonal to all observables supported on $S$.
The dictionary for the subspace hierarchy is
\begin{equation}\label{eq:classical-observable-dictionary}
\begin{aligned}
\text{node}&:\ S\subseteq V,\qquad \mathcal K=L^2(\mu),\\
\mathcal A_S&=\{g(X_S):g:\textstyle\prod_{i\in S}\mathcal X_i\to\C\},\\
\Pi_S^{\mathrm{obs}}f&=\mathbb E_\mu[f\mid X_S],\\
\II_S&=\mathcal A_S^\perp
 =\{f:\mathbb E_\mu[f\mid X_S]=0\},\\
R_{S\cup\{i\}\leftarrow S}f
&=f-\mathbb E_\mu[f\mid X_{S\cup\{i\}}],
\qquad f\in\II_S.
\end{aligned}
\end{equation}
Thus localization removes the component newly explained by observing
site $i$. The corresponding removed space is
\[
\mathcal U_{S,i}
=\II_S\cap\II_{S\cup\{i\}}^\perp
=\mathcal A_{S\cup\{i\}}\cap\mathcal A_S^\perp.
\]
The overlap of these spaces will encode \emph{conditional correlations}.
The conditional laws have the same facet-weight interpretation as
in the preceding example, but the nodes $S$ here record sites without
fixing their values. To relate the two descriptions, we use the
configuration complex below.

The configuration-complex construction of
Anari, Liu, and Oveis Gharan~\cite[Section~1.1]{anari2020spectral}
identifies binary Glauber dynamics with a walk on facets.
For finite spins, the vertices are site--value pairs in
\[
\mathcal V=\bigsqcup_{i\in V}(\{i\}\times\mathcal X_i).
\]
Each configuration $x$ selects one vertex at each site, giving a facet
\[
\sigma_x=\{(i,x_i):i\in V\},\qquad
w(\sigma_x)=\mu(x).
\]
The configuration complex consists of all subsets of supported facets:
\[
X^\mu=\{\tau:\tau\subseteq\sigma_x
\text{ for some }x\in\operatorname{supp}\mu\}.
\]
A face $\tau_{S,s}=\{(i,s_i):i\in S\}$ fixes a partial configuration.
Equivalently, join every pair of distinct values at the same site,
forming the disjoint union of complete graphs
$G=\bigsqcup_{i\in V}K_{\{i\}\times\mathcal X_i}$.
An independent set of $G$ chooses at most one value at each site.
Thus, under full support, $X^\mu=\operatorname{Ind}(G)$, the complex
of independent sets of $G$ (\Cref{fig:configuration-independent-sets}).

\begin{figure}[tbp]
\centering
\begin{tikzpicture}[x=1cm,y=1cm,font=\small\sffamily]
\foreach \i/\x in {1/0,2/3,3/6}{
  \node at (\x,1) {Site $\i$};
  \draw[linkgray,thick] (\x,0)--(\x,-1.3);
  \draw[linkgray,fill=white] (\x,0) circle (3pt);
  \draw[linkgray,fill=white] (\x,-1.3) circle (3pt);
  \node[anchor=west] at (\x+.18,0) {$(\i,0)$};
  \node[anchor=west] at (\x+.18,-1.3) {$(\i,1)$};
}
\foreach \x/\y in {0/0,3/-1.3,6/0}{
  \fill[linkteal] (\x,\y) circle (3.5pt);
}
\foreach \x/\y in {0/0,3/-1.3}{
  \draw[defectorange,thick] (\x,\y) circle (6pt);
}
\node[text=linkteal,align=center] at (3.3,-2.1)
  {Selected vertices: $\sigma_x=\{(1,0),(2,1),(3,0)\}$};
\node[text=defectorange,align=center] at (3.3,-2.65)
  {Ringed subset: $\tau=\{(1,0),(2,1)\}$};
\end{tikzpicture}
\caption{Configuration faces as independent sets for three binary sites.
The graph is the disjoint union of three copies of $K_2$; its edges
join mutually exclusive values at the same site, not adjacent vertices
of the configuration complex. The three filled vertices form the facet
for $x=(0,1,0)$, and the two ringed vertices form a face fixing the first
two spins. Under full support, all independent sets are faces.}
\label{fig:configuration-independent-sets}
\end{figure}

The normalized facet weights of the link of $\tau_{S,s}$ are
\[
\mu(x_{V\setminus S}\mid X_S=s)
=\frac{w(\sigma_x)}{\sum_{y:y_S=s}w(\sigma_y)}.
\]
Deleting a uniformly chosen vertex of a facet and resampling its
replacement by weight is exactly a single-site heat-bath update.
Thus the facet down-up operator is
\[
P_{\mathrm{HB}}=\frac1n\sum_{i\in V}
\mathbb E_\mu[\,\cdot\mid X_{V\setminus\{i\}}],
\qquad n=|V|.
\]
This uses one update per step, not one update per site.

Unlike the configuration complex, the complete site hierarchy records
only $S\subseteq V$, without fixing spin values. The map
$\tau_{S,s}\mapsto S$ relates the two descriptions.
For a probability law $\nu$, write
\[
L^2_0(\nu):=\{g\in L^2(\nu):\mathbb E_\nu g=0\}
\]
for its mean-zero subspace. When $\nu=\mu(\,\cdot\mid X_S=s)$,
the mean is taken under that conditional law.

\begin{lemma}[Conditional mean-zero spaces and the heat-bath operator]
\label{prop:classical-fibers}
In the observable hierarchy \eqref{eq:classical-observable-dictionary},
assume $n\ge2$ and $|\mathcal X_i|\ge2$ for every site $i$.
Writing $f_s$ for $f$ with $X_S=s$ fixed, we have the isometric
identification
\begin{equation}\label{eq:classical-fiber-decomposition}
\II_S\cong\bigoplus_s L^2_0(\mu(\,\cdot\mid X_S=s)),
\qquad
\|f\|_\mu^2=\sum_s\mu(X_S=s)\|f_s\|_{\mu(\cdot\mid s)}^2.
\end{equation}
The direct sum carries the displayed weighted metric.
These spaces satisfy the nesting and span conditions
\eqref{eq:weighted-space-conditions} on the complete site hierarchy.
Moreover,
\begin{equation}\label{eq:classical-reference-heat-bath}
I-P_{\mathrm{HB}}=\frac1n\sum_i(I-\Pi_{V\setminus\{i\}}^{\mathrm{obs}}).
\end{equation}
Its restriction to $\II_\varnothing=L^2_0(\mu)$ has
the same positive gap as $I-P_{\mathrm{HB}}$.
\end{lemma}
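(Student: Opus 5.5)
We prove the four assertions in turn, using only the tower property of conditional expectation and the full-support assumption.

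\emph{Fiber decomposition.} Since $\mu$ has full support, every $s\in\prod_{i\in S}\mathcal X_i$ satisfies $\mu(X_S=s)>0$. Thus a function $f$ on configurations is equivalent to the family of sections $f_s$, each a function of $x_{V\setminus S}$, and disintegration gives
\[
\|f\|_\mu^2=\sum_s\mu(X_S=s)\,\|f_s\|^2_{\mu(\cdot\mid s)}.
\]
The condition $\mathbb E_\mu[f\mid X_S]=0$ holds exactly when $\mathbb E_{\mu(\cdot\mid s)}f_s=0$ for every $s$. This proves \eqref{eq:classical-fiber-decomposition}, including the isometry, with the stated weighted metric on the direct sum.

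\emph{Nesting and span.} Nesting is immediate: if $S\subseteq S'$, then $\mathcal A_S\subseteq\mathcal A_{S'}$, so $\II_{S'}\subseteq\II_S$. The span condition is the only real point, and we prove it by duality. Fix an internal node $S$ with $|S|\le n-2$, so there are at least two outside sites. Since $\sum_{i\notin S}\II_{S\cup\{i\}}\subseteq\II_S$, it suffices to show that any $g\in\II_S$ orthogonal to every $\II_{S\cup\{i\}}$ is zero. Orthogonality to $\II_{S\cup\{i\}}=\mathcal A_{S\cup\{i\}}^\perp$ places $g$ in $\mathcal A_{S\cup\{i\}}$, so $g$ depends only on $X_{S\cup\{i\}}$ for each $i\notin S$. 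Choosing two distinct sites $i\ne j$ outside $S$, we have $g\in\mathcal A_{S\cup\{i\}}\cap\mathcal A_{S\cup\{j\}}$. Because the configuration space is a full product, a function that depends only on $(x_S,x_i)$ and also only on $(x_S,x_j)$ depends only on $x_S$; to see this, vary $x_i$ while holding $x_j$ fixed, and conversely. Hence $g\in\mathcal A_S$. Together with $g\in\II_S=\mathcal A_S^\perp$, this forces $g=0$. This is the step where $|S|\le n-2$ and the product structure from full support are genuinely used.

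\emph{Heat-bath identity.} Since $P_{\mathrm{HB}}=\frac1n\sum_i\Pi^{\mathrm{obs}}_{V\setminus\{i\}}$, identity \eqref{eq:classical-reference-heat-bath} is the definition rewritten, using $\frac1n\sum_i I=I$.

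\emph{Gap.} Every $\Pi^{\mathrm{obs}}_{V\setminus\{i\}}$ fixes constants and preserves $L^2_0(\mu)=\mathcal A_\varnothing^\perp$, because $\mathcal A_\varnothing\subseteq\mathcal A_{V\setminus\{i\}}$. So $I-P_{\mathrm{HB}}$ is self-adjoint and splits orthogonally as $0$ on the constants plus its restriction to $L^2_0(\mu)$. Its kernel is $\bigcap_i\mathcal A_{V\setminus\{i\}}$, since a sum of positive semidefinite projections' complements vanishes on $f$ only if each term does. By the same product argument iterated (a function independent of each coordinate separately is constant), this intersection is the constants; here $n\ge2$ ensures the sites $V\setminus\{i\}$ are proper subsets. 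Hence the restriction to $L^2_0(\mu)=\ker^\perp$ is positive definite, and its smallest eigenvalue is exactly the gap of $I-P_{\mathrm{HB}}$ in the sense of \cref{def:spectral-gap}.
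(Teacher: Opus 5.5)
Your proposal is correct and follows the paper's proof essentially step by step: you split the norm over the values of $X_S$, obtain the span condition by showing $\bigcap_{i\notin S}\mathcal A_{S\cup\{i\}}=\mathcal A_S$ from two outside sites and full support and then passing to orthogonal complements, and identify the kernel of $I-P_{\mathrm{HB}}$ with the constants through its sum-of-squares quadratic form. One small misattribution: $n\ge2$ plays no role in the kernel step, since the sets $V\setminus\{i\}$ are always proper; the assumption that actually makes the restricted gap a genuine positive eigenvalue is $|\mathcal X_i|\ge2$, which guarantees $L^2_0(\mu)\ne\{0\}$.
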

\begin{proof}
\textit{Conditional mean-zero spaces.}
Since $\Pi_S^{\mathrm{obs}}$ is the orthogonal projection onto
$\mathcal A_S$, we have
\[
f\in\II_S=\mathcal A_S^\perp
\quad\Longleftrightarrow\quad
\mathbb E_\mu[f\mid X_S=s]=0\quad\text{for every }s.
\]
Thus each restriction $f_s$ belongs to
$L^2_0(\mu(\,\cdot\mid X_S=s))$. Conversely, any collection of such
restrictions defines a function in $\II_S$. Splitting its squared norm
according to the value of $X_S$ gives
\[
\begin{aligned}
\|f\|_\mu^2
&=\sum_x\mu(x)|f(x)|^2\\
&=\sum_s\mu(X_S=s)
  \sum_{x:\,x_S=s}\mu(x\mid X_S=s)|f(x)|^2\\
&=\sum_s\mu(X_S=s)\,
  \mathbb E_\mu[|f|^2\mid X_S=s],
\end{aligned}
\]
which proves the weighted isometry in
\eqref{eq:classical-fiber-decomposition}. Projection onto
$\II_{S\cup\{i\}}$ is $I-\Pi_{S\cup\{i\}}^{\mathrm{obs}}$,
so localization subtracts the finer conditional mean.

\medskip
\noindent\textit{Nesting and spanning.}
The inclusion $\mathcal A_S\subseteq\mathcal A_{S\cup\{i\}}$
gives $\II_{S\cup\{i\}}\subseteq\II_S$, and
$\mathcal A_V=L^2(\mu)$ gives $\II_V=\{0\}$.
When at least two sites remain, full support gives
\[
\bigcap_{i\notin S}\mathcal A_{S\cup\{i\}}=\mathcal A_S.
\]
Indeed, for any $j\notin S$, choose another $i\notin S$.
A function in the intersection depends only on $X_{S\cup\{i\}}$,
so it does not change when $X_j$ alone is varied. Full support
allows every such variation; hence the function depends only on $X_S$.
Taking orthogonal complements in the finite-dimensional ambient space yields
\[
\operatorname{span}_{i\notin S}\II_{S\cup\{i\}}=\II_S.
\]
The increments are also nonzero: for a nonconstant function $h$ on
$\mathcal X_i$, the function
$h(X_i)-\mathbb E_\mu[h(X_i)\mid X_S]$ belongs to
$\mathcal A_{S\cup\{i\}}\cap\mathcal A_S^\perp=\mathcal U_{S,i}$
and is nonzero by full support.

\medskip
\noindent\textit{The heat-bath kernel and gap.}
The heat-bath formula above gives
\eqref{eq:classical-reference-heat-bath}. Since each summand is an
orthogonal projection,
\[
\langle f,(I-P_{\mathrm{HB}})f\rangle_\mu
=\frac1n\sum_i\|(I-\Pi_{V\setminus\{i\}}^{\mathrm{obs}})f\|_\mu^2.
\]
This vanishes exactly when $f\in\mathcal A_{V\setminus\{i\}}$
for every $i$. Such a function is unchanged by varying any single
coordinate, so full support implies that it is constant. Thus the
kernel consists precisely of the constants. Its orthogonal complement
is $\II_\varnothing=L^2_0(\mu)$; restricting to this space removes
only the zero eigenvalue and leaves the positive gap unchanged.
\end{proof}

Thus the residual hierarchy collects all boundary-conditioned
problems in one observable space; it is not a relabeling of a
single link. 

\paragraph{Reference energy on the residual hierarchy.}
To apply descent to this operator, we extend it to an energy family
on the site hierarchy. The following construction uses only the
subspace geometry and gives exact assembly with a unit terminal energy lower bound.

In the complete uniform hierarchy of \cref{sec:weighted-complete-case},
write $S$ for the site-set node (the general index $\tau$) and define, for $m=n-|S|\geq1$,
\begin{equation}\label{eq:weighted-reference-energy}
\widetilde{\LOp}^{\mathrm{ref}}_S
:=\frac1m\sum_{j\notin S}\Pi_{V\setminus\{j\}}
\quad\text{on }\mathcal K,\qquad
\LOp^{\mathrm{ref}}_S
:=\widetilde{\LOp}^{\mathrm{ref}}_S|_{\II_S}.
\end{equation}
Each summand in \eqref{eq:weighted-reference-energy} has range
contained in $\II_S$ and annihilates $\II_S^\perp$, so
$\II_S$ reduces the sum and the restriction is well defined.
For $m\geq2$, a fixed $j\notin S$
occurs in exactly $m-1$ children, with inner coefficient $1/(m-1)$
and outer coefficient $1/m$. Thus
\[
\widetilde{\LOp}^{\mathrm{ref}}_S
=\frac1m\sum_{i\notin S}
\widetilde{\LOp}^{\mathrm{ref}}_{S\cup\{i\}}.
\]
Each child operator is supported on $\II_{S\cup\{i\}}$, so restriction
to $\II_S$ gives exact assembly with the localization maps
$R_a=\Pi_{S\cup\{i\}}|_{\II_S}$. Thus the quadratic forms
$\mathcal E_S^{\mathrm{ref}}(b):=\langle b,\LOp_S^{\mathrm{ref}}b\rangle$
satisfy \eqref{eq:weighted-energy-assembly} with equality.
At a stopping node, $m=1$ and the only summand is $\Pi_S$. Hence
\[
\LOp_S^{\mathrm{ref}}=\id_{\II_S},\qquad
\mathcal E_S^{\mathrm{ref}}(b)=\|b\|_\mu^2
\quad(b\in\II_S),
\]
so \eqref{eq:weighted-terminal-input} holds with $\delta_S=1$ and equality.

For the conditional mean-zero spaces above,
$\Pi_S=I-\Pi_S^{\mathrm{obs}}$, so
\eqref{eq:classical-reference-heat-bath} identifies
$\widetilde{\LOp}^{\mathrm{ref}}_\varnothing=I-P_{\mathrm{HB}}$.
Thus the heat-bath gap can be bounded by applying descent to this
reference family, once the Gram bounds are supplied.

\subsubsection{Binary influence and increment Gram}
We now specialize the observable hierarchy
\eqref{eq:classical-observable-dictionary} to binary spins,
$\mathcal X_i=\{0,1\}$. The removed spaces are
\[
\mathcal U_{S,i}
=\{g(X_S,X_i):\mathbb E_\mu[g(X_S,X_i)\mid X_S]=0\}.
\]
Here $g$ ranges over scalar-valued functions of the spins at sites
$S\cup\{i\}$, subject to the displayed conditional mean-zero condition.
After fixing $X_S=s$, they are one-dimensional spaces of
centered single-variable functions. Write
\[
\mu_{S,s}:=\mu(\,\cdot\mid X_S=s).
\]
Fix $S,s$ with $m=n-|S|\ge2$ remaining sites. Define
\[
\Sigma^{\mu_{S,s}}_{ij}=\operatorname{Cov}_{\mu_{S,s}}(X_i,X_j),\qquad
D^{\mu_{S,s}}=\operatorname{diag}\bigl(\operatorname{Var}_{\mu_{S,s}}(X_i)\bigr).
\]
Here covariance and variance are taken under the conditional law $\mu_{S,s}$:
\[
\operatorname{Cov}_{\mu_{S,s}}(X_i,X_j)
=\mathbb E_{\mu_{S,s}}\bigl[(X_i-\mathbb E_{\mu_{S,s}} X_i)(X_j-\mathbb E_{\mu_{S,s}} X_j)\bigr],
\qquad
\operatorname{Var}_{\mu_{S,s}}(X_i)=\operatorname{Cov}_{\mu_{S,s}}(X_i,X_i).
\]
Thus $\Sigma^{\mu_{S,s}}$ is the covariance matrix of the remaining spins,
and $D^{\mu_{S,s}}$ is the diagonal matrix of their variances.
All diagonal entries of $D^{\mu_{S,s}}$ are positive. Fix the signed
influence convention
\begin{equation}\label{eq:classical-influence}
\Psi^{\mu_{S,s}}_{ij}
=\Pr_{\mu_{S,s}}[X_j=1\mid X_i=1]-\Pr_{\mu_{S,s}}[X_j=1\mid X_i=0]
=\frac{\Sigma^{\mu_{S,s}}_{ij}}{D^{\mu_{S,s}}_{ii}}\quad(i\ne j),
\qquad \Psi^{\mu_{S,s}}_{ii}=0.
\end{equation}

The next theorem identifies conditional influence with increment Gram
geometry, providing the input for the spectral-independence gap bound below.

\begin{theorem}[Binary influence and increment Gram]
\label{prop:classical-si-gram}
For the conditional law $\mu_{S,s}$, define the normalized covariance matrix
\begin{equation}\label{eq:classical-si-gram}
\widehat\Sigma^{\mu_{S,s}}:=(D^{\mu_{S,s}})^{-1/2}\Sigma^{\mu_{S,s}}(D^{\mu_{S,s}})^{-1/2}.
\end{equation}
The same correlation data admit the following descriptions:
\begin{itemize}
\item \emph{Increment Gram.} Let $\bm C^{\mu_{S,s}}$ be the Gram
matrix in $L^2(\mu_{S,s})$ of the normalized centered functions
\[
u_i:=\frac{X_i-\mathbb E_{\mu_{S,s}} X_i}
{\sqrt{\operatorname{Var}_{\mu_{S,s}}(X_i)}}\qquad(i\notin S).
\]
These represent the single-site increment directions at the fixed
assignment $X_S=s$.\footnote{Since $X_i$ is binary, its mean-zero
functions under $\mu_{S,s}$ form a one-dimensional space spanned by the
displayed unit vector $u_i$. Writing the inclusion as $J_i a=a u_i$,
the Gram blocks are
$J_i^*J_j=\langle u_i,u_j\rangle_{L^2(\mu_{S,s})}=C^{\mu_{S,s}}_{ij}$.}
This Gram matrix equals the normalized covariance matrix:
\begin{equation}\label{eq:classical-gram-covariance}
\bm C^{\mu_{S,s}}=\widehat\Sigma^{\mu_{S,s}}.
\end{equation}
\item \emph{Influence.} The signed influence matrix $\Psi^{\mu_{S,s}}$
is similar to $\widehat\Sigma^{\mu_{S,s}}-I$:
\[
\widehat\Sigma^{\mu_{S,s}}-I=(D^{\mu_{S,s}})^{1/2}\Psi^{\mu_{S,s}}(D^{\mu_{S,s}})^{-1/2}.
\]
\end{itemize}
Using $\|\bm C_S\|=m\theta_S$ from
\eqref{eq:weighted-uniform-gram-notation}, taking all assignments
$X_S=s$ into account gives
\begin{equation}\label{eq:classical-si-maximum}
\theta_S
=\frac{1+\max_s\lambda_{\max}
 \bigl(\Psi^{\mu_{S,s}}\bigr)}{m}.
\end{equation}
\end{theorem}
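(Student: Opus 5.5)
Three things need to be checked: the Gram-covariance identity \eqref{eq:classical-gram-covariance}, the similarity with the influence matrix, and the formula \eqref{eq:classical-si-maximum} for $\theta_S$.

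\emph{Gram entries.} I compute the Gram entries directly. Since each $u_i$ is centered and normalized under $\mu_{S,s}$,
\[
\langle u_i,u_j\rangle_{L^2(\mu_{S,s})}=\frac{\operatorname{Cov}_{\mu_{S,s}}(X_i,X_j)}{\sqrt{D_{ii}D_{jj}}},
\]
which is exactly the $(i,j)$ entry of $\widehat\Sigma^{\mu_{S,s}}$. The diagonal entries equal $1$, as they should for isometric inclusions. Here I use the footnote's observation that, for binary $X_i$, the mean-zero functions of $X_i$ under $\mu_{S,s}$ are spanned by $u_i$, and that full support makes $D_{ii}>0$.

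\emph{Similarity with influence.} The signed influence formula \eqref{eq:classical-influence} gives $\Psi=D^{-1}(\Sigma-D)$ off the diagonal and $0$ on it. Hence
\[
D^{1/2}\Psi D^{-1/2}=D^{-1/2}(\Sigma-D)D^{-1/2}=\widehat\Sigma-I.
\]
To justify the entrywise identity $\Psi_{ij}=\Sigma_{ij}/D_{ii}$ I would use the standard binary computation: $\operatorname{Cov}(X_i,X_j)=\operatorname{Var}(X_i)\bigl(\Pr[X_j=1\mid X_i=1]-\Pr[X_j=1\mid X_i=0]\bigr)$. The statement takes this as the definition anyway. A consequence is that $\Psi$ has real spectrum and $\lambda_{\max}(\widehat\Sigma)=1+\lambda_{\max}(\Psi)$.

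\emph{Formula for $\theta_S$.} This is the main step. By \cref{prop:classical-fibers}, $\II_S$ decomposes isometrically as a weighted orthogonal direct sum over assignments $s$ of $L^2_0(\mu_{S,s})$. I need to check that each removed space $\mathcal U_{S,i}$ decomposes compatibly. A function $g(X_S,X_i)$ with conditional mean zero restricts, on each fiber $X_S=s$, to a multiple of $u_i^{(s)}$. Therefore $\mathcal U_{S,i}=\bigoplus_s \C u_i^{(s)}$, and the projection $E_{S,i}$ acts fiberwise as projection onto $u_i^{(s)}$. The fibers are orthogonal and $\Pi_S$ commutes with this decomposition, so $\sum_i E_{S,i}$ is block diagonal across $s$. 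Its norm is therefore
\[
\Bigl\|\sum_iE_{S,i}\Bigr\|=\max_s\Bigl\|\sum_i \operatorname{Proj}_{u_i^{(s)}}\Bigr\|.
\]
By \cref{lem:gram-norm}, each fiber norm equals the norm of the fiber Gram matrix $\bm C^{\mu_{S,s}}=\widehat\Sigma^{\mu_{S,s}}$. This matrix is PSD, so its norm is $\lambda_{\max}(\widehat\Sigma^{\mu_{S,s}})=1+\lambda_{\max}(\Psi^{\mu_{S,s}})$. Combining this with $\|\bm C_S\|=\|\sum_iE_{S,i}\|=m\theta_S$ from \eqref{eq:weighted-uniform-gram-notation} gives \eqref{eq:classical-si-maximum}.

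The only delicate point is the fiber decomposition of the removed spaces and projections. It requires each fiber weight $\mu(X_S=s)$ to be positive, which holds by full support. It also requires the weighted metric to make distinct fibers orthogonal, which holds because their supports are disjoint. Once this is established, the rest is linear algebra.
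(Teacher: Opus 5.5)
Your proposal is correct and follows essentially the paper's argument. It uses the entrywise Gram--covariance identification, the binary covariance computation giving $\Sigma=D(I+\Psi)$, and a block decomposition over assignments $s$ that relies on disjoint fiber supports and the one-dimensionality of binary increments. The only cosmetic difference is that you block-diagonalize the projection sum $\sum_i E_{S,i}$ fiberwise and then apply \cref{lem:gram-norm} in each fiber, whereas the paper block-diagonalizes $\bm C_S$ directly in the orthonormal basis $h_{s,i}=\mathbf 1_{\{X_S=s\}}u_i/\sqrt{\mu(X_S=s)}$ of the external direct sum.
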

The Gram--influence identity \eqref{eq:classical-si-maximum} is the
relation used in \cref{cor:classical-si-gap} below to turn
spectral-independence bounds into levelwise Gram bounds.
\begin{proof}
\textbf{Increment Gram and conditional covariance.}
Recall from \cref{def:subspace-gram-operator} that
$\bm C_S=\mathsf S_S^*\mathsf S_S$, where the summation map
\[
\mathsf S_S:\bigoplus_{i\notin S}\mathcal U_{S,i}\longrightarrow\II_S,
\qquad (v_i)_{i\notin S}\longmapsto\sum_{i\notin S}v_i
\]
adds the increment vectors in the common ambient space.
For each assignment $s$, use the normalized covariance matrix
$\widehat\Sigma^{\mu_{S,s}}$ defined in \eqref{eq:classical-si-gram}.
We first show that the full increment Gram operator splits into
blocks indexed by the conditioning assignment:
\begin{equation}\label{eq:classical-gram-direct-sum}
\bm C_S\simeq\bigoplus_s\widehat\Sigma^{\mu_{S,s}},
\qquad m\theta_S=\|\bm C_S\|
=\max_s\|\widehat\Sigma^{\mu_{S,s}}\|.
\end{equation}
To identify these blocks, let $p_S(s)=\mu(X_S=s)$.
For each $i\notin S$, the functions
\[
h_{s,i}(x)=
\frac{\mathbf1_{\{x_S=s\}}}{\sqrt{p_S(s)}}\,
\frac{x_i-\mathbb E_{\mu_{S,s}}X_i}
{\sqrt{\operatorname{Var}_{\mu_{S,s}}(X_i)}}
\]
as $s$ varies form an orthonormal basis of $\mathcal U_{S,i}$.
Functions associated with different assignments $s$ have disjoint
supports and are therefore orthogonal. For a fixed $s$, since
$h_{s,i}=\mathbf1_{\{X_S=s\}}u_i/\sqrt{p_S(s)}$, we have
\[
\begin{aligned}
\|h_{s,i}\|_\mu^2
&=\frac1{p_S(s)}\mathbb E_\mu
  \left[\mathbf1_{\{X_S=s\}}|u_i|^2\right]\\
&=\frac{\mu(X_S=s)}{p_S(s)}
  \mathbb E_\mu[|u_i|^2\mid X_S=s]\\
&=\|u_i\|_{L^2(\mu_{S,s})}^2=1.
\end{aligned}
\]
The last equality uses the conditional normalization of $u_i$.
Moreover, because $X_i$ is binary, its conditional
mean-zero functions form a one-dimensional space at each assignment
$s$. Thus these functions span $\mathcal U_{S,i}$.

Let $e_{s,i}$ denote the vector in the external direct sum whose
$i$-th component is $h_{s,i}$ and whose other components are zero.
These vectors form an orthonormal basis of the direct sum.
Since $\mathsf S_S e_{s,i}=h_{s,i}$, the matrix entries of the
increment Gram operator in this basis are
\[
\langle e_{s,i},\bm C_S e_{t,j}\rangle
=\langle \mathsf S_S e_{s,i},\mathsf S_S e_{t,j}\rangle_\mu
=\langle h_{s,i},h_{t,j}\rangle_\mu.
\]
For $s\ne t$, the supports are disjoint, so these entries vanish.
For a fixed assignment $s$, the entries within its block are
\[
\langle h_{s,i},h_{s,j}\rangle_\mu
=\langle u_i,u_j\rangle_{L^2(\mu_{S,s})}
=\frac{\operatorname{Cov}_{\mu_{S,s}}(X_i,X_j)}
{\sqrt{\operatorname{Var}_{\mu_{S,s}}(X_i)
        \operatorname{Var}_{\mu_{S,s}}(X_j)}}
=\widehat\Sigma^{\mu_{S,s}}_{ij}.
\]
Since the middle inner product is $C^{\mu_{S,s}}_{ij}$ by definition,
this also proves \eqref{eq:classical-gram-covariance}.
Ordering the basis vectors $e_{s,i}$ by $s$ therefore gives
the direct-sum decomposition, and taking its operator norm gives the
maximum over assignments. This establishes both identities in
\eqref{eq:classical-gram-direct-sum}.

\medskip\noindent\textbf{Conditional influence.}
For $i\ne j$, put $p_i=\mu_{S,s}(X_i=1)$ and
$q_a=\mu_{S,s}(X_j=1\mid X_i=a)$ for $a\in\{0,1\}$.
Conditioning on the two values of $X_i$ gives
$\mathbb E_{\mu_{S,s}}X_j=p_iq_1+(1-p_i)q_0$ and
$\mathbb E_{\mu_{S,s}}[X_iX_j]=p_iq_1$. Hence
\[
\begin{aligned}
\operatorname{Cov}_{\mu_{S,s}}(X_i,X_j)
&=p_iq_1-p_i\bigl(p_iq_1+(1-p_i)q_0\bigr)\\
&=p_i(1-p_i)(q_1-q_0)
=D^{\mu_{S,s}}_{ii}\Psi^{\mu_{S,s}}_{ij}.
\end{aligned}
\]
This proves \eqref{eq:classical-influence}. Since $\Psi^{\mu_{S,s}}$
has zero diagonal, the matrix identity is
$\Sigma^{\mu_{S,s}}=D^{\mu_{S,s}}(I+\Psi^{\mu_{S,s}})$.
Normalizing by the positive diagonal matrix $D^{\mu_{S,s}}$ gives
\[
\widehat\Sigma^{\mu_{S,s}}-I
=(D^{\mu_{S,s}})^{1/2}\Psi^{\mu_{S,s}}(D^{\mu_{S,s}})^{-1/2}.
\]
Thus the influence matrix has real eigenvalues, and positivity of
$\widehat\Sigma^{\mu_{S,s}}$ implies
\[
\|\widehat\Sigma^{\mu_{S,s}}\|
=\lambda_{\max}(\widehat\Sigma^{\mu_{S,s}})
=1+\lambda_{\max}(\Psi^{\mu_{S,s}}).
\]
Substituting into \eqref{eq:classical-gram-direct-sum} and dividing
by $m$ proves \eqref{eq:classical-si-maximum}.

\end{proof}

The maximum over boundary values $s$ in \eqref{eq:classical-si-maximum},
rather than an average, is what
connects the hierarchy to spectral independence
\cite{anari2020spectral}.

\subsubsection{Recovery of the spectral-independence gap product}
The preceding correspondence turns levelwise spectral-independence
inputs into Gram bounds. Applying descent recovers the classical
product bound of \cite[Theorem~1.3]{anari2020spectral}.

\begin{corollary}[Recovery of the spectral-independence gap bound]
\label{cor:classical-si-gap}
Let $\mu$ have full support on $\{0,1\}^V$, with $n=|V|\ge2$.
Suppose every conditional law $\mu_{S,s}$ with $|S|=k$ satisfies
$\lambda_{\max}(\Psi^{\mu_{S,s}})\le\eta_k<n-k-1$,
for $k=0,\ldots,n-2$, where $\Psi^{\mu_{S,s}}$ is the signed influence
matrix in \eqref{eq:classical-influence}.
Then the single-site heat-bath walk, with one update per step, satisfies
\begin{equation}\label{eq:classical-si-gap-product}
\operatorname{gap}(I-P_{\rm HB})
\ge\frac1n\prod_{k=0}^{n-2}
\left(1-\frac{\eta_k}{n-k-1}\right).
\end{equation}
\end{corollary}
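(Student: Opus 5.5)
The plan is to apply the levelwise form of \cref{cor:levelwise-gram-descent} to the reference energy family $\LOp^{\mathrm{ref}}_S$ on the complete site hierarchy with $X=2^V$, so that $d=n$. The Gram bounds will come from \cref{prop:classical-si-gram}.

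First I collect the structural inputs. By \cref{prop:classical-fibers}, the conditional mean-zero spaces $\II_S$ satisfy the nesting and span conditions \eqref{eq:weighted-space-conditions}. In the complete uniform hierarchy the outgoing probabilities are $\pi_S(i)=1/m$ with $m=n-|S|$. The reference family was shown above to satisfy the assembly \eqref{eq:weighted-energy-assembly} with equality and with these weights. At stopping nodes it gives $\LOp^{\mathrm{ref}}_S=\id_{\II_S}$, so we may take $\delta_S=1$. Since $n\ge2$ and $|\mathcal X_i|=2$, the root space $\II_\varnothing=L^2_0(\mu)$ is nonzero.

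Next, the levelwise Gram bounds. For $|S|=k\le n-2$ we have $m=n-k\ge2$, and \eqref{eq:classical-si-maximum} gives
\[
\theta_S=\frac{1+\max_s\lambda_{\max}(\Psi^{\mu_{S,s}})}{m}\le\frac{1+\eta_k}{n-k}=:\overline{\theta}(k).
\]
The hypothesis $\eta_k<n-k-1$ gives exactly $\overline{\theta}(k)<1$, and we also have $\overline{\theta}(k)\ge0$ whenever $\eta_k\ge-1$. A short computation gives the retention factor
\[
1-\overline{\theta}(k)=\frac{n-k-1-\eta_k}{n-k}=\frac{n-k-1}{n-k}\Bigl(1-\frac{\eta_k}{n-k-1}\Bigr).
\]
\Cref{cor:levelwise-gram-descent} then yields
\[
\operatorname{gap}(\LOp^{\mathrm{ref}}_\varnothing)\ge\prod_{k=0}^{n-2}\bigl(1-\overline{\theta}(k)\bigr).
\]
The rank factors $(n-k-1)/(n-k)$ telescope to $1/n$, which produces the right-hand side of \eqref{eq:classical-si-gap-product}.

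It remains to transfer the bound to the heat-bath walk. Since $\Pi_S=I-\Pi^{\mathrm{obs}}_S$, \eqref{eq:classical-reference-heat-bath} gives $\widetilde{\LOp}^{\mathrm{ref}}_\varnothing=I-P_{\mathrm{HB}}$, and so $\LOp^{\mathrm{ref}}_\varnothing$ is its restriction to $L^2_0(\mu)$. \Cref{prop:classical-fibers} says this restriction has the same positive gap as $I-P_{\mathrm{HB}}$, which finishes the proof.

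I expect the only delicate point to be making sure the hypotheses of \cref{cor:levelwise-gram-descent} are met exactly. That corollary needs the bounds $\theta_S\le\overline{\theta}(k)<1$, and I have to confirm that the maximum over boundary assignments $s$ in \eqref{eq:classical-si-maximum} is covered by the assumption holding for every conditional law with $|S|=k$. If $\eta_k<-1$ were possible, I would replace it by $\max(\eta_k,-1)$; this is harmless because $\lambda_{\max}(\Psi)\ge-1$ holds automatically, as $\widehat\Sigma\succeq0$.
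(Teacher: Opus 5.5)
Your proposal is correct and follows the paper's proof essentially step for step. Both arguments take the levelwise Gram bounds from \eqref{eq:classical-si-maximum}, use the reference family with exact assembly and $\delta_\sigma=1$, apply \cref{cor:levelwise-gram-descent}, telescope the rank factors to $1/n$, and identify the root with $I-P_{\mathrm{HB}}$ on $L^2_0(\mu)$ via \cref{prop:classical-fibers}. Your closing worry about $\eta_k<-1$ is moot: $\Psi^{\mu_{S,s}}$ has zero diagonal and real spectrum, so the hypothesis already forces $\eta_k\ge\lambda_{\max}(\Psi^{\mu_{S,s}})\ge0$. This matters because replacing $\eta_k$ by a larger value would weaken the stated bound rather than leave it unchanged.
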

\begin{proof}
By \eqref{eq:classical-si-maximum}, these give levelwise Gram bounds
\[
\theta_S\leq\overline{\theta}_S
:=\frac{1+\eta_k}{n-k}<1,
\qquad |S|=k.
\]
The reference energy family in \eqref{eq:weighted-reference-energy}
has exact assembly and terminal energy lower bounds $\delta_\sigma=1$;
\cref{prop:classical-fibers} identifies its root with the heat-bath
generator on mean-zero functions. Applying
\cref{cor:levelwise-gram-descent}, with
\[
1-\overline{\theta}_S
=\frac{n-k-1}{n-k}
\left(1-\frac{\eta_k}{n-k-1}\right),
\]
and telescoping the rank factors to $1/n$ proves
\eqref{eq:classical-si-gap-product}.
\end{proof}

Alternatively, the influence--link correspondence of
\cite[Theorem~1.11]{anari2020spectral} gives the bound from
\eqref{eq:classical-down-up-gap} with $a_k=\eta_k/(n-k-1)$.
Independent spins give $\eta_k=0$ and the gap $1/n$.

This application uses supplied levelwise Gram bounds; the main descent
theorem (\cref{thm:weighted-link-gap}) additionally derives such bounds
from bottom-level geometry and intermediate link-walk gaps.

\section{Spectral control from the gluing defect}
\label{sec:compatible-assembly}
\label{sec:gluing-response-cost}
The same ROL decomposition supports two ways of obtaining spectral
lower bounds. Recall from \eqref{eq:rol-one-step-lemma} that
\[
\LOp_\tau
=\underbrace{\sum_{a:\tau\to\nu}\omega_aR_a^*\LOp_\nu R_a}
  _{\text{energy inherited from the children}}
+\underbrace{\DOp_\tau}_{\text{gluing defect}},
\qquad \DOp_\tau\succeq0.
\]
ROL supplies this exact decomposition and the nonnegativity of the
defect. The descent in \cref{sec:weighted-descent} discards the defect
and combines child energy bounds with geometric norm retention to
obtain a global gap. Here we instead ask: \emph{can the gluing defect
control directions that are invisible to the lower bound inherited
from the children?}

To answer this question, nonnegativity alone is not enough: we need a
quantitative lower bound on the defect in those missing directions.
Its quadratic form
$\Delta_\tau(b)=\langle b,\DOp_\tau b\rangle_\tau$
is the cost of reconciling local minimizing choices through a common
parent variable. We first state a one-step spectral estimate, then express the defect
in terms of minimizing responses to make its input concrete. These are one-step results; their recursive use is left for
future work.

\subsection{Spectral complementation by the defect}
\label{sec:defect-complementation}
At one node, suppress the parent index and write $\LOp=\AOp+\DOp$,
with $\AOp=\sum_i\omega_iR_i^*\LOp_{\nu_i}R_i$ the energy inherited
from the children. The following estimate shows how a lower bound on
$\DOp$ can complement a lower bound on $\AOp$.

\begin{proposition}[Complementation of two energy lower bounds]
\label{thm:defect-complementation}
Let $\LOp=\AOp+\DOp$ on a finite-dimensional Hilbert space $\II$,
where $\AOp,\DOp\succeq0$. Let $\AOp_0=F_0^*F_0$ be a positive
semidefinite lower bound for $\AOp$ and $B_d:\II\to\mathcal G$ a linear map into a Hilbert
space $\mathcal G$. Assume
\begin{equation}\label{eq:defect-complement-data}
0\preceq\AOp_0\preceq\AOp,\qquad
\DOp\succeq B_d^*B_d,\qquad
\LOp\mathcal N=\{0\},
\end{equation}
for a known subspace $\mathcal N\subseteq\II$. Set
\[
\mathcal U=\ker F_0,\qquad
\mathcal U_0=\mathcal U\cap\mathcal N^\perp,\qquad
\mathcal W=\mathcal U^\perp.
\]
Since $0\preceq\AOp_0\preceq\LOp$ and $\LOp\mathcal N=\{0\}$,
we have $\mathcal N\subseteq\ker F_0=\mathcal U$.
Thus $\mathcal N^\perp=\mathcal U_0\oplus\mathcal W$:
$\mathcal U_0$ consists of the directions in $\mathcal N^\perp$
not controlled by $\AOp_0$, while $\mathcal W$ is controlled by $\AOp_0$.
If $\mathcal U_0$ and $\mathcal W$ are nonzero, suppose that
\begin{equation}\label{eq:defect-complement-input}
\AOp_0\succeq a\operatorname{Proj}_{\mathcal W},\qquad
\|B_du\|\ge\alpha_d\|u\|\ (u\in\mathcal U_0),\qquad
\|B_dv\|\le\beta_d\|v\|\ (v\in\mathcal W),
\end{equation}
where $a,\alpha_d>0$ and $\beta_d\ge0$. Here $a$ bounds $\AOp_0$
from below on $\mathcal W$, while $\alpha_d$ and $\beta_d$ bound
the action of $B_d$ from below on $\mathcal U_0$ and from above on
$\mathcal W$, respectively. Then
\begin{equation}\label{eq:defect-complement-gap}
\LOp\succeq
\frac{a\alpha_d^2}{a+\alpha_d^2+\beta_d^2}\operatorname{Proj}_{\mathcal N^\perp},
\qquad \ker\LOp=\mathcal N.
\end{equation}
If $\mathcal W=0$ and $\mathcal N^\perp\ne0$, the lower bound
on $B_d|_{\mathcal U_0}$ alone gives $\alpha_d^2\operatorname{Proj}_{\mathcal N^\perp}$.
If $\mathcal U_0=0$ and $\mathcal N^\perp\ne0$, the bound on
$\AOp_0$ alone gives $a\operatorname{Proj}_{\mathcal N^\perp}$.
\end{proposition}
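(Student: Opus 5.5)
Write $b\in\mathcal N^\perp$ as $b=u+v$ with $u\in\mathcal U_0$ and $v\in\mathcal W$; this orthogonal decomposition is available because $\mathcal N^\perp=\mathcal U_0\oplus\mathcal W$, as noted in the statement. We first lower-bound the form on $\mathcal N^\perp$ and only then identify the kernel. Since $F_0u=0$, we have $\langle b,\AOp_0b\rangle=\|F_0v\|^2=\langle v,\AOp_0v\rangle\ge a\|v\|^2$, using that $\mathcal W$ is invariant for $\AOp_0$ (it is the range of $F_0^*$). Combining with \eqref{eq:defect-complement-data},
\[
\langle b,\LOp b\rangle\ge \langle b,\AOp_0 b\rangle+\|B_db\|^2\ge a\|v\|^2+\|B_du+B_dv\|^2 .
\]

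The main step is a scalar two-term optimization. By the triangle inequality, $\|B_du+B_dv\|\ge(\alpha_d\|u\|-\beta_d\|v\|)_+$. Set $x=\|u\|$ and $y=\|v\|$, so $x^2+y^2=\|b\|^2$. It then suffices to show
\[
a y^2+(\alpha_d x-\beta_d y)_+^2\ge \kappa(x^2+y^2),\qquad \kappa=\frac{a\alpha_d^2}{a+\alpha_d^2+\beta_d^2}.
\]
For this, use the elementary inequality $(p-q)^2\ge \frac{t}{1+t}p^2-t q^2$ for $t>0$, which follows from $\|p\|^2\le(1+t^{-1})\|p-q\|^2+(1+t)\|q\|^2$. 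Applying it with $p=\alpha_dx$ and $q=\beta_dy$ gives a lower bound $\frac{t}{1+t}\alpha_d^2x^2+(a-t\beta_d^2)y^2$. One then chooses $t$ to equalize the coefficients of $x^2$ and $y^2$. A cleaner route is to work with the vector form directly: write $\|B_db\|^2\ge s\|B_du\|^2-\frac{s}{1-s}\|B_dv\|^2$ for $s\in(0,1)$, which gives coefficients $s\alpha_d^2$ on $x^2$ and $a-\frac{s}{1-s}\beta_d^2$ on $y^2$. Equating these and solving for $s$ should yield exactly the stated $\kappa$. I expect this calibration to be the only delicate computation: one must check that the optimal $s$ lies in $(0,1)$ and that the equalized constant simplifies to $a\alpha_d^2/(a+\alpha_d^2+\beta_d^2)$. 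If the direct equalization gives a slightly different expression, I would instead bound by the minimum of the two coefficients and confirm that it is at least $\kappa$, for instance by taking $s=a/(a+\alpha_d^2+\beta_d^2)$. With this choice the $x^2$ coefficient is exactly $\kappa$, and the $y^2$ coefficient is $a-\frac{a\beta_d^2}{\alpha_d^2+\beta_d^2}\cdot\frac{1}{1}\ge\kappa$ after simplification, since $\frac{s}{1-s}=\frac{a}{\alpha_d^2+\beta_d^2}$. This yields $\langle b,\LOp b\rangle\ge\kappa\|b\|^2$ on $\mathcal N^\perp$.

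To obtain the operator inequality $\LOp\succeq\kappa\operatorname{Proj}_{\mathcal N^\perp}$ on all of $\II$, decompose a general vector as $n+b$ with $n\in\mathcal N$. Since $\LOp\mathcal N=0$ and $\LOp$ is self-adjoint, $\langle n+b,\LOp(n+b)\rangle=\langle b,\LOp b\rangle$. Together with $\kappa>0$, this shows that $\LOp$ is injective on $\mathcal N^\perp$, so $\ker\LOp=\mathcal N$.

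The degenerate cases follow by the same argument with one component absent. If $\mathcal W=0$, then $b=u$ and $\langle b,\LOp b\rangle\ge\|B_du\|^2\ge\alpha_d^2\|u\|^2$. If $\mathcal U_0=0$, then $b=v$ and $\langle b,\LOp b\rangle\ge a\|v\|^2$.
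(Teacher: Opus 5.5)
Your argument is correct and is essentially the paper's proof: the same splitting $b=u+v$, the same bound $\langle b,\AOp_0 b\rangle\ge a\|v\|^2$, and the same triangle inequality applied to $B_du=B_db-B_dv$. The only difference is the closing $2\times2$ step: the paper bounds the coefficient matrix in $(\sqrt{E_0},\sqrt{E_d})$ by its trace, while you use $\|B_db\|^2\ge s\|B_du\|^2-\frac{s}{1-s}\|B_dv\|^2$. Note that it is your fixed choice $s=a/(a+\alpha_d^2+\beta_d^2)$ that gives exactly your $\kappa$; exactly equalizing the two coefficients generally gives a larger constant, not $\kappa$ itself, so you can drop the hedged equalization discussion.
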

\begin{proof}
For $x=u+v\in\mathcal N^\perp$ in the stated orthogonal decomposition,
put $E_0=\|F_0x\|^2$ and $E_d=\|B_dx\|^2$.
Then $\|v\|\le\sqrt{E_0/a}$ and
$\alpha_d\|u\|\le\sqrt{E_d}+\beta_d\sqrt{E_0/a}$. Hence
\[
\|x\|^2\le E_0/a+
\alpha_d^{-2}\bigl(\sqrt{E_d}+\beta_d\sqrt{E_0/a}\bigr)^2
\le\frac{a+\alpha_d^2+\beta_d^2}{a\alpha_d^2}(E_0+E_d).
\]
The last bound follows by bounding the largest eigenvalue of the
positive $2\times2$ coefficient matrix by its trace.
The bounds in \eqref{eq:defect-complement-data} give
$E_0+E_d\le\langle x,\LOp x\rangle$ and the zero action on
$\mathcal N$, proving both conclusions. The two boundary cases follow
directly from the remaining lower bound.
If $\mathcal N^\perp=0$, the operator is zero and there is no
nonzero-space gap assertion.
\end{proof}

The ROL consequence can now be stated without repeating the subspace
definitions and comparison conditions.

\begin{corollary}[Spectral complementation for ROL]
\label{cor:rol-defect-complementation}
Assume the one-step ROL hypotheses of \cref{lem:rol-one-step} at
$\tau$, and put
$\AOp=\sum_{a:\tau\to\nu}\omega_aR_a^*\LOp_\nu R_a$.
If $\AOp$, $\DOp_\tau$, and a known zero space $\mathcal N$ satisfy
the hypotheses of \cref{thm:defect-complementation}, with both
$\mathcal U_0$ and $\mathcal W$ nonzero, then
\[
\operatorname{gap}(\LOp_\tau)
\geq\frac{a\alpha_d^2}{a+\alpha_d^2+\beta_d^2},
\qquad \ker\LOp_\tau=\mathcal N.
\]
The same proposition gives the bounds when either subspace is zero.
\end{corollary}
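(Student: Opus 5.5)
The statement is a direct specialization of \cref{thm:defect-complementation}, so the plan is to verify its hypotheses in the ROL setting and read off the conclusion.

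First, I invoke \cref{lem:rol-one-step} at $\tau$. It gives the exact decomposition $\LOp_\tau=\AOp+\DOp_\tau$ with $\AOp=\sum_{a:\tau\to\nu}\omega_aR_a^*\LOp_\nu R_a$. Since each $\LOp_\nu\succeq0$ and $\omega_a>0$, we have $\AOp\succeq0$, and the lemma gives $\DOp_\tau\succeq0$. The setting of \cref{thm:defect-complementation} is therefore realized with $\LOp=\LOp_\tau$ on $\II=\II_\tau$.

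Next, the remaining data are exactly what the corollary assumes. These are a factorized lower bound $\AOp_0=F_0^*F_0\preceq\AOp$, a map $B_d$ with $\DOp_\tau\succeq B_d^*B_d$, the known zero space $\mathcal N$ with $\LOp_\tau\mathcal N=\{0\}$, and the constants $a,\alpha_d,\beta_d$ satisfying \eqref{eq:defect-complement-input}. We are in the case where both $\mathcal U_0$ and $\mathcal W$ are nonzero.

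The proposition then yields $\LOp_\tau\succeq\gamma\operatorname{Proj}_{\mathcal N^\perp}$ with $\gamma=a\alpha_d^2/(a+\alpha_d^2+\beta_d^2)>0$, and also $\ker\LOp_\tau=\mathcal N$. Since $\mathcal N^\perp\supseteq\mathcal W\ne\{0\}$, the gap is defined by \cref{def:spectral-gap}. Because the kernel is exactly $\mathcal N$, the minimum in \eqref{eq:spectral-gap} runs over $\mathcal N^\perp$, and the operator inequality gives $\operatorname{gap}(\LOp_\tau)\ge\gamma$ through the equivalence with \eqref{eq:gap-coercivity}.

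The degenerate cases, where $\mathcal U_0=0$ or $\mathcal W=0$ (with $\mathcal N^\perp\ne0$), follow the same way. The last two sentences of \cref{thm:defect-complementation} give the constants $a$ or $\alpha_d^2$ in those cases.

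The only point requiring care is checking that the kernel identification $\ker\LOp_\tau=\mathcal N$ holds, so that the gap is computed on $\mathcal N^\perp$. The proposition already supplies this identification, so there is no real obstacle.
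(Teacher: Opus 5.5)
Your proposal is correct and matches the paper's proof, which likewise obtains $\LOp_\tau=\AOp+\DOp_\tau$ from the one-step identity \eqref{eq:rol-one-step-lemma} and applies \cref{thm:defect-complementation}. Your additional checks make explicit what the paper leaves implicit: $\AOp\succeq0$, $\mathcal N^\perp\supseteq\mathcal W\neq\{0\}$, and the passage from $\LOp_\tau\succeq\gamma\operatorname{Proj}_{\mathcal N^\perp}$ with $\ker\LOp_\tau=\mathcal N$ to the gap bound.
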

\begin{proof}
The ROL identity \eqref{eq:rol-one-step-lemma} gives
$\LOp_\tau=\AOp+\DOp_\tau$. Apply \cref{thm:defect-complementation}.
\end{proof}

Only the selected map $B_d$ needs an upper bound on $\mathcal W$,
not the full defect; the parameter $\beta_d$ controls possible
cancellation between $\mathcal U_0$ and $\mathcal W$.
The constants must be supplied independently of the unknown parent gap.
The final estimate is a standard two-form comparison.
The zero space $\mathcal N$ must be verified independently of the
desired gap. The next subsection constructs the defect factor needed
in \cref{cor:rol-defect-complementation} from the ROL response data.

\subsection{The gluing defect as a response cost}
To make the defect input in \cref{cor:rol-defect-complementation}
concrete, we express its energy as the cost of reconciling local
minimizing responses.
Fix a node $\tau$ and enumerate its outgoing arrows as
$a_i:\tau\to\nu_i$, abbreviating
$r_i:=r_{a_i}$, $R_i:=R_{a_i}$, and $\omega_i:=\omega_{a_i}$.
Under the one-step hypotheses, the defect $\Delta_\tau(b)$ defined in
\eqref{eq:one-step-effective-assembly} vanishes
exactly when a parent lift of $b$ makes every localized vector an
energy minimizer. Indeed, choose a parent minimizer $f$ with
$\Gamma_\tau f=b$, so that $\Eff_\tau(b)=\Energy_\tau(f)$.
The defect definition and the full-energy assembly identity
\eqref{eq:rol-one-step-full-assembly} then give
\[
\begin{aligned}
\Delta_\tau(b)
&=\Eff_\tau(b)-\sum_i\omega_i\Eff_{\nu_i}(R_i b)\\
&=\Energy_\tau(f)-\sum_i\omega_i\Eff_{\nu_i}(R_i b)\\
&=\sum_i\omega_i\Energy_{\nu_i}(r_i f)
  -\sum_i\omega_i\Eff_{\nu_i}(R_i b)\\
&=\sum_i\omega_i
\bigl(\Energy_{\nu_i}(r_i f)-\Eff_{\nu_i}(R_i b)\bigr).
\end{aligned}
\]
Each difference compares the energy of the localized parent minimizer
with the minimum attainable independently at that child.
Compatibility makes $r_i f$ a feasible lift of $R_i b$.
All summands are therefore nonnegative, so their sum is zero precisely when
each local energy attains its minimum. Conversely, any common lift
realizing all local minima gives zero defect.

For an explicit formula, choose the coordinates of
\eqref{eq:response-coordinates} at each $\nu_i$, with right inverse $S_i$ of
$\Gamma_{\nu_i}$ and internal space $\YY_i=\ker\Gamma_{\nu_i}$.
Assume its internal cost $\TOp_i$ is positive definite, so that
\cref{prop:minimizing-response} supplies a response $h_i$ and the identity
\begin{equation}\label{eq:localized-response-cost}
\Energy_{\nu_i}(S_i b_i+y_i)
=\Eff_{\nu_i}(b_i)
+\langle y_i-h_i b_i,\TOp_i(y_i-h_i b_i)\rangle_{\YY_i}.
\end{equation}
For $\Gamma_\tau f=b$, compatibility fixes $b_i=R_i b$, while the
induced internal variable is $y_i=r_i f-S_iR_i b\in\YY_i$.

To compare the independently minimizing child responses with those
attainable from a single parent variable, we collect the child internal
variables and their weighted costs into one space.
The superscript $\downarrow$ indicates data collected from the children.
Equip $\YY^{\downarrow}=\bigoplus_i\YY_i$ with the product metric and set
\begin{equation}\label{eq:stacked-response-cost}
h^{\downarrow}b=(h_iR_i b)_i,\qquad
\TOp^{\downarrow}=\bigoplus_i \omega_i\TOp_i\succ0.
\end{equation}
The weights are included in the cost operator, not in the product metric.
To describe the jointly attainable tuples, take a finite-dimensional
constraint Hilbert space $\mathcal C$ and linear maps
$C:\YY^{\downarrow}\to\mathcal C$ and $F:\II_\tau\to\mathcal C$.
The required condition is the exact description
\begin{equation}\label{eq:attainable-response-tuples}
\bigl\{(r_i f-S_iR_i b)_i:\Gamma_\tau f=b\bigr\}
=\{y\in\YY^{\downarrow}:Cy=Fb\}
\qquad(b\in\II_\tau).
\end{equation}
This equality, not merely an inclusion, is required.
The sets are nonempty because $\Gamma_\tau$ is surjective.

\medskip
\begin{lemma}[Response representation of the gluing defect]
\label{prop:gluing-response-cost}
Assume the hypotheses of \cref{lem:rol-one-step}, the
positive definite local costs in \eqref{eq:localized-response-cost}, and
the constraint description \eqref{eq:attainable-response-tuples}. Then
\begin{equation}\label{eq:gluing-response-minimum}
\Delta_\tau(b)=\min_{Cy=Fb}
\langle y-h^{\downarrow}b,
\TOp^{\downarrow}(y-h^{\downarrow}b)\rangle_{\YY^{\downarrow}}.
\end{equation}
Its representing operator is
\begin{equation}\label{eq:gluing-response-operator}
\DOp_\tau=(F-Ch^{\downarrow})^*
(C(\TOp^{\downarrow})^{-1}C^*)^\dagger
(F-Ch^{\downarrow}).
\end{equation}
Here $\dagger$ denotes the Moore--Penrose inverse in the specified
Hilbert metric. In particular, $\Delta_\tau(b)=0$ if and only if
$Ch^{\downarrow}b=Fb$.
\end{lemma}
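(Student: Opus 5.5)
The plan is to start from the defect identity derived just before the statement and minimize over parent lifts. For any $f$ with $\Gamma_\tau f=b$, the full-energy assembly \eqref{eq:rol-one-step-full-assembly} and the child response identities \eqref{eq:localized-response-cost} give
\[
\Energy_\tau(f)=\sum_i\omega_i\Eff_{\nu_i}(R_ib)+\sum_i\omega_i\langle y_i-h_iR_ib,\TOp_i(y_i-h_iR_ib)\rangle_{\YY_i},
\]
where $y_i=r_if-S_iR_ib\in\YY_i$ by compatibility \eqref{eq:rol-one-step-compatibility}. The second sum equals $\langle y-h^{\downarrow}b,\TOp^{\downarrow}(y-h^{\downarrow}b)\rangle$ with $y=(y_i)_i$. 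Minimizing over $f$ gives $\Eff_\tau(b)$ on the left. Because $f$ enters only through the tuple $y$, and the set of attainable tuples is exactly $\{Cy=Fb\}$ by \eqref{eq:attainable-response-tuples}, the minimum over $f$ becomes a minimum over $y$ in this affine set. Subtracting the child terms and using \eqref{eq:one-step-effective-assembly} then yields \eqref{eq:gluing-response-minimum}. The attainment of this minimum follows from \cref{lem:effective-energy-basic-facts}, or directly from finite dimensionality.

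For the operator formula, substitute $z=y-h^{\downarrow}b$. The problem becomes minimizing $\langle z,\TOp^{\downarrow}z\rangle$ subject to $Cz=g$, where $g:=(F-Ch^{\downarrow})b$. Since the constraint set is nonempty, $g\in\ran C$. Next change variables to $w=(\TOp^{\downarrow})^{1/2}z$, which turns the problem into a minimum-norm solution of $Mw=g$ with $M=C(\TOp^{\downarrow})^{-1/2}$. The minimum squared norm is $\langle g,(MM^*)^{\dagger}g\rangle$, and $MM^*=C(\TOp^{\downarrow})^{-1}C^*$, which has the same range as $C$. This gives \eqref{eq:gluing-response-operator}. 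I expect this to be the only delicate step. The points to handle carefully are: the pseudoinverse must be taken in the metric of $\mathcal C$; adjoints must be taken in the product metric on $\YY^{\downarrow}$, with the weights kept inside $\TOp^{\downarrow}$; and one must check $\ran M=\ran C$, so that $g\in\ran(MM^*)$ and the minimum-norm formula $w=M^*(MM^*)^\dagger g$ is valid.

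Finally, for the vanishing criterion: the right side of \eqref{eq:gluing-response-minimum} is zero iff $z=0$ is feasible, since $\TOp^{\downarrow}\succ0$. Feasibility of $z=0$ means $g=0$, that is, $Ch^{\downarrow}b=Fb$. The same conclusion can be read off \eqref{eq:gluing-response-operator}, because $(MM^*)^\dagger$ is positive definite on $\ran C\ni g$.
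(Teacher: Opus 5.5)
Your proposal is correct and follows the paper's proof essentially step for step. You substitute the child response identities into the exact assembly, use the attainable-tuple description to turn the minimization over lifts into a minimization over $\{Cy=Fb\}$, whiten with $(\TOp^{\downarrow})^{1/2}$, apply the minimum-norm formula $M^*(MM^*)^\dagger g$, and read off the vanishing criterion from $\TOp^{\downarrow}\succ0$; your explicit check that $\ran(MM^*)=\ran M=\ran C$ is exactly what justifies the paper's brief remark that feasibility places the right-hand side in the range of the whitened constraint map.
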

\medskip
\begin{proof}
For a feasible $f$, substitute \eqref{eq:localized-response-cost} into
the exact assembly \eqref{eq:rol-one-step-full-assembly}. This separates the
fixed sum $\sum_i \omega_i\Eff_{\nu_i}(R_i b)$ from the cost of the induced
tuple $y$. Minimizing and using \eqref{eq:attainable-response-tuples}
gives \eqref{eq:gluing-response-minimum}.

Put $u=(\TOp^{\downarrow})^{1/2}(y-h^{\downarrow}b)$ and
$V=C(\TOp^{\downarrow})^{-1/2}$. The remaining problem is to minimize
$\|u\|^2$ subject to $Vu=(F-Ch^{\downarrow})b$.
Feasibility places the right-hand side in $\ran V$. For $q\in\ran V$,
the minimum-norm solution of $Vu=q$ is $V^*(VV^*)^\dagger q$,
whose squared norm is $\langle q,(VV^*)^\dagger q\rangle_{\mathcal C}$.
Since $VV^*=C(\TOp^{\downarrow})^{-1}C^*$, this proves
\eqref{eq:gluing-response-operator}. Finally, positive definiteness of
the cost implies that the minimum is zero exactly when
$h^{\downarrow}b$ itself is feasible.
\end{proof}

Thus the defect is the squared distance from $h^{\downarrow}b$ to
the attainable affine set in the $\TOp^{\downarrow}$ metric.
The pseudoinverse allows redundant constraints, but the explicit formula
still requires $\TOp_i\succ0$. The one-step lemma and its zero-defect criterion do not require
this additional assumption.

In particular, suppressing the parent index,
\eqref{eq:gluing-response-operator} supplies the factor
\[
B_d=(C(\TOp^{\downarrow})^{-1}C^*)^{\dagger/2}
     (F-Ch^{\downarrow}),
\qquad \DOp=B_d^*B_d.
\]
This gives a concrete choice for the defect input in
\cref{cor:rol-defect-complementation}; bounds on its action on
$\mathcal U_0$ and $\mathcal W$ supply $\alpha_d$ and $\beta_d$.

\paragraph{Toward recursive defect control.}
The results above concern a single localization step; the recursive gap
theorem in \cref{sec:weighted-descent} does not use this quantitative
defect contribution. Incorporating it into a recursive bound would
require tracking the subspaces controlled and not controlled by $\AOp_0$ at each
node, together with quantitative response estimates and verified zero
spaces. Obtaining tractable conditions for this iteration, and bounds
that improve on discarding the defect, remains a direction for future work.

\section*{Acknowledgments}
RH was supported by JST PRESTO (Grant No.~JPMJPR23F9)
and JST ASPIRE (Grant No.~JPMJAP26A4), Japan.

\section*{AI usage}
The author developed this manuscript through iterative interaction with AI systems, using GPT-5.6 Sol and GPT-6 Astra via Codex for brainstorming, developing and checking proofs, drafting and revising the exposition, preparing figures, and obtaining review-style feedback. The author takes responsibility for the mathematical claims and the final presentation.

\bibliographystyle{alpha}
\bibliography{rol}

\end{document}